\newcolumntype{C}[1]{>{\centering\arraybackslash}m{#1}}
\newcommand{\EQ}{\begin{equation}}
\newcommand{\EN}{\end{equation}}
\newcommand{\EQS}{\begin{equation*}}
\newcommand{\ENS}{\end{equation*}}
\newcommand{\bea}{\bed\begin{array}{rl}}
\newcommand{\eea}{\end{array}\eed}
\newcommand{\rr}{{\hbox{{\rm I}{\kern -0.22em}{\rm R}}}}
\newcommand{\Real}{\mathfrak{R}}
\definecolor{darkgray}{rgb}{0.3, 0.3, 0.3} % Ajustez les valeurs si nécessaire
\newcommand{\mi}{\mathrm{i}}
\theoremstyle{plain}
\newtheorem{remark}{Remark}
\theoremstyle{plain}
\theoremstyle{plain}
\newtheorem{example}{Example}
\def\({\left(}
\def\){\right)}
\def \R  {{\mathbb R}}
\def \Q {{\mathbb Q}}
\def \C  {{\mathbb C}}
\def \AN {{\mathrm{AN}}}
\def \CN {{\mathrm{CN}}}
\def \EUR {{\mathrm{EUR}}}
\newcommand{\TS}{\mathrm{TS}}
\newcolumntype{L}[1]{>{\raggedright\let\newline\\\arraybackslash\hspace{0pt}}m{#1}}
\newcolumntype{M}[1]{>{\centering\let\newline\\\arraybackslash\hspace{0pt}}m{#1}}
\newcommand{\BG}{\mathrm{BG}}
\newcommand{\GG}{\bm{\Gamma}^{\pi}}
\newcommand{\bs}{\bm{s}}
\newcommand{\bu}{\bm{u}}
\newcommand{\bd}{\bm{d}}
\newcommand{\GP}[2]{\bm{\Gamma}^{\bm{\pi}}\left[#1|~#2\right]}
\newcommand{\N}{\mathbb{N}}
\newcommand{\D}{\mathrm{d}}
\newtheorem{proposition}{Proposition}[section]
\newcommand{\intinterv}[2]{\left\llbracket #1, #2 \right\rrbracket}
\newcommand{\ubar}[1]{\underaccent{\bar}{#1}}
\title[Option pricing under TS processes]{Fast and explicit European option pricing under Tempered Stable processes}
\author{Gaetano Agazzotti}
\address{École Normale Supérieure Paris-Saclay 
% \\ 4 av. des Sciences, 91190 Gif-sur-Yvette, France
}
\email{gagazzotti[AT]crans[DOT]org}
\author{Jean-Philippe Aguilar$^*$}
\address{Soci\'{e}t\'{e} G\'{e}n\'{e}rale
% \\
%  19 cr. Valmy, 92800 Puteaux, France
 }
\email{jean-philippe.aguilar[AT]sgcib[DOT]com}
\thanks{$^*$Corresponding author}
\date{September 17, 2025.}
\keywords{
    Tempered Stable processes,
    Tempered Stable distributions, 
    Lévy processes,
    Mellin transform,
    Series expansions,
    Option pricing}
\subjclass[2020]{
    60G51,
    60E07,
    60E10,
    30B20,
    91G20
    }
\begin{document}

\begin{abstract}
We provide series expansions for the tempered stable densities and for the price of European-style contracts in the exponential L\'evy model driven by the tempered stable process. These formulas recover several popular option pricing models, and become particularly simple in some specific cases such as bilateral Gamma process and one-sided TS process. When compared to traditional Fourier pricing, our method has the advantage of being hyper-parameter free. We also provide a detailed numerical analysis and show that our technique is competitive with state-of-the-art pricing methods.
\end{abstract}

\maketitle
% \tableofcontents

%%%%%%%%%%%%%%%%%%%%%%%%%%%%%%%
%%%%%%%%%%%%%%%%%%%%%%%%%%%%%%%
%%%%%%%%%%%%%%%%%%%%%%%%%%%%%%%

\section{Introduction}
% {\color{red} Define the characteristic function on the complex strip $\mathscr{S}_\TS = \{u\in\C|~\mathrm{Im}[z]\in(-\lambda_+,\lambda_-)\}$}

%{\color{red} Préciser que $\N = \{0,1,...\}$}
The absence of normality in stock returns has been observed since at least the pioneering essay by Bachelier \cite{Bachelier14} in the early twentieth century and has motivated the introduction of stochastic processes that go beyond the Brownian motion (notably in terms of higher-order moments) to model the logarithm of stock prices. Among these processes, L\'evy processes (see the classical monographs \cite{Bertoin96,Cont04} among many other references) play a leading role as they capture many stylized facts such asymmetry in the distributions and fat tails, while preserving the independence and stationarity of financial returns. Among the most popular L\'evy processes in quantitative finance, one finds the Variance Gamma (VG) process of \cite{Madan98}, the bilateral Gamma (BG) process of \cite{kuchlerBG}, the CGMY process of \cite{Carr02} or the KoBoL process of \cite{BoyarOption00}; more recent extensions also include additive processes (see \cite{Azzone25,Amici25}) that relax the stationarity hypothesis, or convolutions of generalized inverse Gaussian subordinators in \cite{AgAg24} for instance.

VG, BG, CGMY and KoBoL processes actually belong to a broader class of processes, known as tempered stable (TS) processes (see \cite{KuchlerTS}), that are also sometimes called generalized tempered stable (GTS) processes and generalize the so-called truncated L\'evy flights introduced in Physics in \cite{Mantegna94} (see also \cite{Koponen95}) with the purpose to capture two different behaviors: a central part that is approximately L\'evy stable, and tails that are approximately exponential in order to ensure finiteness of the second moment. Since then, TS processes have also proved popular in finance notably in the framework of exponential L\'evy pricing models as described in \cite{exp-ts-kuchler} (see also more information on the change of measure and Esscher transform for TS processes in \cite{grabchak2016tempered}), thanks to their tractability and the good empirical results they provide. Among the many financial applications of TS processes, one can mention energy market modeling and pricing of associated derivatives in \cite{Sabino22,Sabino23}, incorporation of stochastic volatility in \cite{zaevski2014tsstochvol, elkathib2025pricingtsvolsto} or multivariate extensions and basket option pricing in \cite{Xia11022022, xia2024pricing}. Equity indices calibration has also been conducted in \cite{fallahgoul2021modelling}, and simulation algorithms for TS processes have been provided in \cite{Chakrabarty2011tsrw, devroye2009simul, baeumer2010simul, grabchak2021simul}.

% More information on the change of measure and the Esscher transform for these processes is detailed in \cite{grabchak2016tempered}[Chap 7.1].
% In \cite{Sabino22,Sabino23}, the authors model the energy market with TS processes and are then able to price associated derivatives products. 
% Refinements of TS processes with the addition 
% of stochastic volatility components and associated pricing method have been proposed in \cite{zaevski2014tsstochvol, elkathib2025pricingtsvolsto}. Algorithms to simulate these processes can be found in \cite{Chakrabarty2011tsrw, devroye2009simul, baeumer2010simul, grabchak2021simul} and the multivariate case is handled in \cite{Xia11022022}. In \cite{xia2024pricing}, a pricing method is proposed for the multi-asset options case.
% A great effort has also been dedicated to study the numerical aspects of TS processes. In \cite{fallahgoul2021modelling}, the authors provide an empirical study of TS modeling for market data.

% An empirical study of TS modelling for market data is provided in \cite{fallahgoul2021modelling}. of  A detailed study of the TS processes in finance can be found in \cite{bianchi}.
% In \cite{xia2024pricing} 

One of the reasons for the popularity of TS processes (and of L\'evy processes in general) in option pricing applications is the availability of their characteristic function in closed form, allowing for the usage of Fourier pricing methods, that are easy to implement and benefit from many refinements that have been achieved to increase their performance. In particular, the introduction of the Fast Fourier Transform (FFT) in Finance in \cite{Carr99} paved the way to the current state-of-the-art Fourier methods such as the COS and PROJ methods (see \cite{fang2009novel} and \cite{kirkby2015efficient} respectively), and extensive research has been conducted to extend the methods to exotic options (see \cite{kirkby2017bermudan, kirkby2018american, kirkby2020asian}). However, Fourier related methods are known to be subject to numerical accuracies stemming from multiple valued function such as complex square roots or logarithms (see \cite{AlbrecherTrap,abi25}); moreover, at least one prefixed parameter has to be selected to perform the computations. For the FFT, the option price has to be modified (dampened) to ensure square integrability, and the convergence of the method is highly sensitive to the choice of the damping parameter, in particular in the multivariate case (see \cite{samet2024}, where optimization of damping factors is discussed to accelerate the convergence of numerical quadrature methods in the context of Fourier pricing). % is the so-called damping factor. Recent results provided in \cite{samet2024} show that the choice of this parameter has an important impact on the convergence of the Carr-Madan pricing formula. The authors show that the optimal factor is the solution to a minimization problem. 
For the COS method, no damping parameters are required, however a truncation range has to be selected, into which the log returns density is approximated by a (finite) cosine expansion; this truncation is generally chosen through a ``rule of thumb" based on the first four cumulants of the distribution, that can actually lead to serious mispricings, see \cite{Junike22}; in some cases, the numerical inaccuracies of the FFT and COS methods (and similar Fourier techniques like the Lewis-Lipton method \cite{Lewis01}) have also been observed to compensate the model error itself, giving birth to what has been called in \cite{Boyar2015,Boyar24} a ``ghost calibration" and to spurious volatility smiles and skews. Similarly, the PROJ method also does not require the selection of a control parameter and displays great efficiency and robustness (see for instance \cite{kirkby2020asian,kirkby2017bermudan}), however, like in the COS case, it necessitates the introduction of a truncation for the density support, that is also determined through a cumulant-based rule and needs to be adapted when it comes to very precise option pricing; additionally, and like in the COS case, PROJ pricing also includes a series truncation, corresponding to the discretization of the integration interval and controlling the precision of the characteristic function's projection over the Riesz basis. 

The purpose of this paper is therefore to develop an option pricing method under TS processes that minimizes the numerical disadvantages described above, and allows for an arbitrary degree of precision (up to machine capabilities) within a competitive computational time. The solution we propose is based on series expansions generated by multidimensional residues associated to Mellin-Barnes integrals. We note that the Mellin transform has already been used for option pricing purposes (notably via transformations of partial differential equations) for instance in \cite{panini2004option,gzyl2017,choi2022mellin,yoon2015}, but the combination with residue calculus is more recent and dates back to \cite{AguilarKirkbyResidue} in the context of path-independent contracts under exponential L\'evy models. Refining this technique, we are able to derive new and fast converging series representations for TS distributions and also for digital and European option prices, and we recover and improve the formulas previously obtained in \cite{AguilarKirkbyResidue} and \cite{aguilarkirkby2022robust} in the VG, BG and one-sided cases. Moreover, our method has the significant advantage of being hyperparameter-free, the precision of the computed prices depending only on the chosen summation order.

% As evidenced in our numerical experiments, the choice of this parameter can be crucial if one needs to reach a given precision (a too wide integration domain leads to a too high summation order (and an important computational time), whereas a too narrow integration domain leads to a lack of precision). 

The paper is organized as follows: we first recall in section \ref{sec:TS_series} the definition of TS distributions, their associated L\'evy processes as well as their particular cases. Then, in section \ref{sec:serie-ts-distributions}, we provide series expansions of the TS distributions by means of multidimensional Mellin transform inversion; these formulas are new to the literature (they only existed in the single-sided case so far). In section \ref{sec:opt-pricing-ts-processes}, we derive series expansions for digital and European option under TS processes and, in section, \ref{sec:particular-cases}, we focus on the particular cases covered by the TS model, and derive simpler notable for the BG and VG models. In section \ref{sec:opt-pricing-one-sided}, we adapt our methodology to the one-sided case and provide particularly simple pricing formulas, taking the form of series expansions; finally, we conduct in section \ref{sec:num-results} detailed numerical analysis and provide comparison with state-of-the-art pricing methods. 
To further aid in research and reproducibility, the implementation code is provided in the publicly available Github repository {\sffamily TS-Pricing}\footnote{\url{https://github.com/gagazzotti/TS-Pricing}} (written in {\sffamily Python}).
Section \ref{sec:concl} is dedicated to conclusive remarks.

% We want to price options of the form:
% \begin{equation}
%     C = \E [ p(S_T)].
% \end{equation}

% Modelling $S$ as an exp-Lévy process, yields:
% \begin{equation}
%     C = \int_\R S_0e^xf(x)\D x
% \end{equation}
% with $f$ the density of $\log(S_T/S_0)$.

% {
% \color{red} 
% TODO
% \begin{itemize}
%     \item define the payoffs for AN, CN, European
%     \item use "generalized TS" instead of TS
% \end{itemize}
% }

%%%%%%%%%%%%%%%%%%%%%%%%%%%%%%%
%%%%%%%%%%%%%%%%%%%%%%%%%%%%%%%
%%%%%%%%%%%%%%%%%%%%%%%%%%%%%%%

%%%%%%%%%%%%%%%%%%%%%%%%%%%%%
\section{TS distributions}
% {\color{black}
\label{sec:TS_series}
%%%%%%%%%%%%%%%%%%%%%%%%%%%%%

%%%%%%%
\subsection{Notations}
%%%%%%%%

In all of the following, we will consider a probability space $(\Omega,\mathcal{F},\mathbb{P})$ and, unless otherwise mentioned, expectations will be taken under $\mathbb P$, that is $\mathbb E [.] = \mathbb E ^{\mathbb P} [.]$. Given a random variable $X:\Omega \rightarrow \mathbb R$, we define its characteristic function (Fourier transform) as
\begin{equation}
    \forall u\in\R,\quad
    \varphi_X (u) 
    =
    \mathbb E \left[ e^{\mi u X} \right]
    =
    \int_{\mathbb R} e^{\mi u x} f_X(x) \D x
    ,
\end{equation}
where the second equality holds if we assume that $X$ possesses a $\mathbb P$-density $f_X$. In the following, we will also denote by $\varphi_X$ the analytic continuation of the characteristic function whenever it exists. 

%%%%%%%%
\subsection{One-sided and double-sided tempered stable distributions}
%%%%%%%%
We start by recalling some essential facts about $\TS$ distributions. The reader can find further details in \cite{KuchlerTS} (as well as multiple other references therein).

%%%
\subsubsection{One-sided TS distributions}
%%%%

One-sided $\TS$ distributions are probability distributions supported by $\mathbb R_+$ and whose characteristic function $\varphi_{\TS_+}$ is of the form:
\begin{equation}
    \forall u\in\R,\quad\varphi_{\TS_+}(u) 
    =\exp\left(
    \int_\R\left(e^{\mi u x}-1\right)\nu_{\TS_+}(x)\D x
    \right)
\end{equation}
where the Lévy measure $\nu_{\TS_+}$ is given by:
\begin{equation}\label{Levy_measure_TS+}
    \forall x\in\R, \quad \nu_{\TS_+}(x) = \alpha\frac{e^{-\lambda x}}{x^{1+\beta}}\mathbbm{1}_{(0,\infty)}(x)
    ,
\end{equation}
for $\alpha,\lambda\in(0,\infty)^2$ and $\beta\in [0,1)$. We will denote such distributions by $\TS_+(\alpha,\lambda,\beta)$, and their density by $f_{\TS_+}(x;\alpha,\beta,\lambda)$; when there is no ambiguity on the parameters and to simplify the notations, we will simply write $f_{\TS_+}(x)$. 
If furthermore $\beta\neq 0$, then the characteristic function of a random variable $X\sim \TS_+(\alpha,\beta,\lambda)$ is known under closed form, and can be written explicitly as:
\begin{equation}\label{char_TS+}
    \forall u\in\R,
    \quad\varphi_{\TS_+}(u) 
    =
    \exp\left({\alpha\Gamma(-\beta)\left((\lambda-\mi u)^{\beta}-\lambda^{\beta}\right)}\right).
\end{equation}
As $\TS_+(\alpha,\beta,\lambda)$ distributions are infinitely divisible, they generate a Lévy process $X^+$ whose characteristic function is:
\begin{equation}
    \forall (u,t)\in\R\times(0,\infty),\quad\varphi_{+}(u,t) 
    =
    \exp\left(t\alpha\Gamma(-\beta)\left((\lambda-\mi u)^{\beta}-\lambda^{\beta}\right)\right).
\end{equation}
The process $X^+$ is an almost surely non decreasing process (i.e., it is a subordinator), whose increments are independent and distributed according to:
\begin{equation}
\forall t\geq s\geq 0, \quad X_t^+-X_s^+\sim \TS_+(\alpha (t-s),\lambda,\beta). 
\end{equation}

\begin{remark}
    One sided TS distributions can also be defined on the negative axis in an evident way. The characteristic function in this case is given by:
    \begin{equation}\label{char_TS-}
    \forall u\in\R,
    \quad\varphi_{\TS_-}(u) 
    =
    \exp\left({\alpha\Gamma(-\beta)\left((\lambda+\mi u)^{\beta}-\lambda^{\beta}\right)}\right)
    \end{equation}
    under the same assumptions on the parameters, and the density function satisfies
    \begin{equation}
        \forall x \in (-\infty,0),\quad
        f_{\TS_-}(x;\alpha,\beta,\lambda) = f_{\TS_+}(-x;\alpha,\beta,\lambda).
    \end{equation}
    Such distributions will be denoted by $\TS_-(\alpha,\beta,\lambda)$.
\end{remark}

% A well known procedure of bilateralization can be applied to TS distributions (see kuchler).
% To this aim, we consider a ``positive" and a ``negative" TS distributions respectively denoted  $\TS_+(\alpha_+,\lambda_+,\beta_+)$ and $\TS_-(\alpha_-,\lambda_-,\beta_-)$ with Lévy measures:
% \begin{equation}
% \forall x\in\R, \quad 
%     \begin{cases}
%         \displaystyle
%         \nu_{\TS_+}(x) = \alpha_+\frac{e^{-\lambda_+ x}}{x^{1+\beta_+}}\mathbbm{1}_{(0,\infty)}(x),\\
%         \displaystyle
%         \nu_{\TS_-}(x) = \alpha_-\frac{e^{-\lambda_- |x|}}{x^{1+\beta_-}}\mathbbm{1}_{(-\infty,0)}(x),
%     \end{cases}
% \end{equation}
% where $\alpha_\pm,\lambda_\pm>0$ and $\beta_\pm\in[0,1]$.

%%%
\subsubsection{Double-sided TS distributions}
%%%%

We now define the double-sided TS distribution or, simply, the TS distribution, by the following convolution:
\begin{equation}\label{TS_def}
    \TS(\alpha_+,\beta_+,\lambda_+,\alpha_-,\beta_-,\lambda_-) 
    =
    \TS_+(\alpha_+,\beta_+,\lambda_+) \star \TS_-(\alpha_-,\beta_-,\lambda_-) 
\end{equation}
where $\alpha_\pm, \lambda_\pm \in \R$ and $\beta_\pm \in [0,1)$. We note that some authors also speak of generalized tempered stable distributions (GTS), see for instance \cite{Cont04}.
% \textit{i.e.}, $\mathrm{TS}(\alpha_+,\lambda_+,\beta_+,\alpha_-,\lambda_-,\beta_-) = \TS_+(\alpha_+,\lambda_+,\beta_+) \star \TS_-(\alpha_-,\lambda_-,\beta_-)$.
As for one sided TS distributions, TS distributions are infinitely divisible and, as such, they generate a Lévy process; it follows from definition \eqref{TS_def} and from \eqref{char_TS+} and \eqref{char_TS-} that its characteristic function is, for $\beta_\pm \in (0,1)$,
\begin{multline}\label{eq:char_TS}
    \forall (u,t)\in\R\times(0,\infty),\quad\varphi_{\mathrm{TS}}(u,t)      =
     \exp\left(t\alpha_+\Gamma(-\beta_+)\left((\lambda_+-\mi u)^{\beta_+}-\lambda_+^{\beta_+}\right)\right.\\
     +\left.t\alpha_-\Gamma(-\beta_-)\left((\lambda_-+\mi u)^{\beta_-}-\lambda_-^{\beta_-}\right)
     \right).
 \end{multline}

Extension to the case $1< \beta_\pm < 2$ has been considered (see discussion in \cite{KuchlerTS}), however we will stick to the initial hypothesis $\beta_\pm\in[0,1)$ because it covers most cases of financial significance (see section \ref{subsec:particular-cases} thereafter). Last, it follows from definition \eqref{TS_def} that the density of a $\TS$ distribution admits the representation
\begin{equation}\label{TSdensity_conv}              f_\TS(x;\alpha_+,\beta_+,\lambda_+,\alpha_-,\beta_-,\lambda_-) = \int_0^\infty f_{\TS_+}(x+y;\alpha_+,\beta_+,\lambda_+)f_{\TS_+}(y;\alpha_-,\beta_-,\lambda_-)\D y.
\end{equation}
for all $x\in\R$.

\begin{remark}
Let us note that, if $Y$ and $Z$ are two independent TS subordinators whose generating distributions are respectively $\TS_+(\alpha_+,\lambda_+,\beta_+)$ and $\TS_+(\alpha_-,\lambda_-,\beta_-)$, then the difference process $X:=Y-Z$ is a $\TS$ process whose increments are independent and satisfy:
\begin{equation}
\forall t > s, \quad
X_t-X_s \sim \mathrm{TS}(\alpha_+(t-s),\lambda_+,\beta_+,\alpha_-(t-s),\lambda_-,\beta_-).
\end{equation}
\end{remark}

%%%%%%%%%%%%%%%%%%%%%%%%
%%%%%%%%%%%%%%%%%%%%%%%%
\subsection{A continuity property}
\label{subsec:continuity}
Let us derive an interesting continuity property that will be useful in the option pricing context

\begin{proposition}
    \label{prop:pointwise-convergence}
    Let $(\alpha_+,\lambda_+,\alpha_-,\lambda_-)\in (0,+\infty)^4$ and $(\beta_n^+)_{n\in\N}\in (0,1)^\N$ and $(\beta_n^+)_{n\in\N}\in (0,1)^\N$ be two sequences such that $\beta_n^\pm \underset{n\to\infty}{\longrightarrow}\beta_\pm\in(0,1)^2$ with $(\beta_n^+)_{n\in\N}$ being dereasing. Let $f$ be the density function of the distribution $\TS(\alpha_+, \beta_+,\lambda_+, \alpha_-,\beta_-,\lambda_-)$ and for each $n\in\N$, let $f_n$ be the density function of the distribution $\TS(\alpha_+, \beta_n^+,\lambda_+, \alpha_-,\beta_n^-,\lambda_-).$ We have the following pointwise convergence:
    \begin{equation}
        \forall x \in\R, \quad f_n(x)\underset{n\to\infty}{\longrightarrow} f(x).
    \end{equation}
\end{proposition}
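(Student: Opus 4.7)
The plan is to apply the Fourier inversion formula to each $f_n$ and $f$ and pass to the limit via the dominated convergence theorem, combined with pointwise convergence of the characteristic functions. The first observation is that, under the standing assumptions, the sequences $(\beta_n^\pm)$ are eventually contained in a compact subinterval $[a,b]\subset(0,1)$: on the plus side the decreasing monotonicity together with the limit $\beta_+\in(0,1)$ gives $\beta_n^+\in[\beta_+,\beta_1^+]$, and on the minus side any compact neighborhood of $\beta_-$ works for $n$ large enough. This uniform confinement is the technical ingredient that will sustain the dominated convergence argument.

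Pointwise convergence $\varphi_{\mathrm{TS},n}(u,1)\to\varphi_{\mathrm{TS}}(u,1)$ for every $u\in\R$ follows readily from \eqref{eq:char_TS} and the joint continuity in $(z,\beta)$ of the principal branch of $z\mapsto z^\beta$ on $\{\Real z>0\}\times(0,1)$, together with continuity of $\Gamma$ on $(0,1)$; note that $\lambda_+-\mi u$ and $\lambda_-+\mi u$ both lie in the right half-plane since $\lambda_\pm>0$, so the principal branch is unambiguous.

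The crucial step is the construction of an integrable majorant uniform in $n$. Writing $\lambda_+-\mi u=\sqrt{\lambda_+^2+u^2}\exp(-\mi\arctan(u/\lambda_+))$ and analogously for the minus side, the real part of the exponent in \eqref{eq:char_TS} yields
\begin{equation*}
\log|\varphi_{\mathrm{TS},n}(u,1)|=\sum_{\epsilon=\pm}\alpha_\epsilon\Gamma(-\beta_n^\epsilon)\left[(\lambda_\epsilon^2+u^2)^{\beta_n^\epsilon/2}\cos\bigl(\beta_n^\epsilon\arctan(u/\lambda_\epsilon)\bigr)-\lambda_\epsilon^{\beta_n^\epsilon}\right].
\end{equation*}
On the compact set $[a,b]\subset(0,1)$, $-\Gamma(-\beta)$ is bounded below by some $m>0$; since $|\arctan(u/\lambda_\epsilon)|<\pi/2$, we have $\cos(\beta_n^\epsilon\arctan(u/\lambda_\epsilon))\geq\cos(b\pi/2)>0$; and for $|u|\geq1$ one has $(\lambda_\epsilon^2+u^2)^{\beta_n^\epsilon/2}\geq|u|^a$. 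Combining these bounds yields, uniformly in $n$,
\begin{equation*}
|\varphi_{\mathrm{TS},n}(u,1)|\leq C\exp\left(-c|u|^a\right),\qquad|u|\geq 1,
\end{equation*}
for constants $C,c>0$ independent of $n$; together with the trivial bound $|\varphi_{\mathrm{TS},n}|\leq 1$ on $[-1,1]$, this provides the required $L^1(\R)$ dominating function.

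Since $\varphi_{\mathrm{TS},n}\in L^1(\R)$, Fourier inversion gives $f_n(x)=\frac{1}{2\pi}\int_\R e^{-\mi ux}\varphi_{\mathrm{TS},n}(u,1)\D u$, and dominated convergence concludes that $f_n(x)\to f(x)$ for every $x\in\R$. The main obstacle is precisely the uniform decay estimate: the boundary value $\beta=1$ is genuinely singular, as $\Gamma(-\beta)$ diverges and the decay of the characteristic function degenerates there. The hypothesis $\beta_\pm\in(0,1)$ together with the monotonicity of $(\beta_n^+)$ is exactly what keeps the sequences uniformly away from this boundary and allows the dominating function to be chosen independently of $n$.
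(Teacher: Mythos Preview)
Your proof is correct and shares the overall scaffold with the paper's argument: establish pointwise convergence of the characteristic functions, build a uniform $L^1$ majorant, and invoke dominated convergence in the Fourier inversion formula. The difference lies in how the majorant is obtained. The paper works from the L\'evy--Khintchine representation: it bounds $|\varphi_n(u)|$ by the one-sided factor $\exp\bigl(\int_0^\infty(\cos(ux)-1)k_n^+(x)x^{-1}\,\D x\bigr)$, restricts the integral to a level set $\{k^+\geq\delta\}$ (using the monotonicity of $(\beta_n^+)$ to make this set independent of $n$), and then appeals to the asymptotics of the cosine integral to obtain only polynomial decay $|\varphi_n(u)|=O(|u|^{-\delta})$. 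Your route is more direct and sharper: working straight from the closed form \eqref{eq:char_TS}, you compute the real part of the exponent explicitly and extract a uniform stretched-exponential bound $|\varphi_n(u)|\leq C\exp(-c|u|^a)$. The paper's approach is closer in spirit to general L\'evy-process arguments (it would survive, with minor changes, when no closed-form characteristic function is available), whereas yours exploits the specific TS structure to full effect and yields a quantitatively better majorant. One small remark: the obstruction near $\beta=1$ that you highlight is more accurately the degeneration of the factor $\cos(b\pi/2)$ in your estimate than the divergence of $\Gamma(-\beta)$, since $-\Gamma(-\beta)\to+\infty$ at both endpoints and therefore only helps the bound; this is a point of motivation only and does not affect the validity of your argument.
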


\begin{proof}
    Let $n\in\N$ and let $\varphi_n$ denote the characteristic function of the random variable $X_n$. From definition \eqref{TS_def} and from the Lévy measure \eqref{Levy_measure_TS+}, we have:
    % We can write $\varphi_n$ with the Lévy-Kinhchine representation, \textit{i.e.}:
    \begin{equation}
        \forall u\in\R,\quad\varphi_{n}(u) 
        =\exp\left(
        \int_\R\left(e^{\mi u x}-1\right)\frac{k_{n}(x)}{x}\D x
        \right)
    \end{equation}
    where $k_n(x) = k_n^+(x)\mathbbm{1}_{(0,\infty)}(x) - k_n^-(x)\mathbbm{1}_{(-\infty,0)}(x)$ with $k_n^\pm = \alpha_\pm |x|^{-\beta_n^\pm}e^{-\lambda_\pm |x|}$. It is immediate to see that:
    \begin{align}
        \forall u\in\R,\quad |\varphi_{n}(u)|
        % &\leq 
        % \exp\left(
        % \int_0^\infty \left(\cos(ux)-1\right)\frac{k_{n}^+(x)}{x}\D x
        % +
        % \int_0^\infty \left(\cos(ux)-1\right)\frac{k_{n}^-(x)}{x}\D x
        % \right).\\
        &\leq 
        \exp\left(
        \int_0^\infty \left(\cos(ux)-1\right)\frac{k_{n}^+(x)}{x}\D x
        \right).
    \end{align}
    Let $\delta$ a real number such that $\delta\geq k_n^+(1) = \alpha_+ e^{-\delta_+}$ and $I_n$ the interval defined as $I_n = \{s\in\R|~ k_n^+(s) \geq \delta\}\subset [0,1]$. 
    By noticing that $\cos(ux)-1\leq 0$ and $k_n^+(x)/x\geq 0$, we have that:
    \begin{equation}
        \forall u\in\R,\quad |\varphi_{n}(u)|
        \leq 
        \exp\left(
        \int_{I_n} \left(\cos(ux)-1\right)\frac{k_n^+(x)}{x}\D x
        \right) = \exp\left(
        \delta\int_{I_n} \frac{\cos(ux)-1}{x}\D x
        \right)
        .
    \end{equation}
    Since $(\beta_n)_{n_\N}$ is decreasing, we have $I\subset I_n$ where $I = \{s\in\R|~ k^+(s) \geq \delta\}$ and therefore:
    \begin{equation}\label{eq:cosine_proof}
        \forall u\in\R,\quad |\varphi_{n}(u)|
        \leq 
        \exp\left(
        \delta\int_{I} \frac{\cos(ux)-1}{x}\D x 
        \right)
        =
        \exp\left(
        \delta\int_0^{Bu} \frac{\cos(x)-1}{x}\D x 
        \right)
    \end{equation}
     where $B:=\max(I)$. As we know from the properties of the cosine integral function (see \cite{Abramowitz72}) that
    \begin{equation}
        \exp\left(
        \delta\int_0^{Bu} \frac{\cos(x)-1}{x}\D x 
        \right)
        \underset{|u|\to \infty}{=}
        \exp\left(
        \delta(-\gamma -\ln (B|u|) + o(1))
        \right)
    \end{equation}
    where $\gamma$ denotes the Euler-Mascheroni constant, it follows that the right hand side of \eqref{eq:cosine_proof} behaves as $O(|u|^{-\delta})$ as $|u|\to \infty$.
    % We have found a bounding function, behaving as $O(|u|^{-\delta})$ at infinity, independent of $n$. 
    Therefore, the dominated convergence theorem  applied to the Fourier inversion formula gives the pointwise convergence of all the derivatives of the cumulative distribution function (including the density).
\end{proof}

%%%%%%%%%%%%%%%%%%%%%%%%
%%%%%%%%%%%%%%%%%%%%%%%%

%%%%%%%%%%%%%%%%%%%%%%%%
%%%%%%%%%%%%%%%%%%%%%%%%
\subsection{Particular cases}
\label{subsec:particular-cases}
% \red{to do: give examples}
% }

TS distributions recover many well-known distributions that are popular within the option pricing and financial engineering space; below are some examples.

\begin{itemize}
    \item $\beta_+=\beta_-$ in \eqref{eq:char_TS} yields the characteristic function of the KoBoL distribution (see \cite{BoyarOption00,BoyarLevin02}).
    \item If $\alpha_+=\alpha_-=\alpha$ and $\beta_+=\beta_-=\beta$ then the TS distribution becomes a CGMY (Carr-Geman-Madan-Yor) distribution, where $C:=\alpha$, $G:=\lambda_-$, $M=\lambda_+$, $Y=\beta$. We note, moreover, that the case $Y\in(0,1)$ is the case where the associated L\'evy process is of infinite activity and finite variation, a situation which is typical for risk-neutral processes of most equity indices for instance (see \cite{Carr02}).
    \item If $\beta_\pm=0$, one speaks of a bilateral Gamma (BG) distribution (see \cite{kuchlerBG}) and in that case the characteristic function becomes:
\begin{equation}\label{eq:char_BG}
    \forall u\in\R,\quad\varphi_{\mathrm{BG}}(u)      
    =
    \left(
    \frac{\lambda_+}{\lambda_+ - \mi u}
    \right)^{\alpha_+}
    \left(
    \frac{\lambda_-}{\lambda_+ + \mi u}
    \right)^{\alpha_-}
    .
 \end{equation}
 \item When $\alpha_+=\alpha_-=\alpha$ and $\beta_+=\beta_-=0$, one recovers the Variance Gamma (VG) distribution of \cite{Madan98}; when furthermore $\lambda_+=\lambda_-$, the distribution is symmetric, and was introduced in \cite{Madan90}.
\end{itemize}

%%%%%%%%%%%%%%%%%%%%%%%%
%%%%%%%%%%%%%%%%%%%%%%%%

% \input{sections/3_series_ts_distributions}
\section{Series expansion of TS distributions}
\label{sec:serie-ts-distributions}

The purpose of this section is to derive series representations for single and double-sided TS distributions. Our approach is the following: first, we derive a Mellin-Barnes (MB) integral representation for the corresponding density function, and, in a second time, we apply residue calculus to express these integrals under the form of residues series expansions. Of course, series (and asymptotic) representations for L\'evy densities have already been discussed in the literature but they often focus on Gram-Charlier expansions (see for instance \cite{Chateau17,Asmussen22} and references therein) or on expansions for the cumulative distribution function (see \cite{Navas25}). Mellin residue calculus for TS densities has been utilized in \cite[eq. 3.13]{gupta2021densitiesinversetemperedstable} to get a density expansion in the one-side case, that we will recover in proposition \ref{prop:series_density_TS}; the general double-sided series expansion we will obtain in proposition \ref{prop:series_density_GTS} is the first of its kind (to the best of our knowledge).   %Even though this procedure is an intermediary step to derive series for option prices, 
% In the one-sided TS case, we show that our expansion recovers the already known series expansion of \cite[eq. 3.13]{gupta2021densitiesinversetemperedstable}. In the general double-sided case, the expansion we will derive in proposition \ref{prop:series_density_GTS} for is the first of its kind (to the best of our knowledge).
% derive a new series representation for the double-sided TS distribution (the first of its kind to the best of our knowledge). 

\subsection{Notations}

% This section introduces the notations that will ease
% the use Mellin-Barnes integral. These integrals are of the form:
% \begin{equation}
%     \int_{\bc+\mi \R^{n}} 
%     \GG\left(U\bs_{i:j}^{(n)}+ \bu, D\bs_{k:\ell}^{(n)}+\bd\right)
%     \prod_{i=1}^n a_i^{\bm{s}_i^{(n)}}
%     \frac{\D \bs^{(n)}}{(\mi2\pi)^n}
% \end{equation}
% where $\bc\in\C^n$, the integration variable is $\bs^{(n)} 
% := \left(\bs_1 ,...,\bs_n\right)^\intercal
% \in \bc+\mi\R^n$ and the differential form 
% $\D \bs^{(n)}$ is defined as
% $\D \bs^{(n)} := \D \bs_1\wedge\dots\wedge \D \bs_n$. 
% For $1\leq i\leq j\leq n$, the sub-vector $\bs_{i:j}^{(n)}$ is defined as $\bs_{i:j}^{(n)} = \left(\bs_i ,\dots,\bs_j\right)^\intercal$.
% The integrand is defined as:
% \begin{equation}
%     \GG\left(U\bs_{i:j}^{(n)}, D\bs_{k:\ell}^{(n)}\right)
%     := 
%     \frac{
%     \prod_{p=1}^{m_u}\Gamma\left(\left(U\bs_{i:j}^{(n)}\right)_p+\bu_p\right)}
%     {\prod_{p=1}^{m_d}\Gamma\left(\left(D\bs_{k:\ell}^{(n)}\right)_p+ \bd_p\right)}
% \end{equation}
% where $(U,D)\in\mathcal{M}_{m_u,j-i+1}(\R)\times \mathcal{M}_{m_d,\ell-k+1}(\R)$ with $(m_u,m_d)\in\N_{>0}^2$. The vectors $\bu$ and $\bd$ are assumed to be such that $(\bu, \bd)\in \R^{m_u}\times\R^{m_d}$.

As a preamble, we first introduce notations that will be used extensively in the article, especially for writing multiple MB integrals (that is, contour integrals involving ratios of products of gamma functions of linear arguments, see \cite{Abramowitz72} or any monograph on special functions) in a lighter way. 

To this aim, let us first introduce some quantities. 
Let $n\in\mathbb{N}$ be the dimension of the MB integral ($n=2$ for a double integral, $n=3$ for a triple integral etc.), and let $(m_u,m_d)\in\mathbb{N}^2$ denote the number of gamma functions in the numerator and in the denominator respectively. We also define two other integers $(\ell_u,\ell_d)\in\intinterv{1}{n}^2$ denoting the dimension of the linear span generated by the complex variables involved in the numerator and denominator respectively. %the number of complex variables really needed as argument of these gamma functions (lengths of the subvectors of the complex integration variable). 
% We also need the vectors  $(\bm{c},\bm{a})\in(\R^n)^2$ indicating respectively the integration complex lines and the term that will be powered to the complex integration variables. 
Let us finally consider the two matrices $(U,D)\in\mathcal{M}_{m_u,\ell_u}(\R)\times\mathcal{M}_{m_d,\ell_d}(\R)$ and the two vectors $(\bm{u},\bm{d})\in\R^{m_u}\times\R^{m_d}$
containing
the coefficients of the arguments of the gamma functions, and let us define the integration subset
\begin{equation}
    \bm{c}+\mi\R^n
    :=
    \{
    \bm{z}\in\C^n,
    z_1 = c_1 + \mi y_1,
    \dots,
    z_n = c_n + \mi y_n
    \}
    , \quad
    (y_1,\dots,y_n) \in \R^n.
\end{equation}
With these notations, an $n$-dimensional MB integral can be written as:
\begin{equation}
    \int\limits_{\bm{c}+\mi\R^n}
    \GP{U\bm{s}_{\leqslant\ell_u}+\bm{u}}{D\bm{s}_{\leqslant\ell_d}+\bm{d}}
    \prod_{i=1}^n q_i^{s_i}\frac{\D \bm{s}}{(\mi 2\pi)^n}
\end{equation}
where $\bm{q}\in\R^n$,  $\bm{s}_{\leqslant\ell_u}:=(s_1,...,s_{\ell_u})$, $\bm{s}_{\leqslant\ell_d}:=(s_1,...,s_{\ell_d})$, $\D \bm{s} := \D s_1 \wedge ... \wedge \D s_n$, and where $\GP{\cdot}{\cdot}$ is defined as:
\begin{equation}
    \forall (\bm{v},\bm{w})\in \R^{m_u}\times\R^{m_d},
    \quad 
    \GP{\bm{v}}{\bm{w}}
    :=
    \frac{\prod_{i=1}^{m_u} \Gamma\left(v_i\right)}{\prod_{i=1}^{m_d} \Gamma\left(w_i\right)}.
\end{equation}
Finally, for a matrix $M$ with $n_M$ lines, and for a set of $k$ indices $\{i_m\}_{m\in\intinterv{1}{k}}\in\intinterv{1}{n_M}^{k}$,  we will denote $M_{\{i_1,...,i_k\}}$ the sub-matrix composed of the lines $\{i_1,...,i_k\}$. In order to clarify this formalism, we provide an explanatory illustration in example \ref{example:notations}. 
% \red{A definir après par exemple au début de la sous section 3.3 puisque c'est là qu'on commence à faire des séries}
\begin{example}
\label{example:notations}
    Consider the following 3-dimensional MB integral:
    \begin{equation}
        \int\limits_{\bm{c}+\mi \R^3}\frac{\Gamma(s_1)\Gamma(2+3s_2)\Gamma(1+s_1-s_2)}{\Gamma(1-2s_1)\Gamma(s_1+s_2+5s_3)}
        x_1^{-s_1}x_2^{-s_2}x_3^{-s_3}
        \frac{\D \bm{s}}{(\mi 2\pi)^3}
        ,
    \end{equation}
    where we assume that $\bm{c}\in\R^3$ and $(x_1,x_2,x_3)\in\R^3$ are chosen so that the integral exists (see \cite{zhdanov1998} for more details). In our framework, we have $n=3$, $m_u=3$ and $m_d=2$. Since we only use $(s_1,s_2)$ in the numerator and all the integration variables in the denominator, 
    we have $\ell_u=2$ and $\ell_d=3$. To get the correspondence with the definition, one can easily establish the following identifications:
    \begin{equation}
        \bm{q}
        \leftrightarrow(x_1^{-1},x_2^{-1},x_3^{-1}),
        ~
        \bm{u}\leftrightarrow (0,2,1),
        ~
        \bm{d}\leftrightarrow (1,0),
        ~ 
        U\leftrightarrow
        \begin{pmatrix}
            1 &0\\
            0&3\\
            1&-1
        \end{pmatrix},
        ~
        D\leftrightarrow
        \begin{pmatrix}
            -2 &0&0\\
            1&1&5
        \end{pmatrix}.
    \end{equation}
\end{example}

%%%%%%%%%%%%%%%%%%%%%%%%
\subsection{Mellin-Barnes representation of TS densities}
%%%%%%%%%%%%%%%%%%%%%%%%
We give here the MB representations for the densities of the single and double-sided TS distributions.
% as a first step to get series expansion of the density functions. 
We start with the single-sided case, for which as MB representation was already obtained in \cite{gupta2021densitiesinversetemperedstable} with a slightly different approach.

\begin{proposition}
\label{prop:density-Mellin-TS}
    % \color{black}
    Let $f_{\TS_+}$ be the density function of the $\TS_+(\alpha, \beta, \lambda)$-distribution, then the MB representation holds:
    \begin{equation}
        \label{eq:density-Mellin-TS}
        \forall x\in(0,+\infty),\quad f_{\TS_+}(x) = \frac{e^{-\lambda x + a \lambda^\beta}}{\beta}
        \int\limits_{c+\mi \R} 
         \frac{\Gamma\left(-\frac{s}{\beta}\right)}{\Gamma(-s)}a^{\frac{s}{\beta}}x^{-s-1}
        \frac{\D s}{\mi 2 \pi}
    \end{equation}
    where $c<0$ and $a:=-\alpha \Gamma(-\beta)$.
\end{proposition}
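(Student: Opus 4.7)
The plan is to derive the formula from Fourier inversion, after inserting a Mellin-Barnes representation of the exponential and swapping orders of integration.

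First, I would rewrite the characteristic function \eqref{char_TS+} as
\begin{equation*}
\varphi_{\TS_+}(u)=e^{a\lambda^{\beta}}\,e^{-a(\lambda-\mi u)^{\beta}},\qquad a=-\alpha\Gamma(-\beta)>0,
\end{equation*}
so that Fourier inversion gives
\begin{equation*}
f_{\TS_+}(x)=\frac{e^{a\lambda^{\beta}}}{2\pi}\int_{\R}e^{-\mi ux}\,e^{-a(\lambda-\mi u)^{\beta}}\,\D u.
\end{equation*}
Since $\lambda>0$ and $\beta\in(0,1)$, the complex number $w:=a(\lambda-\mi u)^{\beta}$ satisfies $|\arg w|<\pi\beta/2<\pi/2$, so the standard Mellin-Barnes representation of the exponential,
\begin{equation*}
e^{-w}=\int_{\gamma+\mi\R}\Gamma(s)\,w^{-s}\,\frac{\D s}{\mi 2\pi},\qquad \gamma>0,
\end{equation*}
applies and yields
\begin{equation*}
e^{-a(\lambda-\mi u)^{\beta}}=\int_{\gamma+\mi\R}\Gamma(s)\,a^{-s}\,(\lambda-\mi u)^{-s\beta}\,\frac{\D s}{\mi 2\pi}.
\end{equation*}

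Next, I would apply Fubini to swap the Fourier integral in $u$ with the Mellin-Barnes integral in $s$. This is the main technical point: one needs absolute integrability of $|\Gamma(\gamma+\mi t)|\,|\lambda-\mi u|^{-\gamma\beta}$ jointly in $(u,t)$, which follows from Stirling's decay of $\Gamma$ on vertical lines together with the polynomial decay in $u$ once $\gamma\beta>1$. Choosing $\gamma$ large enough to ensure this integrability and then eventually shifting the contour back, I obtain
\begin{equation*}
f_{\TS_+}(x)=e^{a\lambda^{\beta}}\int_{\gamma+\mi\R}\Gamma(s)\,a^{-s}\left(\frac{1}{2\pi}\int_{\R}e^{-\mi ux}(\lambda-\mi u)^{-s\beta}\,\D u\right)\frac{\D s}{\mi 2\pi}.
\end{equation*}
For $\Re(s\beta)>0$ and $x>0$, the inner integral is precisely the inverse Fourier transform of a Gamma$(s\beta,\lambda)$ characteristic function, which equals $x^{s\beta-1}e^{-\lambda x}/\Gamma(s\beta)$. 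Substituting back gives
\begin{equation*}
f_{\TS_+}(x)=e^{-\lambda x+a\lambda^{\beta}}\int_{\gamma+\mi\R}\frac{\Gamma(s)}{\Gamma(s\beta)}\,a^{-s}\,x^{s\beta-1}\,\frac{\D s}{\mi 2\pi}.
\end{equation*}

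Finally, I would perform the linear change of variable $s\mapsto -s/\beta$, which maps the line $\gamma+\mi\R$ (traversed upward) onto $-\gamma\beta+\mi\R$ (traversed downward). The Jacobian $-1/\beta$ combined with the orientation flip gives a factor $1/\beta$, and the integrand transforms as $\Gamma(s)\to\Gamma(-s/\beta)$, $\Gamma(s\beta)\to\Gamma(-s)$, $a^{-s}\to a^{s/\beta}$, $x^{s\beta-1}\to x^{-s-1}$, producing exactly \eqref{eq:density-Mellin-TS} with $c=-\gamma\beta<0$. The existence of the integral on this new contour is equivalent to that of the original one by the change of variables, so no additional integrability check is needed at this stage. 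I expect the only real obstacle to be the rigorous justification of Fubini in the intermediate step; everything else is algebraic manipulation together with classical Mellin-Barnes identities.
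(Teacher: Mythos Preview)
Your proof is correct and follows essentially the same strategy as the paper: Fourier/Laplace inversion, insertion of a Mellin--Barnes representation for the exponential, and an interchange of integration. The only real difference is the order of the operations and the way the inner integral is identified. The paper first performs the affine substitution $u\mapsto (u+\lambda)x$ in the inversion integral, then inserts the MB representation of $e^{-a(u/x)^\beta}$, and finally recognises the remaining $u$-integral as Hankel's formula $\int_{c+\mi\R}e^{u}u^{s}\,\frac{\D u}{\mi 2\pi}=1/\Gamma(-s)$. You instead keep the original Fourier variable, insert the MB representation directly for $e^{-a(\lambda-\mi u)^\beta}$, recognise the inner $u$-integral as the inverse Fourier transform of a Gamma$(s\beta,\lambda)$ density, and only then perform the linear change $s\mapsto -s/\beta$ on the Mellin contour. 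The two computations are equivalent (your Gamma-density identity and the paper's Hankel identity are the same formula in different coordinates), so neither buys anything over the other; your route is perhaps slightly more recognisable to a probabilist, the paper's to someone used to special-function tables.
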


\begin{proof}
% \red{to do (Gaetano): raccourcir la preuve stp}
    % \color{black}
    % Noticing that the Laplace transform of $f_{\TS_+}$ is $\varphi_{\TS_+}(\mi \times \cdot)$ and 
    Using \eqref{char_TS+} and the Laplace inversion formula yields: 
    \begin{equation}
        \forall x \in(0,\infty),~f_{\TS_+}(x) = \int\limits_{c_1+\mi \R} e^{ux+\alpha \Gamma(-\beta)\left((\lambda+u)^\beta-\lambda^\beta\right)} \frac{\D u}{\mi 2 \pi}
    \end{equation}
    with $c_1 > - \lambda$ and $a:=-\alpha \Gamma(-\beta)$ (note that $a>0$). 
    The change of variable $u \leftarrow (u-\lambda)/x$ and the identity \cite[eq. 6.3.16 of Vol. 1]{Bateman} allow to write:
    % \begin{equation}
    %     f_{\TS_+}(x) = \frac{\exp\left[-\lambda x + a \lambda^\beta\right]}{x}
    %     \int_{c_2+\mi \R} \exp\left[u\right]\exp\left[-au^\beta x^{-\beta}\right] \frac{\D u}{\mi 2 \pi}.
    % \end{equation}
    % where $\Real(c_2) > 0$. Applying \cite[6.3.16 of Vol 1]{Bateman}, we have:
    \begin{equation}
        f_{\TS}(x) = \frac{e^{-\lambda x + a \lambda^\beta}}{\beta x}
        \int\limits_{c_2+\mi \R} e^u
        \int\limits_{c_3+\mi \R} \Gamma\left(-\frac{s}{\beta}\right)a^{\frac{s}{\beta}}u^{s}x^{-s}
        \frac{\D s}{\mi 2 \pi}\frac{\D u}{\mi 2 \pi}
    \end{equation}
    where $c_3 < 0$. 
    % Rearranging the terms, we have:
    % \begin{equation}
    %     f_{\TS}(x) = \frac{\exp\left[-\lambda x + a \lambda^\beta\right]}{\beta x}
    %     \int_{c_3+\mi \R} 
    %     \left(\int_{c_2+\mi \R}\exp\left[u\right]u^{s}\frac{\D u}{\mi 2 \pi}\right)
    %      \Gamma\left(-\frac{s}{\beta}\right)a^{s/\beta}x^{-s}
    %     \frac{\D s}{\mi 2 \pi}.
    % \end{equation}
    Changing the order of integration and noting that $\Real(s) = c_3 < 0 $, the arising $u$-integral is the Laplace inversion integral of $u^{s}$ evaluated in 1, which equals $1/\Gamma(-s)$ (see \cite[eq. 4.3.1 of Vol. 1]{Bateman}). The result follows.
    % \begin{equation}
    %     f_{\TS}(x) = \frac{\exp\left[-\lambda x + a \lambda^\beta\right]}{\beta x}
    %     \int_{c_3+\mi \R} 
    %      \frac{\Gamma\left(-\frac{s}{\beta}\right)}{\Gamma(-s)}a^{s/\beta}x^{-s}
    %     \frac{\D s}{\mi 2 \pi}.
    % \end{equation}
\end{proof}

From now on, we associate to a $\TS(\alpha_+,\beta_+,\lambda_+,\alpha_-,\beta_-,\lambda_-)$ distribution the quantities $a_\pm$, $\ubar{\lambda}$ and $\gamma$ defined as:
\begin{equation}
\label{eq:constants-ts}
    \begin{cases}
        a_+ := -\alpha_+ \Gamma(-\beta_+),\\
        a_- := -\alpha_- \Gamma(-\beta_-),\\
        \ubar{\lambda} := \lambda_+ + \lambda_-,\\
        \gamma := a_+\lambda_+^{\beta_+}+a_-\lambda_-^{\beta_-}.
    \end{cases}
\end{equation}
\begin{proposition}
\label{prop:density-Mellin-GTS}
Let $f_\TS$ be the density function of the $\TS(\alpha_+,\beta_+,\lambda_+,\alpha_-,\beta_-,\lambda_-)$ distribution, then the following MB representation holds:
\begin{multline}
\label{eq:density-Mellin-GTS}
% \color{blue}
    \forall x \in (0,+\infty),\quad 
    f_\TS(x) = \frac{e^{\gamma-\lambda_+ x}}{\beta_+\beta_-}   
    \int\limits_{\bm{c}+\mi \R^3 } 
    \GP{U_\TS\bs + \bu_\TS}{D_\TS\bs_{\leqslant 2} + \bd_\TS}
    \\
% \color{blue}
    \times  \ubar{\lambda}^{\frac{s_1+s_2}{2}-s_3}x^{-1-\frac{s_1+s_2}{2}-s_3}a_+^{\frac{s_1}{\beta_+}}a_-^{\frac{s_2}{\beta_-}}
    \frac{\D \bs}{(\mi 2 \pi)^3}
\end{multline}
where:
\begin{equation}
    U_\TS
    :=
    \begin{pmatrix}
         -1/2 & -1/2 & 1\\
         1/2 &  1/2 & 1 \\
         1/2  & -1/2 & -1 \\
         -\beta_+^{-1} &0 &0\\
         0 &-\beta_-^{-1} &0\\
    \end{pmatrix}
    ,
    \quad 
    \bm{u}_\TS
    :=
    \begin{pmatrix}
        0\\
        1\\
        0\\
        0\\
        0
    \end{pmatrix}
    ,\quad
    D_\TS
    :=
    \begin{pmatrix}
         1 & 0\\
         -1&  0 \\
         0 &  -1 \\
    \end{pmatrix}
    ~\mathrm{and}~
    \bm{d}_\TS
    :=
    \begin{pmatrix}
        1\\
        0\\
        0
    \end{pmatrix}
\end{equation}
and  $\bm{c}\in\{\bm{x}\in\R^3|~U_\TS x+\bm{u}_\TS\succ 0\}$.
\end{proposition}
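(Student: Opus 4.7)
I would start from the convolution representation \eqref{TSdensity_conv} and insert the single-sided MB representation from \propref{prop:density-Mellin-TS} for both factors $f_{\TS_+}(x+y;\alpha_+,\beta_+,\lambda_+)$ and $f_{\TS_+}(y;\alpha_-,\beta_-,\lambda_-)$. The exponential prefactors collapse to $e^{\gamma-\lambda_+ x}/(\beta_+\beta_-)$ at the front (the piece $e^{-\lambda_+ y}e^{-\lambda_- y}=e^{-\ubar\lambda y}$ being left inside the $y$-integrand), and after Fubini I obtain a double MB integral in $(s_1,s_2)$ whose inner integral is
\[
J(s_1,s_2;x):=\int_0^\infty e^{-\ubar\lambda y}(x+y)^{-s_1-1}y^{-s_2-1}\D y.
\]

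The key step is to evaluate $J$ via the beta-type MB identity
\[
(1+z)^{-a}=\frac{1}{\Gamma(a)}\int_{c'+\mi\R}\Gamma(-s')\Gamma(a+s')z^{s'}\frac{\D s'}{\mi 2\pi},\qquad -\Re a<c'<0,
\]
with $a=s_1+1$ and $z=y/x>0$, splitting $(x+y)^{-s_1-1}=x^{-s_1-1}(1+y/x)^{-s_1-1}$ so as to introduce a third Mellin variable $s_3'$ and completely decouple the $y$-dependence. After a second Fubini, the residual $y$-integral reduces to the elementary Gamma integral $\int_0^\infty e^{-\ubar\lambda y}y^{s_3'-s_2-1}\D y=\Gamma(s_3'-s_2)\ubar\lambda^{\,s_2-s_3'}$, valid for $\Re(s_3')>\Re(s_2)$.

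I would then symmetrize the three-dimensional MB integral thus obtained by the affine change of variable $s_3=s_3'+(s_1-s_2)/2$. Direct substitution shows that the three non-trivial Gamma factors become $\Gamma(-s_1/2-s_2/2+s_3)$, $\Gamma(s_1/2+s_2/2+s_3+1)$ and $\Gamma(s_1/2-s_2/2-s_3)$, which are precisely the first three rows of $U_\TS\bs+\bu_\TS$; the two factors $\Gamma(-s_1/\beta_+),\Gamma(-s_2/\beta_-)$ inherited from \propref{prop:density-Mellin-TS} fill in rows 4 and 5, while the denominator $\Gamma(s_1+1)\Gamma(-s_1)\Gamma(-s_2)$ matches $D_\TS\bs_{\leqslant 2}+\bd_\TS$. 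The exponents of $x$ and $\ubar\lambda$ combine into $x^{-1-(s_1+s_2)/2-s_3}\ubar\lambda^{(s_1+s_2)/2-s_3}$, and \eqref{eq:density-Mellin-GTS} follows.

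The main technical obstacle will be the analytic bookkeeping required to justify the two Fubini exchanges and to pin down the fundamental strip. Before the change of variable, the constraints are $c_1,c_2<0$ (from \propref{prop:density-Mellin-TS}), $-c_1-1<c_3'<0$ (from the binomial identity) and $c_3'>c_2$ (from the Gamma integral); translated through $s_3'=s_3-(s_1-s_2)/2$ they become, respectively, the positivity of rows 4--5, rows 2--3 and row 1 of $U_\TS\bs+\bu_\TS$. Checking consistency of this system of inequalities, together with the Stirling-type decay of the integrand in each imaginary direction, delivers the announced fundamental strip $\{U_\TS\bs+\bu_\TS\succ 0\}$ as the correct domain of validity.
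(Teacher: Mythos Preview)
Your argument is correct and reaches \eqref{eq:density-Mellin-GTS} by a genuinely different route than the paper. Both proofs begin identically—inserting \propref{prop:density-Mellin-TS} into the convolution \eqref{TSdensity_conv} and reducing to the inner integral $J(s_1,s_2;x)=\int_0^\infty e^{-\ubar\lambda y}(x+y)^{-s_1-1}y^{-s_2-1}\D y$—but they diverge on how to handle $J$. The paper evaluates $J$ in closed form as a Whittaker function via \cite[eq.~9.222.2]{gradshteyn2014table} and then invokes the Mellin--Barnes representation \cite[eq.~13.16.11]{DLMF} of $W_{\kappa,\mu}$ to produce the third integration variable. You instead bypass the Whittaker layer entirely: you open up $(x+y)^{-s_1-1}$ with the beta MB kernel, which decouples the $y$-dependence and reduces the remaining $y$-integral to an elementary Gamma evaluation, and then the affine shift $s_3'=s_3-(s_1-s_2)/2$ lands directly on the rows of $U_\TS$.

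The trade-off is that the paper's approach leans on well-tabulated identities and makes the intermediate object ($W_{\kappa,\mu}$) recognisable, which can be useful if one later wants asymptotics or connection formulas; your approach is more self-contained and elementary, needing only the binomial MB and the Euler integral, and it makes the provenance of each row of $U_\TS$ (and of the strip $\{U_\TS\bs+\bu_\TS\succ 0\}$) completely transparent. One small point: your binomial identity implicitly requires $\Re(s_1+1)>0$, i.e.\ $c_1>-1$, in addition to $c_1<0$; this is harmless since the final strip is nonempty under that restriction and the representation then extends to the full polytope by the usual contour-shifting argument.
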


\begin{proof}
    From \eqref{TSdensity_conv} and proposition \ref{prop:density-Mellin-TS}, we have for $x\in(0,+\infty)$:
    % \begin{multline}
    %     f_\TS(x)= \frac{e^{\gamma}}{\beta_+\beta_-}
    %     \int_0^\infty\int_{\ubar{c}+\mi \R }
    %     \left(
    %     e^{-\lambda_+(x+y) - \lambda_- y} \frac{\Gamma(-s_1/\beta_+)\Gamma(-s_2/\beta_-)a_+^{s_1/\beta_+}a_-^{s_2/\beta_-}}{\Gamma(-s_1)\Gamma(-s_2)}\right.\\
    %     \times \left.(x+y)^{-s_1-1}y^{-s_2-1}\right)\frac{\D s_1}{\mi 2 \pi}\frac{\D s_2}{\mi 2 \pi} \D y
    % \end{multline}
    \begin{multline}
        f_\TS(x)= \frac{e^{\gamma}}{\beta_+\beta_-}
        \int_0^\infty\int\limits_{\bm{c}+\mi \R^2 }
        \left(
        e^{-\lambda_+(x+y) - \lambda_- y} 
        \GP{U_1\bs}{ D_1\bs}\right.\\
        \times \left.a_+^{\frac{s_1}{\beta_+}}a_-^{\frac{s_2}{\beta_-}}(x+y)^{-s_1-1}y^{-s_2-1}\right)\frac{\D \bs}{(\mi 2 \pi)^2} \D y
    \end{multline}
    where $U_1 = -\mathrm{diag}((\beta_+^{-1},\beta_-^{-1}))$, 
    $D_1 = -\mathrm{Id}_2$ and $\bm{c}\in (-\infty, 0)^2$.
    Changing the order of integration
    % by Fubini's theorem
    yields:
    \begin{equation}
         \label{eq:whit-intermediaire-1}
         f_\TS(x)
        =\frac{e^{\gamma-\lambda_+ x}}{\beta_+\beta_-}
        \int\limits_{\bm{c}+\mi \R^2 }
        \left(
        \int_0^\infty
        e^{-\ubar{\lambda}y} 
        (x+y)^{-s_1-1}y^{-s_2-1}\D y\right)
        % g\left(\bm{s}\right)
        \GP{U_1\bs}{D_1\bs}a_+^{\frac{s_1}{\beta_+}}a_-^{\frac{s_2}{\beta_-}}
        \frac{\D \bs}{(\mi 2 \pi)^2}.
    \end{equation}
    % where $g\left(\bs\right) 
    %         :=
    %         \GP{U_1\bs}{D_1\bs}a_+^{\frac{s_1}{\beta_+}}a_-^{\frac{s_2}{\beta_-}}.$ 
    Since $\Real\left(-(1+s_1+s_2)/2-(s_2-s_1)/2\right) > -1/2$ on the integration subset and $x>0$, the identity \cite[eq 9.222.2]{gradshteyn2014table} holds and the integral over $y$ 
    can be written in terms of the Whittaker function:
    \begin{multline}
         \label{eq:whit-intermediaire-2}
        \int_0^\infty
        e^{-\ubar{\lambda}y} 
        (x+y)^{-s_1-1}y^{-s_2-1}\D y\\
        = 
        \Gamma(-s_2)\ubar{\lambda}^{\frac{s_1+s_2}{2}}x^{-\frac{s_1+s_2}{2}-1}
        e^{\frac{\ubar{\lambda}x}{2}}W_{\left((s_2-s_1)/2, -(1+s_1+s_2)/2\right)}\left(\ubar{\lambda} x\right) 
        .
    \end{multline}
    % where  (existence condition on Whittaker function).
% We have then:
% \begin{multline}
%     \label{eq:ts-mb}
%         f_\TS(x) = \frac{e^{\gamma-\lambda_+ x}}{\beta_+\beta_-}
%         \int_{\bm{c}+\mi \R^2}
%         \left(
%         \ubar{\lambda}^{(s_1+s_2)/2}x^{-(s_1+s_2)/2-1}\Gamma(-s_2)
%          e^{\ubar{\lambda}x/2}\right.\\
%         \left.\times
%         W_{\left((s_2-s_1)/2, -(1+s_1+s_2)/2\right)}(\ubar{\lambda x})
%         g\left(\bs\right)\right)
%         \frac{\D \bs}{(\mi 2 \pi)^2}.
%     \end{multline}
Using the Mellin-Barnes representation \cite[eq. 13.16.11]{DLMF} for the Whittaker function yields:
\begin{equation}
    \label{eq:whit-mb}
         e^{\ubar{\lambda}x/2}
        W_{\left((s_2-s_1)/2, -(1+s_1+s_2)/2\right)} (\ubar{\lambda}x)=
        \int\limits_{c_3 + \mi \R}
        \GP{U_2\bs +\bu_2}{ D_2\bs_{\leqslant 2} +\bd_2}
        (\ubar{\lambda}x)^{-s_3}\frac{\D s_3}{\mi 2 \pi}.
\end{equation}
where now $\bs:=(s_1,s_2,s_3)$, and
\begin{equation}
        U_2 :=
        \begin{pmatrix}
            -1/2 & -1/2 & 1\\
            1/2 &   1/2 & 1\\
             1/2 & -1/2 & -1 
        \end{pmatrix},
        \quad
        \bm{u}_2:=
        \begin{pmatrix}
            0\\
            1\\
            0
        \end{pmatrix},
        \quad
        D_2 := 
        \begin{pmatrix}
            1 &0 \\
            0 &-1
        \end{pmatrix}
        ~\mathrm{and}~
        \bm{d}_2
        :=
        \begin{pmatrix}
            1\\
            0
        \end{pmatrix}.
    \end{equation}
% The density is then:
% \begin{multline}
%     f_\TS(x) = \frac{e^{a_+\lambda_+^{\beta_+}+a_-\lambda_-^{\beta_-}-\lambda_+ x}}{\beta_+\beta_-} \times \\
%     \int_{\ubar{c}+\mi \R } \frac{\Gamma\left(-\frac{\bs_1+\bs_2}{2}+s_3\right)\Gamma\left(1+\frac{\bs_1+\bs_2}{2}+s_3\right)\Gamma\left(\frac{\bs_1-\bs_2}{2}-s_3\right)\Gamma(-\bs_1/\beta_+)\Gamma(-\bs_2/\beta_-)}{\Gamma\left(1+\bs_1\right)\Gamma\left(-\bs_1\right)\Gamma\left(-\bs_2\right)}\\
%     \times  \ubar{\lambda}^{(\bs_1+\bs_2)/2-s3}x^{-1-\frac{\bs_1+\bs_2}{2}-s_3}a_+^{\bs_1/\beta_+}a_-^{\bs_2/\beta_-}\frac{\D \bs_1}{\mi 2 \pi}\frac{\D \bs_2}{\mi 2 \pi}\frac{\D s_3}{\mi 2 \pi}\\
% \end{multline}
and
$\bm{c}\in\{\bm{x}\in\R^3|~U_2 \bm{x}+\bm{u}_2\succ 0\}$. Replacing successively \eqref{eq:whit-mb} in \eqref{eq:whit-intermediaire-2} and then in  \eqref{eq:whit-intermediaire-1} leads to the desired result.
\end{proof}

%%%%%%%%%%%%%%%%%%%%%%%%%%%%
\subsection{Series expansion of TS densities}
%%%%%%%%%%%%%%%%%%%%%%%%%%%%

Thanks to the MB representations that we have established in propositions \ref{prop:density-Mellin-TS} and \ref{prop:density-Mellin-GTS}, 
we will now be able to derive series representations for the single and double-sided TS distributions using residue calculus. We start with the one-sided case, for which we are able to recover a more compact version of the expression already obtained in \cite{gupta2021densitiesinversetemperedstable}.

\begin{proposition}\label{prop:series_density_TS}
    Let $f_{\TS_+}$ be the density function of the $\TS_+(\alpha, \beta, \lambda)$ distribution, we have the expansion: 
    \begin{equation}
        \forall x\in(0,+\infty),\quad f_{\TS_+}(x) = e^{-\lambda x + a \lambda^\beta} \sum_{k=1}^\infty \frac{(-a)^k x^{-k\beta-1}}{ k!\Gamma(-k\beta)}
    \end{equation}
    where $a$ is defined in proposition \ref{prop:density-Mellin-TS}.
\end{proposition}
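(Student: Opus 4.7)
My plan is to evaluate the Mellin-Barnes integral \eqref{eq:density-Mellin-TS} by residue calculus, closing the contour to the right of the vertical line $\Real(s) = c$. Since $c<0$ and $1/\Gamma(-s)$ is entire, the enclosed singularities are precisely the poles of $\Gamma(-s/\beta)$, which are located at $s = k\beta$ for $k \in \N$. The point $s = 0$ is not a genuine singularity: the local expansions $\Gamma(-s/\beta) \sim -\beta/s$ and $1/\Gamma(-s) \sim -s$ compensate, so the integrand is regular there and contributes nothing.

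For $k \geq 1$, assuming first that $k\beta$ is not a positive integer (so $s = k\beta$ is a simple pole), the identity $\mathrm{Res}_{z=-k}\Gamma(z) = (-1)^k/k!$ combined with the Jacobian factor $-\beta$ produced by the change of variable $z = -s/\beta$ yields
\begin{equation*}
    \mathrm{Res}_{s=k\beta}\left[\frac{\Gamma(-s/\beta)}{\Gamma(-s)}\,a^{s/\beta}\,x^{-s-1}\right] = \frac{\beta(-1)^{k+1}}{k!\,\Gamma(-k\beta)}\,a^k\,x^{-k\beta-1}.
\end{equation*}
The exceptional values where $k\beta$ happens to be a positive integer add nothing new: at such a point, the simple pole of $\Gamma(-s/\beta)$ is exactly compensated by the simple zero of $1/\Gamma(-s)$, and the corresponding term of the announced series vanishes accordingly, since $1/\Gamma(-k\beta) = 0$ there.

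It remains to sum and to validate the contour argument. The clockwise orientation induced by closing to the right produces a global minus sign, which merges with $(-1)^{k+1}a^k$ to give $(-a)^k$; the prefactor $e^{-\lambda x + a\lambda^\beta}/\beta$ then cancels the residual $\beta$ and delivers the stated expansion. The only non-routine step is to check that the horizontal and right-vertical portions of the closing rectangles give vanishing contributions. Using the reflection formula to write
\begin{equation*}
    \frac{\Gamma(-s/\beta)}{\Gamma(-s)} \;=\; \frac{\sin(\pi s)}{\sin(\pi s/\beta)}\,\frac{\Gamma(1+s)}{\Gamma(1+s/\beta)},
\end{equation*}
Stirling's formula shows that the last ratio behaves as $\exp\bigl[(1 - 1/\beta)\, s \log s + O(s)\bigr]$, whose real part tends to $-\infty$ as $\Real(s) \to +\infty$ because $\beta \in (0,1)$. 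Selecting the horizontal sides of the rectangles to pass between consecutive poles of $\sin(\pi s)$ and $\sin(\pi s/\beta)$ keeps the trigonometric factor bounded, so this super-exponential decay dominates $|x^{-s-1}|$ for any fixed $x > 0$ and simultaneously justifies both the contour closure and the absolute convergence of the residue series.
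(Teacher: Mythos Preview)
Your proof is correct and follows essentially the same approach as the paper: both evaluate the Mellin--Barnes integral \eqref{eq:density-Mellin-TS} by summing residues in the right half-plane, the direction being dictated by $\beta\in(0,1)$. The paper simply invokes the characteristic quantity $\Delta = 1-\beta^{-1}<0$ and the one-dimensional Jordan lemma from \cite{passare1994multidimensional,zhdanov1998} to justify closing to the right, whereas you work out the contour-closure estimate explicitly via the reflection formula and Stirling; your additional remarks on the removable singularity at $s=0$ and on the case $k\beta\in\mathbb{N}$ are welcome refinements absent from the paper's terse argument.
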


\begin{proof}
    Using the formalism of \cite{passare1994multidimensional,zhdanov1998}, we can write the characteristic quantity associated to the MB integral \eqref{eq:density-Mellin-TS} as $\Delta = 1-\beta^{-1}$ and $c<0$. Following the Jordan lemma for one-dimensional MB integrals (see \cite{passare1994multidimensional}), we can express the integral \eqref{eq:density-Mellin-TS} as the sum of residues associated to the singularities of the integrand located in the half-plane $\{ s\in \mathbb{C},~ \Re[\Delta s] < \Delta c\}$. Since $\beta\in(0,1)$, we have $\Delta < 0$ and therefore \eqref{eq:density-Mellin-TS} is equal to the sum of the residues associated to the singularities of the integrand located in the right half-plane. They are inducted by the $\Gamma(-s/\beta)$ function, are located at $\{k\beta|~ k\in\mathbb{N}\}$ and are equal to $\beta(-1)^k/k!$. Summing all residues completes the proof.

    % only the poles located on the positive semi-axis contribute to the series expansion. More precisely, the expansion is generated by the poles located in $s=n\beta$ and the result immediately follows.
    % Using residue calculus for Mellin-Barnes integrals, the integral in \eqref{eq:density-Mellin-TS} can be expressed as a series generated by the poles located in $s=n\beta$ for $n\in\N$. 
    % Remembering that $\mathrm{Res}(\Gamma,-n) = (-1)^n/\Gamma(-n)$, the result immediately follows.  
    % Using the residue theorem, the proof is straightforward. 
    % Note that the same formula has already been found in REF with a slightly different approach.
    % \red{A justifier (Tsikh?)}
\end{proof}

Let us now extend proposition \ref{prop:series_density_TS} to the more general double-sided TS case. To that extent, and whenever double-sided TS distributions are considered throughout the rest of the paper, we will impose the additional condition $(\beta_+,\beta_-)\in ((0,1)\cap(\R\backslash\Q))^2$. The irrationality of the $\beta_\pm$ parameters is not mandatory in theory, but it allows to avoid multiple poles in the arising MB integrals, and thus allows for simpler expressions for the residue series. Furthermore, we note that, with the pointwise convergence established in proposition \ref{prop:pointwise-convergence}, this condition appears to be absolutely not restrictive thanks to the density of $\Q$ in $\R$.

\begin{proposition}\label{prop:series_density_GTS}
    Let $f_\TS$ be the density of the  $\TS(\alpha_+,\beta_+,\lambda_+,\alpha_-,\beta_-,\lambda_-)$ distribution, we have the expansion: 
    \begin{equation}
    \label{eq:serie-gts-density}
    \forall x\in(0,+\infty),\quad
        f_\TS(x) = 
        e^{\gamma-\lambda_+ x} 
        \sum_{\bm{n}\in\N^3}
        \frac{(-1)^{n_{1}+n_2+n_3}a_+^{n_2}a_-^{n_3}}{n_1!n_2!n_3!}
        % \times
        \left(
        d_{\bm{n}}^{(1)}(x) +d_{\bm{n}}^{(2)}(x)\right)
    \end{equation}
    where the coefficients $d_{\bm{n}}^{(1)}(x) $ and $d_{\bm{n}}^{(2)}(x)$ are defined as:
    \begin{equation}
    \begin{cases}
    \displaystyle
        d_{\bm{n}}^{(1)}(x)  := \frac{\Gamma(1-n_1+\beta_+n_2+\beta_- n_3)\Gamma(n_1-\beta_-n_3)}
        {\Gamma(1+\beta_+n_2)\Gamma(-\beta_+n_2)\Gamma(-\beta_-n_3)} 
        \ubar{\lambda}^{n_1} x^{-1+n_1-\beta_+ n_2-\beta_- n_3},\\
    \displaystyle
        d_{\bm{n}}^{(2)}(x) :=
        \frac{\Gamma(-1-n_1-\beta_+n_2-\beta_-n_3)\Gamma(1+n_1+\beta_+n_2)}
        {\Gamma(-\beta_+n_2)\Gamma(1+\beta_+n_2)\Gamma(-\beta_-n_3)} 
        \ubar{\lambda}^{1+n_1+\beta_+n_2+\beta_- n_3} x^{n_1}.
    \end{cases}
    \end{equation}
\end{proposition}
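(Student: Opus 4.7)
The approach is to apply residue calculus to the three-dimensional Mellin-Barnes integral provided by \propref{prop:density-Mellin-GTS}. The irrationality assumption $\beta_\pm \in (0,1)\cap(\R\setminus\Q)$ is used crucially: it guarantees that the poles of $\Gamma(-s_1/\beta_+)$ at $s_1 = n_2\beta_+$ for $n_2\in\N$, and of $\Gamma(-s_2/\beta_-)$ at $s_2 = n_3\beta_-$ for $n_3\in\N$, are all simple and never coincide with the poles of the remaining three numerator gamma functions (which sit on lattices whose coefficients are $\pm 1/2$) nor with zeros of the denominator $\Gamma(1+s_1)\Gamma(-s_1)\Gamma(-s_2)$. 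Consequently each residue reduces to a direct evaluation of the remaining factors at the pole.

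I would collect residues iteratively. First, close the $s_1$-contour to pick up the residues of $\Gamma(-s_1/\beta_+)$, each equal to $-\beta_+(-1)^{n_2}/n_2!$, and then do the same for $s_2$ using $\Gamma(-s_2/\beta_-)$, which gives $-\beta_-(-1)^{n_3}/n_3!$. Their product exactly cancels the overall prefactor $1/(\beta_+\beta_-)$ (up to the sign $(-1)^{n_2+n_3}$). After substituting $s_1\leftarrow n_2\beta_+$ and $s_2\leftarrow n_3\beta_-$, the remaining $s_3$-integrand is the one-dimensional Mellin-Barnes integral of $\Gamma(-A+s_3)\,\Gamma(1+A+s_3)\,\Gamma(B-s_3)\,(\ubar{\lambda} x)^{-s_3}$, where $A:=(n_2\beta_++n_3\beta_-)/2$ and $B:=(n_2\beta_+-n_3\beta_-)/2$, divided by the $s_3$-independent factor $\Gamma(1+n_2\beta_+)\,\Gamma(-n_2\beta_+)\,\Gamma(-n_3\beta_-)$. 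Closing this contour to the left captures two families of simple residues: those of $\Gamma(-A+s_3)$ at $s_3=A-n_1$, which upon evaluation of the remaining factors yield exactly the coefficients $d^{(1)}_{\bm n}(x)$, and those of $\Gamma(1+A+s_3)$ at $s_3=-1-A-n_1$, which yield $d^{(2)}_{\bm n}(x)$; the poles of $\Gamma(B-s_3)$ lie to the right of the contour and do not contribute. Summing the residues and assembling the sign and factorial factors into $(-1)^{n_1+n_2+n_3}/(n_1!\,n_2!\,n_3!)$ produces \eqref{eq:serie-gts-density}.

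The main obstacle is the rigorous justification of the three iterated contour closures. Stirling's asymptotic formula provides strong vertical decay (the three gamma functions in $s_3$ together yield decay of order $e^{-3\pi|t|/2}$), but the contribution from the horizontal semicircular arc requires a careful analysis of the characteristic quantity in the sense of \cite{passare1994multidimensional,zhdanov1998}, whose sign determines the admissible closure direction in each cone. In the $s_1$- and $s_2$-directions, the hypothesis $\beta_\pm\in(0,1)$ (so $1/\beta_\pm>1$) makes these characteristic quantities have the same sign as in the one-sided case of \propref{prop:series_density_TS}, while in the $s_3$-direction the product of three gammas with coefficients $\pm 1$ combined with the power $(\ubar{\lambda} x)^{-s_3}$ permits left-closure. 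Once these three closures are established, the remaining algebraic simplification, in particular identifying the gamma factors $\Gamma(1-n_1+\beta_+n_2+\beta_-n_3)$, $\Gamma(n_1-\beta_-n_3)$ for $d^{(1)}_{\bm n}$ and $\Gamma(-1-n_1-\beta_+n_2-\beta_-n_3)$, $\Gamma(1+n_1+\beta_+n_2)$ for $d^{(2)}_{\bm n}$, is routine.
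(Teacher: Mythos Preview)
Your proposal lands on the correct two pole families and the correct residues, so the destination is right; the difficulty is in the route. The iterated one-variable closures in the order $s_1\to s_2\to s_3$ are not as clean as you suggest. When you close the $s_1$-contour to the right (indeed the direction dictated by $\Delta_1=1/2-\beta_+^{-1}<0$), you do not only meet the poles of $\Gamma(-s_1/\beta_+)$: the factor $\Gamma(-s_1/2-s_2/2+s_3)$ (row~1 of $U_\TS$) also has its $s_1$-poles to the right of $c_1$ for $s_2,s_3$ on their original lines, because the polyhedron constraint $-c_1/2-c_2/2+c_3>0$ forces $2c_3-c_2>c_1$. Likewise, after the substitutions $s_1\leftarrow n_2\beta_+$, $s_2\leftarrow n_3\beta_-$, your row-3 factor $\Gamma(B-s_3)$ has poles at $s_3=B+n$ with $B=(n_2\beta_+-n_3\beta_-)/2$, and these are not all to the right of $c_3$ once $n_3$ is large; the fixed line $\Re s_3=c_3$ no longer separates the ascending and descending pole sequences. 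These are bookkeeping issues rather than fatal ones, but they mean you must either deform the contours after each residue extraction, or change the order of closure (doing $s_3$ first, then $s_1,s_2$, avoids the first problem), or appeal to the multidimensional lemma directly.

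The paper takes the last route: it computes the characteristic vector $\bm{\Delta}=(1/2-\beta_+^{-1},\,1/2-\beta_-^{-1},\,1)$ and applies the multidimensional Jordan lemma of \cite{passare1994multidimensional,zhdanov1998} in $\C^3$, which singles out the admissible pole families as the solutions of $(U_\TS\bm{s}+\bm{u}_\TS)_{\{1,4,5\}}=-\bm{n}$ and $(U_\TS\bm{s}+\bm{u}_\TS)_{\{2,4,5\}}=-\bm{n}$; the candidate $\{3,4,5\}$ is rejected because its $n_1$-direction has positive inner product with $\bm{\Delta}$. This delivers in one stroke exactly the two sequences you describe, without the contour-crossing bookkeeping your iterative scheme would still have to resolve.
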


\begin{proof}
    Using the formalism of \cite{passare1994multidimensional, zhdanov1998} in the $\mathbb{C}^3$ case, we denote the characteristic vector associated to the MB integral \eqref{eq:density-Mellin-GTS} by $\bm{\Delta}=(1/2-\beta_+^{-1}, 1/2-\beta_-^{-1},1)$. The multidimensional Jordan residue lemma \cite{passare1994multidimensional} allows to express \eqref{eq:density-Mellin-GTS} as a sum of residues associated to the poles of the integrand located in the subspace $\{ s\in\mathbb{C}^3|~ \Re [\langle \bm{\Delta} , s \rangle]  <  \langle \Delta, c \rangle \}$, where $\langle\cdot,\cdot\rangle$ denotes the Euclidean inner product.
    % and $\bm{c}$ belonging to the polyhedron generated by the positivity constraint given in proposition \ref{prop:density-Mellin-GTS}. 
    In this subspace, there are two sets of poles, that are given by the solutions of the two linear systems
    % that are the solution There are two sets of admissible poles that generate the series expansion.
    % These sets of poles are solutions, for $\bm{n\in\N^3}$, of the two systems:
    \begin{equation}
    % \begin{enumerate}
    \begin{cases}
        (U_\TS \bm{s} + \bm{u}_\TS)_{\{1,4,5\}} = -\bm{n},\\
        (U_\TS \bm{s} + \bm{u}_\TS)_{\{2,4,5\}} = -\bm{n},
    \end{cases}
    % \end{enumerate}
    \end{equation}
    where $U_\TS$ and $\bm{u}_\TS$ are defined in \ref{prop:density-Mellin-GTS} and where $\bm{n\in\N^3}$.
    These systems have unique solutions, respectively given by:
    \begin{equation}
        \begin{cases}
            \bm{s}=(\beta_+n_2,\beta_-n_3,-n_1+\beta_+n_2+\beta_-n_3),\\
            \bm{s}=(\beta_+n_2,\beta_-n_3,-1-n_1-\beta_+n_2-\beta_-n_3).
        \end{cases}
    \end{equation}
    Computing and the summing the residues at these locations (recall that for $\ell\in\N$, the residue of $\Gamma(z)$ in $z=-\ell$ is equal to $(-1)^\ell/\ell!$) yields the series \eqref{eq:serie-gts-density}.
    % {\color{blue}
    % \red{TO DO, using that the pole 0,3,4 and 1,3,4.}
    % \red{pas de problème pcq vaut 0}
    % We use two sets of poles. The coefficients $d_{\bm{n}}^{(1)}$ come from the poles of the gamma function corresponding to the first, fourth and fifth components of the vector $U\bm{s}+\bm{u}$ set to zero. In other words, the poles are found by solving the linear system:
    % \begin{equation}
    %     \begin{cases}
    %         -s_1/2-s_2/2+s_3 = -n_1,\\
    %         -\beta_+^{-1}s_1 = -n_2,\\
    %         -\beta_-^{-1}s_2 = -n_3.
    %     \end{cases}
    % \end{equation}
    % The poles are then located in $\bm{s} = (\beta_+n_2,\beta_-n_3,-n_1+\beta_+n_2/2+\beta_-n_3/2)$.
    % }
    % The coefficients $c_{\bm{n}}^{(2)}$ are obtained by solving the linear system given by equating the second, fourth and fifth components of the vector $U\bm{s}+\bm{u}$  to 0.
    % \red{TODO: checker tsikh, vecteur delta etc...}
\end{proof}

In figure \ref{fig:ts-density-series}, we compare the series expansion from \ref{prop:series_density_GTS} (truncated at $\bm{n}=(60,60,60)$  with the numerical Fourier inversion of the TS characteristic function \eqref{eq:char_TS} with $t=1$; we can observe that both approaches display excellent agreement.

% An example comparing a tempered distribution via Fourier inversion and with our series expansion is shown in Figure \ref{fig:ts-density-series}.

\begin{figure}
    \centering
    \includegraphics[width=0.7\linewidth]{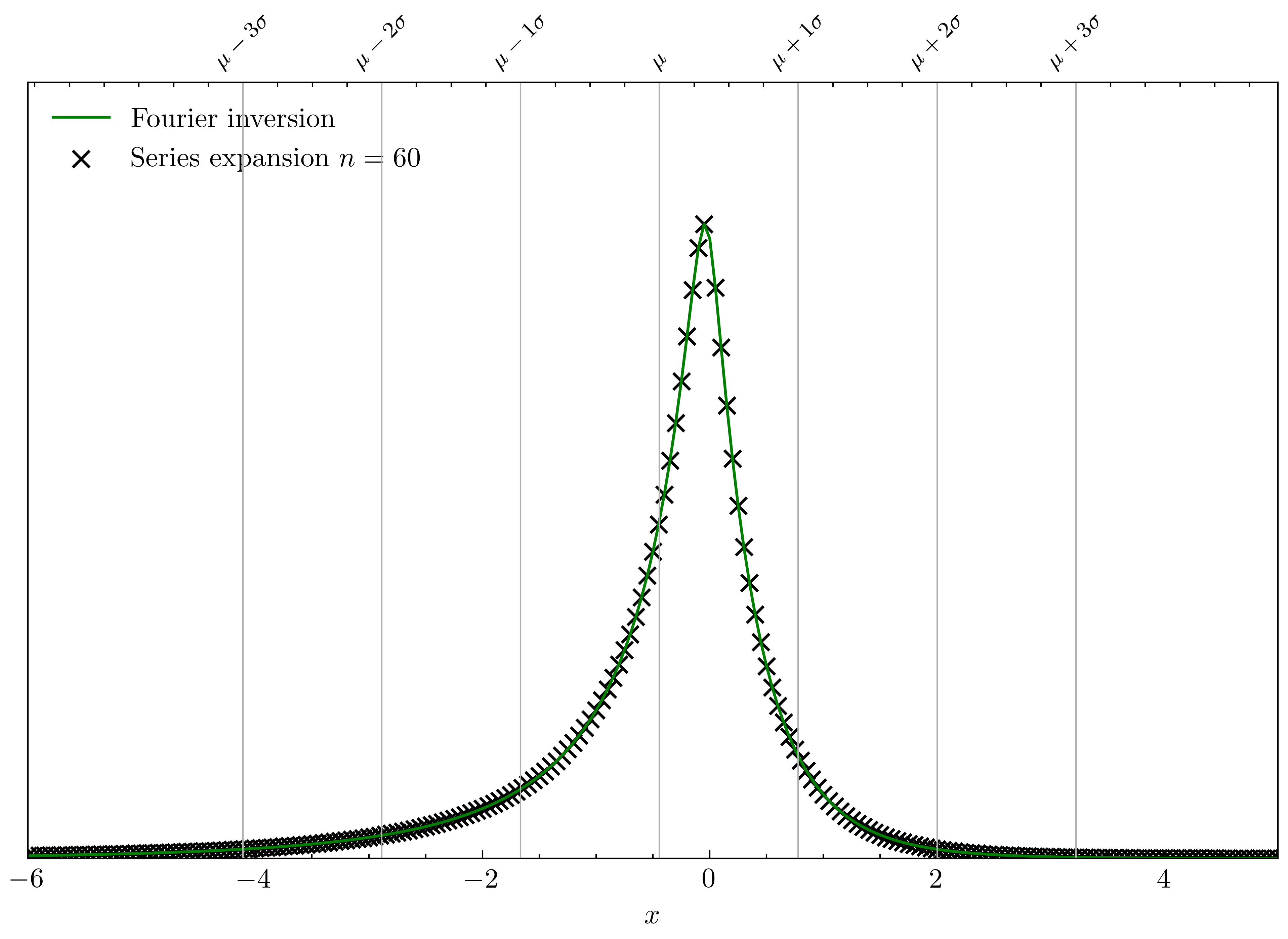}
    \caption{ Series expansion \eqref{eq:serie-gts-density} of the double-sided TS density truncated at  $\bm{n}=(60,60,60)$ \textit{vs.} numerical Fourier inversion of the characteristic function \eqref{eq:char_TS} for $t=1$. TS parameters from \eqref{eq:numerical-params}.}
    \label{fig:ts-density-series}
\end{figure}

\begin{remark}
Proposition \ref{prop:series_density_GTS} is valid for $x>0$, however it can easily be extended to the negative semi-axis thanks to the symmetry relation
\begin{equation}
    \label{eq:dens-symetry}
    \forall x\in (-\infty, 0),\quad f_{\TS(\alpha_+,\beta_+,\lambda_+,\alpha_-,\beta_-,\lambda_-)}(x) = f_{\TS(\alpha_-,\beta_-,\lambda_-,\alpha_+,\beta_+,\lambda_+)}(-x)
\end{equation}
that follows immediately from the characteristic function \eqref{eq:char_TS}.
\end{remark}

\section{Option pricing under TS processes}
\label{sec:opt-pricing-ts-processes}

%%%%%%%%%%%%%%%%%%%%%%%%%%%%%%%%%%
%%%%%%%%%%%%%%%%%%%%%%%%%%%%%%%%%%
% \red{noter les valeurs d'option $C(T,K)$ au lieu de $C$ ?, explication irrationnels}

TS distributions have already been 
widely used in financial modeling context (see for example \cite{poirot2006monte,KuchlerTS,bianchi, xia2024pricing}). The availability of the characteristic function in closed form allows to use Fourier pricing techniques such as Lewis \cite{Lewis01} or Carr-Madan \cite{Carr99}, as well as their many refinements such as Hilbert \cite{phelan2019hilbert}, COS \cite{fang2009novel} and PROJ \cite{kirkby2015efficient}, the latter being known to be particularly fast and efficient. All of these methods feature one (or several) hyperparameters to be optimally selected before any pricing procedure, a drawback that is of course undesirable in practice. In this section, we will therefore focus on providing hyperparameter free TS pricing for digital and European options based on residue series expansion; to that end, we will first derive MB representations for the option prices and then compute them use residue calculus. 
% In particular, we will see that the case of finally show that our method can be applied to bilateral Gamma and one-sided tempered distributions to obtain very compact pricing formula.

%%%%%%%%%%%
\subsection{Option pricing framework}
%%%%%%%%%%%%
Let us consider a filtered probability space $(\Omega, \mathcal{F}, \left(\mathcal{F}_t\right)_{t\geqslant 0}, \mathbb{P})$ where $\left(\mathcal{F}_t\right)_{t\geqslant 0}$ 
denotes the natural filtration generated by the TS process $ \left(X_t\right)_{t\geqslant 0}$. We assume that the price at time $t\geqslant 0$ of some financial asset can be written as:
\begin{equation}
\label{St_dynamics}
    % \forall t\geqslant 0,\quad 
    S_t := S_0e^{(r-q+\zeta)t + X_t}
    ,
\end{equation}
where $S_0>0$, $r$ and $q$ are respectively the risk-free rate and the dividend yield (both assumed to be continuously compounded) and $\zeta$ is the martingale adjustment (or compensator) chosen in such a way that the discounted process $(e^{-(r-q)t}S_t)_{t\geqslant 0}$ is a martingale. As we know that, for any real valued L\'evy process $Y$, $e^Y$ is a martingale if and only if $\mathbb E \left[e^{Y_1}\right]=1$ (see \cite[Prop. 3.17]{Cont04}), it is easy to see from \eqref{eq:char_TS} that the discounted price process is a martingale if and only if $\lambda_+ > 1$ (a condition that we will require throughout the rest of the paper) and 
\begin{equation}
\label{eq:def-zeta}
    \zeta = 
    -\ln \varphi_\TS (-\mi)
    =
    a_+ ( (\lambda_+-1)^{\beta_+} - \lambda_+^{\beta_+} ) 
    +
    a_- ( (\lambda_-+1)^{\beta_-} - \lambda_-^{\beta_-} )
    .
\end{equation}
Last, we recall that at time $t\geqslant 0$, the price $C_t$ of a contingent claim delivering a (integrable) payoff $\mathcal{P}(S_T)$ at maturity $T\geqslant t$ is given by the conditional expectation
\begin{equation}\label{pricing}
    C_t
    =
    \mathbb E \left[ e^{-r(T-t)} \mathcal{P} (S_T)|\mathcal{F}_t \right].
\end{equation}
% where $(\mathcal{F}_t)_{t\geqslant 0 }$ is the natural filtration of the price process $(S_t)_{t\geqslant 0 }$.
% given by $\zeta := -\ln \varphi_\TS (-\mi)$. It is straightforward that the discounted process $(e^{-(r-q)t}S_t)_{t\geqslant 0}$ is a martingale and can be used for pricing purposes.
% As proved in \cite{KuchlerTS}, it is necessary to have $\lambda_+>1$.

In the following, we will assume without loss of generality that we conduct pricing at $t=0$, and we will focus on the two most important path-independent claims, namely digital options (that can be either cash-or-nothing (CN) or asset-or-nothing (AN) options) and European (EUR) options. 
% For a sake of clarity, we now set $t=0$, the maturity $T\geqslant 0$ and the strike $K\geqslant 0$. 
The payoffs of the digital options of maturity $T$ and strike price $K$ are given by
\begin{equation}
\begin{cases}
    \mathcal{P}_{\AN}(S_T) = S_T \mathbbm{1}_{\{S_T>K\}},
    \\
    \mathcal{P}_{\CN}(S_T) = \mathbbm{1}_{\{S_T>K\}}
    ,
    % \mathcal{P}_{\EUR}(S_T) := (S_T-K)\mathbbm{1}_{S_T>K}.
\end{cases}
\end{equation}
and the payoff of a European option of maturity $T$ and strike $K$ is: 
\begin{equation}\label{eq:payoff_Eureopean}
    \mathcal{P}_{\EUR}(S_T)
    =
    \mathcal{P}_{\AN}(S_T) - K \mathcal{P}_{\CN}(S_T)
    =
    (S_T-K)\mathbbm{1}_{\{S_T>K\}}.
\end{equation}
% of maturity $T$ and strike $K$ call options, whose terminal payoffs are given by:

% \begin{equation}
% \begin{cases}
%     \mathcal{P}_{\AN}(S_T) := S_T \mathbbm{1}_{S_T>K},\\
%     \mathcal{P}_{\CN}(S_T) := \mathbbm{1}_{S_T>K},\\
%     \mathcal{P}_{\EUR}(S_T) := (S_T-K)\mathbbm{1}_{S_T>K}.
% \end{cases}
% \end{equation}
For the two digital options, it is easy to see that the prices \eqref{pricing} can be carried out via the following integrals:
\begin{equation}
    \label{eq:digital-option-def}
    \begin{cases}
        \displaystyle C_\CN = e^{-rT}\int_{-k}^\infty f_{X_T}(x) \D x,\\
        \displaystyle C_\AN = 
        Ke^{k-rT}\int_{-k}^\infty e^xf_{X_T}(x) \D x,
        %S_0e^{(\zeta-q)T}\int_{-k}^\infty e^xf_{X_T}(x) \D x,
    \end{cases}
\end{equation}
where $f_{X_T}$ is the density function of $X_T$ and $k:= \ln(S_0/K) + (r-q+\zeta)T$ is the log-forward moneyness. Our strategy is the following: we first derive MB integral representations for the two digital calls with $k<0$, then we express these integral as infinite series of residues. We extend the formulas for $k>0$ thanks to the symmetry relations of the TS density \eqref{eq:dens-symetry}, and, last, we get the European call price by difference using \eqref{eq:payoff_Eureopean}.
% One is finally able to get the put prices via the call-put parity.
% In the following, for a $\TS(\alpha_+,\beta_+,\lambda_+,\alpha_-,\beta_-,\lambda_-)$-dsitribution and $T>0$, we define the constant $a_\pm(T)$ by $a_\pm(T):=-\alpha_\pm T\Gamma(-\beta_\pm)$.

%%%%%%%%%%%
\subsection{Digital options}
%%%%%%%%%%%%

As mentioned before, we give the MB representations of digital options and derive their associated series expansions. 
%%%%%%%%%%%
\subsubsection{Mellin-Barnes integral representations}
 We start by deriving the MB representation for the CN option in proposition \ref{prop:CN-Mellin-GTS}; extension to the AN case will be handled in the subsequent proposition \ref{prop:AN-Mellin-GTS}. The matrices and vectors $U_{\TS}$, $u_{\TS}$, $D_{\TS}$, and $d_{\TS}$ are the ones defined earlier in proposition. \ref{prop:density-Mellin-GTS}
% The cash-or-nothing integral representation is first handled in proposition \ref{prop:CN-Mellin-GTS} and its immediate extension to asset-or-nothing option is given in proposition \ref{prop:AN-Mellin-GTS}.
% }
\begin{proposition}
    \label{prop:CN-Mellin-GTS}
    The value of a CN call with moneyness $k<0$ admits the MB representation:
    \begin{multline}
    \label{eq:CN-Mellin-GTS}
    C_\CN = 
    \frac{e^{(\gamma-r)
    T}}{\beta_+\beta_-}  
    \int\limits_{\bm{c}+\mi \R^4 }
    %\\
     \frac{\GP{U\bs + \bu}{ D_\TS\bs_{\leqslant 2}+ \bd_\TS}}{\frac{s_1+s_2}{2}+s_3+s_4}\\
     % \left(\frac{s_1+s_2}{2}+s_3+s_4\right)
    \times  
    \ubar{\lambda}^{(s_1+s_2)/2-s_3}
    \lambda_+^{-s_4}
    (a_+T)^{s_1/\beta_+}
    (a_-T)^{s_2/\beta_-}(-k)^{-\frac{s_1+s_2}{2}-s_3-s_4}
    \frac{\D \bs}{(\mi 2 \pi)^{4}}
\end{multline}
where:
\begin{equation}
        U :=
        % \begin{pmatrix}
        %  -1/2 & -1/2 & 1&0\\
        %  1/2 &  1/2 & 1 &0\\
        %  1/2  & -1/2 & -1&0 \\
        %  -\beta_+^{-1} &0 &0&0\\
        %  0 &-\beta_-^{-1} &0&0\\
        %  0 & 0 &0 &1
        % \end{pmatrix}, 
        \left(
        \begin{array}{c|c}
                 U_\TS & \bm{0}_4 \\
                 \hline
                 \bm{0}_4^\intercal & 1
        \end{array}
        \right),
        ~ 
        \bm{u} := 
        \left(
        \begin{array}{c}
                 \bm{u}_\TS \\
                 \hline
                 0
        \end{array}
        \right),
    \end{equation}
and $\bm{c}\in\{\bm{x}\in\R^4|~ U \bm{x} + \bm{u} \succ 0,~\langle\bm{x},(1/2,1/2,1,1)\rangle >0\}$.
\end{proposition}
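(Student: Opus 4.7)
The plan is to start from the integral expression \eqref{eq:digital-option-def} for $C_\CN$, substitute the MB representation of the density from \propref{prop:density-Mellin-GTS}, and introduce a fourth Mellin variable $s_4$ via an inverse Mellin representation of the exponential factor so as to convert the remaining real integral into an incomplete gamma function written in MB form.

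Concretely, since $k<0$, the integration region $(-k,\infty)$ lies in $\Rp$, and using \propref{prop:density-Mellin-GTS} (with $\alpha_\pm$ replaced by $\alpha_\pm T$, which turns $a_\pm$ into $a_\pm T$ and $\gamma$ into $\gamma T$) I can write
\[
C_\CN = \frac{e^{(\gamma-r)T}}{\beta_+\beta_-} \int\limits_{-k}^\infty e^{-\lambda_+ x} \int\limits_{\bm{c}+\mi\R^3} \GP{U_\TS\bs+\bu_\TS}{D_\TS\bs_{\leqslant 2}+\bd_\TS} \ubar{\lambda}^{(s_1+s_2)/2-s_3}(a_+T)^{s_1/\beta_+}(a_-T)^{s_2/\beta_-} x^{-1-\frac{s_1+s_2}{2}-s_3}\frac{\D\bs}{(\mi 2\pi)^3}\,\D x.
\]
After exchanging the two integration orders (justified by Fubini, using the absolute convergence of the MB integral on its fundamental strip together with the integrability of $e^{-\lambda_+ x}x^{-1-\alpha}$ on $(-k,\infty)$ under the condition $\Re(\alpha)>-1$ with $\alpha=(s_1+s_2)/2+s_3$), I am reduced to evaluating the one-dimensional integral
\[
I(\alpha) := \int_{-k}^\infty e^{-\lambda_+ x}\,x^{-1-\alpha}\,\D x.
\]

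The next step, which produces the fourth variable $s_4$, is to apply the standard Cahen--Mellin identity $e^{-\lambda_+ x} = \frac{1}{\mi 2\pi}\int_{c_4+\mi\R} \Gamma(s_4)\,(\lambda_+ x)^{-s_4}\,\D s_4$ valid for $c_4>0$ and $x>0$. Substituting and again swapping the integrations gives
\[
I(\alpha) = \int_{c_4+\mi\R} \Gamma(s_4)\,\lambda_+^{-s_4}\left(\int_{-k}^\infty x^{-1-\alpha-s_4}\,\D x\right)\frac{\D s_4}{\mi 2\pi} = \int_{c_4+\mi\R}\frac{\Gamma(s_4)\,\lambda_+^{-s_4}(-k)^{-\alpha-s_4}}{\alpha+s_4}\,\frac{\D s_4}{\mi 2\pi},
\]
where the inner real integral is the elementary $(-k)^{-\alpha-s_4}/(\alpha+s_4)$ as soon as $\Re(\alpha+s_4)>0$, i.e.\ as soon as $\langle\bm{c},(1/2,1/2,1,1)\rangle>0$, which is precisely the extra condition appearing in the statement. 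Plugging this back in and substituting $\alpha=(s_1+s_2)/2+s_3$ produces exactly the quoted $4$-dimensional MB integrand: the new factor $\Gamma(s_4)$ corresponds to the extra row $(0,0,0,1)$ with $u$-entry $0$ appended to $U_\TS$ and $\bm{u}_\TS$ (explaining the block structure of $U$ and $\bm{u}$), the $1/(\tfrac{s_1+s_2}{2}+s_3+s_4)$ term is the isolated linear denominator, and the extra monomial $\lambda_+^{-s_4}$ and modified power $(-k)^{-(s_1+s_2)/2-s_3-s_4}$ appear as in the statement.

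The main obstacle will be the twofold justification of exchanging integration orders: first the Fubini swap between the real $x$-integral and the triple MB integral (where decay in $y=\Im s_i$ is at best polynomial, but the $e^{-\lambda_+ x}$ factor ensures absolute integrability in $x$ uniformly on the contour), and second the swap used in the $s_4$-step, which requires choosing $c_4>0$ large enough that $\Re(\alpha+s_4)>0$ uniformly on the remaining contours $c_i+\mi\R$ and that the $\Gamma(s_4)$ factor dominates the growth of the other integrand pieces at infinity. Once these two Fubini applications are in place, the computation is essentially algebraic and directly yields \eqref{eq:CN-Mellin-GTS}.
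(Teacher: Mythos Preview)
Your proposal is correct and follows essentially the same route as the paper: both insert the density MB representation from \propref{prop:density-Mellin-GTS} into \eqref{eq:digital-option-def}, introduce a fourth Mellin variable via the Cahen--Mellin identity for $e^{-\lambda_+ x}$, and then evaluate the remaining elementary $x$-integral $\int_{-k}^\infty x^{-1-(s_1+s_2)/2-s_3-s_4}\,\D x$. The only difference is cosmetic (you swap the real and triple MB integrals \emph{before} applying Cahen, whereas the paper inserts the Cahen representation first and then swaps), and your more explicit discussion of the two Fubini justifications is a welcome addition that the paper leaves implicit.
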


\begin{proof}
    Combining \eqref{eq:digital-option-def} and proposition \ref{prop:density-Mellin-GTS}, we can write: 
    %with the same notations:
    \begin{multline}
        \label{eq:proof-cn-exponential}
        C_\CN  
        = 
        \frac{e^{(\gamma-r)T}}{\beta_+\beta_-} \int_{-k}^\infty\int\limits_{\bm{c}+\mi \R^3 }  e^{-\lambda_+ x} \GG
        \left(U_\TS\bs + \bu_\TS, D_\TS\bs_{\leqslant 2} + \bd_\TS\right)\\
        \times 
        \ubar{\lambda}^{(s_1+s_2)/2-s_3}
        x^{-1-\frac{s_1+s_2}{2}-s_3}
        (a_+T)^{s_1/\beta_+}(a_-T)^{s_2/\beta_-}
        \frac{\D \bs}{(\mi 2 \pi)^3}\D x.
    \end{multline}
    For the exponential term, since $\lambda_+>0$, we use the following one-dimensional MB representation (see \cite[eq. 6.3.1]{Bateman}):
    \begin{equation}\label{Cahen}
       \forall x>0, 
       \quad
       e^{-\lambda_+x} = \int\limits_{c_4+\mi\R} \Gamma(s_4) (\lambda_+x)^{-s_4}\frac{\D s_4}{\mi 2 \pi}
    \end{equation}
    which holds for any $c_4>0$. We thus obtain:
    \begin{multline}
        \label{eq:proof-an-int}
        C_\CN  
        =
        \frac{e^{(\gamma-r)T}}{\beta_+\beta_-} \int_{\bm{c}+\mi \R^4 }
        \left(\int_{-k}^\infty x^{-1-\frac{s_1+s_2}{2}-s_3-s_4} \D x\right)
         \GP{U\bs + \bu}
         {D\bs_{\leqslant 2} + \bd}
         \Gamma(s_4)\\
        \times  \ubar{\lambda}^{(s_1+s_2)/2-s3}\lambda_+^{-s_4}x^{-1-\frac{s_1+s_2}{2}-s_3}(a_+T)^{s_1/\beta_+}(a_-T)^{s_2/\beta_-}
        \frac{\D \bs}{(\mi 2 \pi)^4}
        ,
    \end{multline}
    where $\bm{c}$ is now a 4-vector belonging to set $\{ \bm{x}\in\R^4|~U\bm{x} +\bm{u}\succ 0  \}$. Last, the integral w.r.t. $x$ can be carried out as:
    \begin{equation}
    \label{eq:integral-x}
        \int_{-k}^\infty x^{-1-\frac{s_1+s_2}{2}-s_3-s_4} \D x 
        =
       \frac{(-k)^{\frac{s_1+s_2}{2}+s_3+s_4}}{\frac{s_1+s_2}{2}+ s_3+ s_4}
    \end{equation}
    provided that $\Real(\frac{s_1+s_2}{2}+s_3+s_4)>0$ and $-k>0$. Substituting \eqref{eq:integral-x} in \eqref{eq:proof-an-int} yields the desired result. 
    \end{proof}
    
\begin{proposition}
    \label{prop:AN-Mellin-GTS}
    The value of an AN call with moneyness $k<0$ admits the MB representation:
    \begin{multline}
    \label{eq:AN-Mellin-GTS}
    C_\AN = 
    \frac{Ke^{k+(\gamma-r)
    T}}{\beta_+\beta_-}  
    \int_{\bm{c}+\mi \R^4 }
    %\\
     \frac{\GP{U\bs + \bu}{ D_\TS\bs_{\leqslant 2}+ \bd_\TS}}{\frac{s_1+s_2}{2}+s_3+s_4}\\
     % \left(\frac{s_1+s_2}{2}+s_3+s_4\right)
    \times  
    \ubar{\lambda}^{(s_1+s_2)/2-s_3}
    (\lambda_+-1)^{-s_4}
    (a_+T)^{s_1/\beta_+}
    (a_-T)^{s_2/\beta_-}(-k)^{-\frac{s_1+s_2}{2}-s_3-s_4}
    \frac{\D \bs}{(\mi 2 \pi)^4}
\end{multline}
with the same notations and conditions as in proposition \ref{prop:CN-Mellin-GTS}.
\end{proposition}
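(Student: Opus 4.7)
The plan is to mirror the proof of \propref{prop:CN-Mellin-GTS} almost verbatim, since the only structural difference between an AN and a CN call is the extra factor $e^x$ appearing inside the integrand in \eqref{eq:digital-option-def}, together with the prefactor $Ke^k$. I would start from
\begin{equation*}
    C_\AN = K e^{k-rT} \int_{-k}^{\infty} e^{x} f_{X_T}(x)\, \D x
\end{equation*}
and plug in the MB representation of the TS density coming from \propref{prop:density-Mellin-GTS} (applied to $X_T$, which amounts to replacing $a_\pm$ by $a_\pm T$). This produces the same quintuple integrand as in \eqref{eq:proof-cn-exponential}, except that the exponential factor $e^{-\lambda_+ x}$ originating from the density is multiplied by $e^{x}$, yielding $e^{-(\lambda_+ -1)x}$.

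Next, I would use the standing martingale assumption $\lambda_+ > 1$, which is precisely what makes $\lambda_+-1>0$ and therefore allows the Cahen--Mellin representation \eqref{Cahen} to be applied with $\lambda_+$ replaced by $\lambda_+-1$:
\begin{equation*}
    e^{-(\lambda_+-1)x} = \int\limits_{c_4+\mi\R} \Gamma(s_4)\, ((\lambda_+-1)x)^{-s_4}\, \frac{\D s_4}{\mi 2\pi}, \qquad c_4>0.
\end{equation*}
Substituting this in and swapping the order of integration by Fubini (the integrability constraints are the same as in \propref{prop:CN-Mellin-GTS}, since only the base of the power of $\lambda_+-1$ changes, not the real parts of the arguments of the Gamma functions) reduces the computation to the same $x$-integral as in the CN proof, namely
\begin{equation*}
    \int_{-k}^{\infty} x^{-1-\frac{s_1+s_2}{2}-s_3-s_4}\,\D x = \frac{(-k)^{\frac{s_1+s_2}{2}+s_3+s_4}}{\frac{s_1+s_2}{2}+s_3+s_4},
\end{equation*}
which converges under the hypothesis $k<0$ and $\Real\bigl(\tfrac{s_1+s_2}{2}+s_3+s_4\bigr)>0$, exactly the condition encoded in the definition of $\bm{c}$.

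Finally, collecting all terms and noting that the prefactor $Ke^k$ combines with $e^{(\gamma-r)T}$ to give $Ke^{k+(\gamma-r)T}$, and that the only change with respect to \eqref{eq:CN-Mellin-GTS} is $\lambda_+^{-s_4}\mapsto (\lambda_+-1)^{-s_4}$, yields exactly \eqref{eq:AN-Mellin-GTS}. The main (and essentially only) point requiring care is verifying that the contour $\bm{c}\in\R^4$ can be chosen consistently: the half-planes dictated by the Gamma functions and by the $x$-integral are unchanged with respect to the CN case, so the admissible set of \propref{prop:CN-Mellin-GTS} applies here as well — the hypothesis $\lambda_+>1$ guarantees that the additional substitution step is legitimate without perturbing any of the existing convergence strips.
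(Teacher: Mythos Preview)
Your proposal is correct and follows essentially the same approach as the paper: start from the AN integral in \eqref{eq:digital-option-def}, insert the MB representation of the density from \propref{prop:density-Mellin-GTS}, observe that the extra $e^x$ turns $e^{-\lambda_+ x}$ into $e^{-(\lambda_+-1)x}$, invoke the martingale condition $\lambda_+>1$ to apply the Cahen--Mellin formula with $\lambda_+-1$ in place of $\lambda_+$, and finish with the same $x$-integral as in the CN case. The paper's proof is in fact briefer (it simply says ``straightforward adaptation''), so your version, with its explicit discussion of why the admissible region for $\bm c$ is unchanged, is if anything more complete.
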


\begin{proof}
    The proof is a straightforward adaptation of the proof of proposition \ref{prop:CN-Mellin-GTS}, except that the arising exponential term is now $e^{-(\lambda_+-1)x}$ instead of $e^{-\lambda_+ x }$ in \eqref{eq:proof-cn-exponential}. From the martingality condition, $\lambda_+ - 1>0$ and therefore we can write:
    \begin{equation}\label{Cahen-lam-1}
       \forall x>0, 
       \quad
       e^{-(\lambda_+-1)x} = \int\limits_{c_4+\mi\R} \Gamma(s_4) ((\lambda_+-1)x)^{-s_4}\frac{\D s_4}{\mi 2 \pi}.
    \end{equation} 
    The rest of the proof is the same as for proposition \ref{prop:CN-Mellin-GTS}.
    % Adjusting the multiplicative coefficients of \eqref{eq:digital-option-def} gives the desired result.
\end{proof}
    
\subsubsection{Series expansions}
    Let us know express the MB integrals of propositions \ref{prop:CN-Mellin-GTS} and \ref{prop:AN-Mellin-GTS} as  residue series. To that extent, given $x_1<0$, $x_2>0$ and an integer vector $\bm{n} := (n_1,n_2,n_3)\in\N^3$, let us define the coefficients: 
    \begin{align}
    \label{eq:coefficients-ts}
    \begin{cases} 
        \begin{aligned}
            c_{\bm{n}}^{(1)}(x_1;x_2) := &\ (-\beta_-n_3)_{n_1}
            \frac{\Gamma(1-n_1+\beta_+n_2+\beta_-n_3)}{\Gamma(1+\beta_+n_2)\Gamma(-\beta_+n_2)} \\
            &\times
            \ubar{\lambda}^{n_1} 
            x_2^{-n_1+\beta_+n_2+\beta_-n_3}
            \gamma_{\text{inc}}(n_1-\beta_+n_2-\beta_-n_3,-x_1 x_2), \\
            c_{\bm{n}}^{(2)}(x_1;x_2) := &\ (1+\beta_+n_2)_{n_1} 
            \frac{\Gamma(-1-n_1-\beta_+n_2-\beta_-n_3)}{\Gamma(-\beta_+n_2)\Gamma(-\beta_-n_3)} \\
            &\times
            \ubar{\lambda}^{1+n_1+\beta_+n_2+\beta_-n_3} 
            x_2^{-1-n_1}
            \gamma_{\text{inc}}(1+n_1,-x_1x_2 ), \\
            c_{\bm{n}}^{(3)}(x_2) := &
            \frac{\Gamma(n_1-\beta_+n_2-\beta_-n_3)}{\Gamma(1+n_1-\beta_+n_2)\Gamma(-\beta_-n_3)} \times
            \ubar{\lambda}^{-n_1+\beta_+ n_2+\beta_-n_3}
            x_2^{n_1},
        \end{aligned}
    \end{cases}
\end{align}
    where $(z)_n$ denotes the Pochhammer symbol (descending factorial) and $\gamma_{\text{inc}}$ is the lower incomplete gamma function (see \cite[eq. 8.7.1]{DLMF}). Furthermore, the coefficients \eqref{eq:coefficients-ts} admit the special limiting cases:
    \begin{equation}
        \label{eq:cn-coeff-boundary}
        \begin{cases}
            \displaystyle
            c_{ (n_1,0,0)}^{(1)}(x_1;x_2) :=  
            \frac{\beta_+}{\beta_-+\beta_+}\mathbbm{1}_{\{n_1 = 0\}},\\
            \displaystyle
            c_{(n_1,0,0)}^{(2)}(x_1;x_2) := 0,
            \\
            \displaystyle
            c_{(0,0,0)}^{(3)}(x_2) := \frac{\beta_-}{\beta_-+\beta_+}.
        \end{cases}
    \end{equation}
    We finally define the overall coefficient $c_{\bm{n}}$ as:
    \begin{equation}
    \label{eq:coefficients-ts-sum}
        c_{\bm{n}}(x_1,x) := \frac{(-1)^{n_1}}{n_1!}
        \left(
        c_{\bm{n}}^{(1)}(x_1;x_2)
        +c_{\bm{n}}^{(2)}(x_1;x_2)
        \right)
        +c_{\bm{n}}^{(3)}(x_2).
    \end{equation}

\begin{proposition}
     \label{prop:cn-series-gts}
    The value of a CN call with moneyness $k<0$ admits the series representation:
    \begin{equation}\label{eq:cn-series-gts}
        C_\CN 
        = 
        e^{-rT}
        -e^{(\gamma-r)T}\sum_{\bm{n}\in\N^3}
        \frac{(-1)^{n_{2}+n_3}}{n_{2}!n_3!}
        (a_+T)^{n_2}(a_-T)^{n_3}
        c_{\bm{n}}(k;\lambda_+).
    \end{equation}
\end{proposition}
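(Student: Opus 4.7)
The plan is to apply multidimensional residue calculus to the four-dimensional Mellin--Barnes integral of Proposition \ref{prop:CN-Mellin-GTS}, in the same framework as the proof of Proposition \ref{prop:series_density_GTS}. First, one computes the characteristic vector $\bm{\Delta}\in\mathbb{R}^4$ of the integral from the exponents of $\ubar{\lambda}$, $\lambda_+$, $a_+T$, $a_-T$ and $-k$, and invokes the multidimensional Jordan lemma of \cite{passare1994multidimensional, zhdanov1998} to close the contour in the half-space singled out by $-k>0$. Relative to the density integrand, the 4D integrand carries two new sources of singularities: the poles of $\Gamma(s_4)$ at $s_4 = -n_4$ for $n_4\in\mathbb{N}$, and a simple pole along the hyperplane $s_4 = -(s_1+s_2)/2 - s_3$ coming from the denominator factor $((s_1+s_2)/2+s_3+s_4)^{-1}$.

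The resulting pole combinations naturally split into three groups, matching the three coefficients $c_{\bm{n}}^{(1)}, c_{\bm{n}}^{(2)}, c_{\bm{n}}^{(3)}$. The hyperplane pole eliminates all $-k$-dependence (since $(-k)^0=1$); combined with the triple $s_1=\beta_+ n_2,\ s_2=\beta_- n_3,\ -(s_1+s_2)/2 - s_3 = -n_1$ coming from the numerator gamma functions, it yields the $c^{(3)}$ series after a reflection-formula manipulation of the form $\Gamma(\beta_+ n_2 - n_1)=(-1)^{n_1+1}\Gamma(-\beta_+ n_2)\Gamma(1+\beta_+ n_2)/\Gamma(1+n_1-\beta_+ n_2)$. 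The $\Gamma(s_4)$ poles at $s_4 = -n_4$, paired with the two $(s_1,s_2,s_3)$ pole families already identified in Proposition \ref{prop:series_density_GTS}, generate terms proportional to $(-k)^{n_4-a}/(n_4!(a-n_4))$; summing over $n_4$ via the series expansion $\gamma_{\mathrm{inc}}(a,z)=\sum_{n\geq 0}(-1)^n z^{a+n}/(n!(a+n))$ collapses the double sum into the $c^{(1)}$ and $c^{(2)}$ contributions. Finally, the constant $e^{-rT}$ is recovered from the probabilistic identity $\int_{\mathbb{R}}f_{X_T}(x)\,\D x=1$ via the decomposition $\int_{-k}^\infty f_{X_T} = 1-\int_{-\infty}^{-k} f_{X_T}$, so that the residue sum above accounts for the full complement-CDF with opposite sign.

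As an alternative (or sanity check) for the $c^{(1)}$ and $c^{(2)}$ pieces, one may bypass the 4D integral by decomposing $C_\CN = e^{-rT} - e^{-rT}\mathbb{P}(X_T\leq 0) - e^{-rT}\int_0^{-k} f_{X_T}(x)\,\D x$ (valid since $-k>0$), expanding $f_{X_T}$ on $(0,-k)$ via Proposition \ref{prop:series_density_GTS}, and using $\int_0^{-k}e^{-\lambda_+ x}x^{a-1}\,\D x = \lambda_+^{-a}\gamma_{\mathrm{inc}}(a,-\lambda_+ k)$; the Pochhammer identity $(-\beta_- n_3)_{n_1}=\Gamma(n_1-\beta_- n_3)/\Gamma(-\beta_- n_3)$ then yields an exact match with $c^{(1)}$ and $c^{(2)}$. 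The main obstacle is the $c^{(3)}$ term: since the density series of Proposition \ref{prop:series_density_GTS} is only valid for $x>0$, one cannot obtain $\mathbb{P}(X_T\leq 0)$ by termwise integration on $(-\infty,0)$, and the residue computation at the denominator hyperplane (or an equivalent symmetry-based argument) is essential. A secondary technical point is the handling of the boundary cases $n_2=n_3=0$, where the generic residue formulas degenerate and must be replaced by the limit expressions \eqref{eq:cn-coeff-boundary}; the irrationality of $\beta_\pm$ ensures no further pole coincidences occur.
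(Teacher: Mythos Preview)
Your approach---four-dimensional residue calculus on the Mellin--Barnes integral of Proposition~\ref{prop:CN-Mellin-GTS}---is exactly the paper's, and your identification of the three families yielding $c^{(1)}$, $c^{(2)}$, $c^{(3)}$ (the paper's (P1), (P2), (P3)) is correct. However, you are missing a fourth family (P4): the hyperplane divisor $\{(s_1+s_2)/2+s_3+s_4=0\}$ can also be paired with the poles of $\Gamma((s_1-s_2)/2-s_3)$ (the third row of $U_\TS$) together with rows~4 and~5, giving $\bm{s}=(\beta_+n_2,\beta_-n_3,n_1+\beta_+n_2/2-\beta_-n_3/2,-n_1-\beta_+n_2)$. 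Summing these residues---a ${}_1F_0$ series in $n_1$ followed by two exponential series in $n_2,n_3$---yields exactly $e^{-rT}$. Row~3 played no role in the three-dimensional density expansion of Proposition~\ref{prop:series_density_GTS}, so your heuristic ``pair the two density families with the new $s_4$-singularities'' naturally overlooks it; in four variables, with the extra hyperplane available, it becomes an admissible combination.

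Your attempt to recover $e^{-rT}$ from the probabilistic identity $\int_{\mathbb R}f_{X_T}=1$ is not a valid substitute. The MB integral of Proposition~\ref{prop:CN-Mellin-GTS} already equals $C_\CN$, so the multidimensional Jordan lemma forces the \emph{full} residue sum in the admissible cone to be $C_\CN$; you cannot evaluate a strict subset of the residues and then append an external constant. The shortfall you observe is precisely the signature of the missing (P4) family. In your alternative route this same gap reappears as the impossibility of computing $\mathbb{P}(X_T\leq 0)$ from the $x>0$ density series---the paper avoids the issue altogether by letting (P4) produce the constant directly inside the residue calculus.
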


\begin{proof}
Using the formalism of \cite{passare1994multidimensional, zhdanov1998} in the $\mathbb{C}^4$ case, we denote the characteristic vector associated to the MB integral \eqref{eq:CN-Mellin-GTS} by $\bm{\Delta}=(1/2-\beta_+^{-1}, 1/2-\beta_-^{-1},1,1)$.
The multidimensional Jordan residue lemma \cite{passare1994multidimensional} allows to express \eqref{eq:CN-Mellin-GTS} as a sum of residues associated to the poles of the integrand located in the subspace $\{ s\in\mathbb{C}^4|~ \Re[\langle \bm{\Delta} , s \rangle] <  \langle \Delta, \bm{c} \rangle \}$. In this subspace, there are four sets of poles; the first two sets are given by the solutions of the two linear systems
\begin{equation}
    \begin{cases}
        U_{\{1,4,5,6\}}\bm{s} + \bm{u}_{\{1,4,5,6\}} = -\bm{n},\\
        U_{\{2,4,5,6\}}\bm{s} + \bm{u}_{\{2,4,5,6\}} = -\bm{n},
    \end{cases}
\end{equation}
    % \end{enumerate}
for $\bm{n}\in\mathbb{N}^4$, and the other two sets are given by the solutions of the two linear systems
\begin{equation}
\begin{cases}
    U_{\{4,5,6\}}\bm{s} + \bm{u}_{\{4,5,6\}} = -\bm{n}~\text{and}~\langle \bm{s},(1/2,1/2,1,1)\rangle =0,\\
    U_{\{3,4,5\}}\bm{s} + \bm{u}_{\{3,4,5\}} = -\bm{n}~\text{and}~\langle \bm{s},(1/2,1/2,1,1)\rangle =0,
\end{cases}
\end{equation}
for $\bm{n}\in\mathbb{N}^3$. 
% For  $\bm{n}\in\N^4$, the first poles are at the parametrized locations $\bm{s}=(\beta_+n_2,\beta_-n_3,-n_1+\beta_+n_2/2+\beta_-n_3/2,-n_4)$ and $\bm{s}=(\beta_+n_2,\beta_-n_3,-1-n_1-\beta_+n_2/2-\beta_-n_3/2,-n_4)$. For $\bm{n}\in\N^3$, the last two are located in $\bm{s} = (\beta_+n_2,\beta_-n_3,-\beta_+n_2/2-\beta_-n_3/2+n_1,-n_1)$ and $\bm{s}=(\beta_+n_2,\beta_-n_3,n_1+\beta_+n_2/2-\beta_-n_3/2,-n_1-\beta_+n_2)$.
    These four systems have unique solutions, respectively given by:
    \begin{equation}
        \begin{cases}\label{Pi}
            \text{(P1)} \quad \bm{s}=(\beta_+n_2,\beta_-n_3,-n_1+\beta_+n_2/2+\beta_-n_3/2,-n_4),\\
            \text{(P2)} \quad \bm{s}=(\beta_+n_2,\beta_-n_3,-1-n_1-\beta_+n_2/2-\beta_-n_3/2,-n_4),\\
            \text{(P3)} \quad \bm{s} = (\beta_+n_2,\beta_-n_3,-\beta_+n_2/2-\beta_-n_3/2+n_1,-n_1),\\
            \text{(P4)} \quad \bm{s}=(\beta_+n_2,\beta_-n_3,n_1+\beta_+n_2/2-\beta_-n_3/2,-n_1-\beta_+n_2).
        \end{cases}
    \end{equation}
    Detailed computations for the residues associated to the sets of poles (P1), (P2) and (P4) and their simplifications (in terms of lower incomplete gamma functions for (P1) and (P2), and as a simple constant for (P4)) are detailed in appendix \ref{subsec:app-first-pole},  \ref{subsec:app-second-pole} and \ref{subsec:app-third-pole}. As for the residues associated to (P3), they are straightforward to obtain thanks to the relation:
    \begin{equation}
            \Gamma(-n_1+\beta_+n_2) = (-1)^{n_1-1}\frac{\Gamma(-\beta_+n_2)\Gamma(1+\beta_+n_2)}{\Gamma(1+n_1-\beta_+n_2)}.
\end{equation}
Summing all residues yields the series \eqref{eq:cn-series-gts}.

\end{proof}

In the following remark, we show that it is straightforward to extend proposition \ref{prop:cn-series-gts} to the case of a positive moneyness. 
\begin{remark}
    \label{rem:cn-positive-moneyness}
    Assume $k>0$, i.e. $S_0\exp((r-q+\zeta)T)<K$, and denote the corresponding corresponding option value by $\widetilde{C}_\CN$. In this situation, we simply write 
    % use that $\mathbb{E}[\mathbbm{1}_{\{S_T>K\}}] = 1- \mathbb{E}[\mathbbm{1}_{\{S_T<K\}}]$ and it comes that:
    \begin{equation}
        \mathbb{E}[\mathbbm{1}_{\{S_T>K\}}] 
        =
        1- \mathbb{E}[\mathbbm{1}_{\{S_T<K\}}]
        =
        1-\int_{-\infty}^{-k}f_{X_T}(x)\D x
        ,
    \end{equation}
    where $(-\infty, -k)\subset \mathbb{R}_-$ and therefore one has to use the TS density for negative $x$, which is given by the symmetry relation \eqref{eq:dens-symetry}. The proofs of propositions \ref{prop:CN-Mellin-GTS}
    and \ref{prop:cn-series-gts} can be straightforwardly adapted to obtain that:
    \begin{equation}
       \widetilde{C}_\CN 
        = 
        e^{(\gamma-r)T}\sum_{\bm{n}\in\N^3}
        \frac{(-1)^{n_{2}+n_3}}{n_{2}!n_3!}
        (a_+T)^{n_2}(a_-T)^{n_3}
        \widetilde{c}_{\bm{n}}(-k;\lambda_-)
    \end{equation}
    % with
    % \begin{equation}
    %     \widetilde{c}_{\CN,\bm{n}}(-k,\lambda_-) := \frac{(-1)^{n_1}}{n_1!}
    %     \left(
    %     \widetilde{c}_{\bm{n}}^{(1)}(-k;\lambda_-)
    %     +
    %     \widetilde{c}_{\bm{n}}^{(2)}(-k;\lambda_-)
    %     \right)
    %     +
    %     \widetilde{c}_{\bm{n}}^{(3)}(\lambda_-)
    % \end{equation}
    where the coefficient $\widetilde{c}_{\bm{n}}(x_1,x_2)$ is the same as the coefficient $c_{\bm{n}}(x_1,x_2)$ defined in 
    \eqref{eq:coefficients-ts-sum}, but with flipped parameters, i.e., the parameters $\{\alpha_+,\beta_+,\lambda_+,\alpha_-\beta_-,\lambda_-\}$ become $\{\alpha_-,\beta_-,\lambda_-,\alpha_+\beta_+,\lambda_+\}$ in definition \eqref{eq:coefficients-ts}.
\end{remark}
%%%%%%%%%%%

The AN option can be treated almost exactly the same way as the CN one, the difference being that we now express the MB integral \eqref{eq:AN-Mellin-GTS} (featuring a $(\lambda_+-1)^{-s_4}$ term) instead of the MB integral \eqref{eq:CN-Mellin-GTS} (featuring a $\lambda_+^{-s_4}$ term). We obtain:

\begin{proposition}
     \label{prop:an-series-gts}
    The value of an AN call $C_\AN$ with moneyness $k<0$ admits the series representation:
    \begin{equation}
        C_\AN 
        = 
        Ke^{k-(\zeta+r)T}
        -Ke^{k+(\gamma-r)T}\sum_{\bm{n}\in\N_0^3}
        \frac{(-1)^{n_{1}+n_2+n_3}}{n_{1}!n_2!n_3!}
        (a_+T)^{n_2}(a_-T)^{n_3}
        c_{\bm{n}}(k;\lambda_+-1).
    \end{equation}
\end{proposition}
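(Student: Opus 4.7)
The plan is to mirror the proof of \propref{prop:cn-series-gts} line-by-line, starting from the MB representation \eqref{eq:AN-Mellin-GTS} obtained in \propref{prop:AN-Mellin-GTS}. The AN and CN integrands differ only by the overall prefactor $Ke^k$ and by the substitution $\lambda_+^{-s_4}\mapsto(\lambda_+-1)^{-s_4}$, which is well-defined thanks to the martingality condition $\lambda_+>1$. Since neither modification affects the singular structure of the integrand in the integration variables $\bm{s}$, the characteristic vector $\bm{\Delta}=(1/2-\beta_+^{-1},1/2-\beta_-^{-1},1,1)$ and the four pole families (P1)--(P4) displayed in \eqref{Pi} remain identical.

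I would then invoke the multidimensional Jordan residue lemma and compute the residues at (P1)--(P4) by direct reuse of the appendix computations of the CN case. In each family, every factor $\lambda_+^{\alpha}$ arising from the evaluation of $\lambda_+^{-s_4}$ at the corresponding pole is replaced by $(\lambda_+-1)^{\alpha}$; in the language of the coefficients \eqref{eq:coefficients-ts} this amounts precisely to evaluating $c_{\bm{n}}^{(1)}, c_{\bm{n}}^{(2)}, c_{\bm{n}}^{(3)}$ at $x_2=\lambda_+-1$. The overall multiplier is $Ke^{k+(\gamma-r)T}$ in place of $e^{(\gamma-r)T}$, and the residues from the families (P1), (P2) and (P3) then assemble, exactly as in the CN proof, into the series term of the statement.

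The main work---and the step I expect to be the main obstacle---is the extraction of the standalone constant $Ke^{k-(\zeta+r)T}$, which must emerge from the (P4) family. The (P4) residue involves the factor $\ubar{\lambda}^{\beta_- n_3-n_1}(\lambda_+-1)^{n_1+\beta_+ n_2}$ together with Pochhammer symbols $(-\beta_- n_3)_{n_1}$; the inner $n_1$-summation is then a binomial series $\sum_{n_1\geq 0}(-\beta_- n_3)_{n_1}\, z^{n_1}/n_1!=(1-z)^{\beta_- n_3}$ evaluated at $z=(\lambda_+-1)/\ubar{\lambda}$, so that $1-z=(\lambda_-+1)/\ubar{\lambda}$---this is precisely where the Esscher-type transformation $\lambda_-\mapsto\lambda_-+1$ underlying the martingality identity \eqref{eq:def-zeta} emerges. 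The subsequent $n_2$ and $n_3$ sums are Taylor expansions of $\exp$, yielding $e^{-a_+T(\lambda_+-1)^{\beta_+}}\, e^{-a_-T(\lambda_-+1)^{\beta_-}}=e^{-(\gamma+\zeta)T}$ by \eqref{eq:constants-ts} and \eqref{eq:def-zeta}; multiplying by the prefactor $Ke^{k+(\gamma-r)T}$ then produces the announced $Ke^{k-(\zeta+r)T}$, in agreement with its financial interpretation as $S_0 e^{-qT}$.
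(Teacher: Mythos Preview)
Your proposal is correct and follows essentially the same route as the paper: mirror the CN argument, observe that the four pole families (P1)--(P4) are unchanged because only the nonsingular factor $\lambda_+^{-s_4}$ is replaced by $(\lambda_+-1)^{-s_4}$, and handle (P1)--(P3) by the direct substitution $x_2=\lambda_+-1$ in the coefficients. Your detailed treatment of (P4)---the binomial collapse producing $(\lambda_-+1)^{\beta_- n_3}$ and the two exponential sums combining to $e^{-(\gamma+\zeta)T}$---is precisely the computation the paper defers to its appendix, and your identification of the constant with $S_0e^{-qT}$ is an accurate extra check.
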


\begin{proof}
    The proof is almost the same than in proposition \ref{prop:CN-Mellin-GTS}, given that the integrand admits the same sets of poles (P1), (P2), (P3) and (P4) defined in eq. \ref{Pi}. The residues are computed in a similar way, with a slight difference for the residues associated to (P4) which is detailed in 
    Appendix \ref{subsec:app-fourth-pole}. 
\end{proof}

As for CN options, the case of positive moneyness easily handled, as shown in remark \ref{rem:an-positive-moneyness}.
\begin{remark}
\label{rem:an-positive-moneyness}
    For $k>0$, we use the identity:
    \begin{equation}
        \mathbb{E}[S_T\mathbbm{1}_{\{S_T>K\}}] 
        =
        S_0 e^{(r-q)T} - \mathbb{E}[S_T\mathbbm{1}_{\{S_T<K\}}]
        =
        S_0e^{(r-q)T}-
        Ke^{k}\int_{-\infty}^{-k}e^xf_{X_T}(x)\D x.
    \end{equation}
    It comes that the value of the asset-or-nothing
    option is given by:
    \begin{multline}
        \widetilde{C}_\AN 
        =
        S_0e^{-qT}-
        Ke^{k-(\zeta+r)T}
        \\
        +Ke^{k+(\gamma-r)T}\sum_{\bm{n}\in\N_0^3}
        \frac{(-1)^{n_{1}+n_2+n_3}}{n_{1}!n_2!n_3!}
        (a_+T)^{n_2}(a_-T)^{n_3}
        \widetilde{c}_{\bm{n}}(-k;\lambda_-+1).
    \end{multline}
    where $\widetilde{c}_{\bm{n}}$ denotes the coefficient with flipped parameters (as in remark \ref{rem:cn-positive-moneyness}). 
\end{remark}

\subsection{European options}

As mentioned before, the European option price is simply obtained by differences of the AN (proposition \ref{prop:an-series-gts}) and CN (propsition \ref{prop:cn-series-gts}) prices. We immediately obtain: 

    \begin{proposition}
     \label{prop:eur-series-gts}
        The value of a European call $C_\EUR$ with moneyness $k<0$ admits the series representation:
        \begin{equation}\label{eur-series-gts}
            C_\EUR = Ke^{-rT}\left[e^{k-\zeta T}-1 + 
            e^{\gamma T}
            \sum_{\bm{n}\in\N^3}
            \frac{(-1)^{n_{1}+n_2+n_3}}{n_{1}!n_2!n_3!}
            (a_+T)^{n_2}(a_-T)^{n_3}
            c_{\EUR,\bm{n}}(k;\lambda_+)
            \right]
        \end{equation}
        where
        \begin{equation}
            c_{\EUR,\bm{n}}(k,\lambda_+)
            :=
            e^kc_{\bm{n}}(k,\lambda_+-1)  
            -c_{\bm{n}}(k,\lambda_+).  
            \end{equation}
    \end{proposition}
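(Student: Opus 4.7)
The proof is essentially an assembly of the two preceding propositions, so my plan is to follow the decomposition of the European payoff and reorganize terms.

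The starting point is the identity \eqref{eq:payoff_Eureopean}, which gives $\mathcal{P}_{\EUR}(S_T) = \mathcal{P}_{\AN}(S_T) - K \mathcal{P}_{\CN}(S_T)$. By linearity of the conditional expectation in the risk-neutral pricing formula \eqref{pricing}, this immediately yields $C_\EUR = C_\AN - K C_\CN$, which will be the sole structural ingredient of the proof. The hypothesis $k<0$ guarantees that the assumptions of propositions \ref{prop:cn-series-gts} and \ref{prop:an-series-gts} are both satisfied, so we can simultaneously substitute the two series representations into this difference.

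The next step is the bookkeeping: I would write $C_\AN$ and $K C_\CN$ side by side using propositions \ref{prop:an-series-gts} and \ref{prop:cn-series-gts} respectively, then combine the two deterministic (non-series) terms as
\[
K e^{k-(\zeta+r)T} - K e^{-rT} = K e^{-rT}\bigl(e^{k-\zeta T} - 1\bigr),
\]
which matches the leading bracket in \eqref{eur-series-gts}. For the series parts, I would factor out the common prefactor $K e^{-rT} e^{\gamma T}$ and note that both series run over $\bm{n}\in\N^3$ with the same combinatorial weight $(-1)^{n_1+n_2+n_3}(a_+T)^{n_2}(a_-T)^{n_3}/(n_1! n_2! n_3!)$. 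The only difference between the two series is that the AN one evaluates the coefficient $c_{\bm{n}}$ at $\lambda_+-1$ and carries an extra $e^k$ (coming from the prefactor $K e^{k+(\gamma - r)T}$ relative to $K e^{(\gamma-r)T}$), while the CN one evaluates at $\lambda_+$. Collecting these two contributions inside the summand produces exactly $e^k c_{\bm{n}}(k;\lambda_+-1) - c_{\bm{n}}(k;\lambda_+)$, which is the definition of $c_{\EUR,\bm{n}}(k;\lambda_+)$, concluding the argument.

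The only point that requires a moment of care is the apparent discrepancy between the $(-1)^{n_2+n_3}/(n_2!n_3!)$ indexing used in proposition \ref{prop:cn-series-gts} and the $(-1)^{n_1+n_2+n_3}/(n_1!n_2!n_3!)$ indexing used in proposition \ref{prop:an-series-gts}: this is absorbed into the fact that, via \eqref{eq:coefficients-ts-sum}, the coefficient $c_{\bm{n}}$ already contains a $(-1)^{n_1}/n_1!$ factor in front of $c^{(1)}_{\bm{n}}+c^{(2)}_{\bm{n}}$ but not in front of $c^{(3)}_{\bm{n}}$, so the two forms refer to identical series after reindexing. Verifying that this reindexing is consistent so that the $n_1$ summation can be factored uniformly across $c_{\EUR,\bm{n}}$ is the one step to watch; once it is checked, the announced formula \eqref{eur-series-gts} follows at once with no further computation. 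Since both component series were already proved to be absolutely convergent, no issue arises in termwise subtraction.
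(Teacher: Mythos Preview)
Your proposal is correct and follows exactly the paper's approach: the paper's proof is a single sentence stating that the European price is obtained by taking the difference $C_\AN - K C_\CN$ of propositions \ref{prop:an-series-gts} and \ref{prop:cn-series-gts}, and your write-up is simply a more explicit version of that subtraction. Your remark about the $(-1)^{n_1}/n_1!$ indexing mismatch between the two propositions is a valid observation (it is a notational inconsistency in the paper rather than a substantive issue), and flagging it is appropriate.
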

    % \begin{proof}
    %     \color{black}
    %     Immediate consequence of the payoff relation \eqref{eq:payoff_Eureopean} and propositions \ref{prop:an-series-gts} and \ref{prop:cn-series-gts}
    % \end{proof}
    
Last, the case of positive moneyness $k>0$ is again straightforward to handle, thanks to remarks \ref{rem:an-positive-moneyness} and \ref{rem:cn-positive-moneyness}.

\section{Particular cases}
\label{sec:particular-cases}
%%%%%%%%%%%%%%%%%%%%%%%%%%%%%%%%%%
%%%%%%%%%%%%%%%%%%%%%%%%%%%%%%%%%%

The pricing series \eqref{eur-series-gts} is valid under the most general TS processes; however, as we will see, it can be simplified in some particular cases that correspond to models that are popular in the literature. These simplifications are particularly remarkable in the cases of bilateral Gamma and Variance Gamma pricing.

%%%%%%%%%%%
\subsection{KoBoL}
%%%%%%%%%%%%

The KoBoL process (see \cite{BoyarOption00,BoyarLevin02}) corresponds to a TS process with the restriction $\beta_+=\beta_-$; to obtain the corresponding European call price, it therefore suffices to take $\beta_+=\beta_-$ in the coefficients \eqref{eq:coefficients-ts} and the overall coefficient \eqref{eq:coefficients-ts-sum}, the latter being denoted by $c_{\bm{n}}^{\text{KoBoL}}(k,x)$ in this case. Applying the pricing formula \eqref{eur-series-gts} (proposition \ref{prop:eur-series-gts}) thus yields the KoBoL call price:
\begin{equation}
\label{eur-series-KoBoL}
            C_\EUR^{\text{KoBoL}} = Ke^{-rT}\left[e^{k-\zeta T}-1 + 
            e^{\gamma T}
            \sum_{\bm{n}\in\N^3}
            \frac{(-1)^{n_{1}+n_2+n_3}}{n_{1}!n_2!n_3!}
            (a_+T)^{n_2}(a_-T)^{n_3}
            c_{\EUR,\bm{n}}^{\text{KoBoL}}(k;\lambda_+)
            \right]
\end{equation}
        where
\begin{equation}
            c_{\EUR,\bm{n}}^{\text{KoBoL}}(k,\lambda_+)
            :=
            e^kc_{\bm{n}}^{\text{KoBoL}}(k,\lambda_+-1)  
            -c_{\bm{n}}^{\text{KoBoL}}(k,\lambda_+).  
\end{equation}

%%%%%%%%%%%
\subsection{CGMY}
%%%%%%%%%%%%

The CGMY process of \cite{Carr02} is a KoBoL process with furthermore $\alpha_+ = \alpha_- := C$ (and in this context the traditional usage is to write $\beta_+=\beta_-:=Y)$. Interestingly, one can still use the KoBoL coefficients like in \eqref{eur-series-KoBoL}; the only difference actually lies in the $a_\pm$ terms that are now equal and can therefore be grouped into the same power, yielding the CGMY call price:
\begin{equation}
\label{eur-series-CGMY}
            C_\EUR^{\text{CGMY}} = Ke^{-rT}\left[e^{k-\zeta T}-1 + 
            e^{\gamma T}
            \sum_{\bm{n}\in\N^3}
            \frac{(-1)^{n_{1}}}{n_{1}!n_2!n_3!}
            (C \Gamma(-Y)T)^{n_2+n_3}
            c_{\EUR,\bm{n}}^{\text{KoBoL}}(k;M)
            \right]
\end{equation}
where we have used the traditional notation $\lambda_+=M$. As before, extension to in-the-money situation ($k>0$ is straightforward for both KoBoL and CGMY processes, thanks to remarks \ref{rem:an-positive-moneyness} and \ref{rem:cn-positive-moneyness}.

%%%%%%%%%%%
\subsection{Bilateral Gamma}
%%%%%%%%%%%

The BG process of \cite{kuchlerBG} corresponds to a TS process with the restriction $\beta_+=\beta_-=0$. As such, it is therefore a particular case of a KoBoL process, but not of a CGMY process. 
A series pricing formula has already been derived in \cite{kuchlerBG}, however it is valid only for at-the-money options ($k=0$); a formula valid for all $k$ has been derived in \cite{aguilarkirkby2022robust} however it involves three summation indices. In this section, we will obtain a much simpler formula, valid for all moneyness and involving only one summation index.
% The degenerate tempered stable distribution obtained by setting $\beta_+=\beta_- = 0$ is the so-called bilateral Gamma distribution introduced in \cite{kuchlerBG}. 
% \red{Dire que c'est avoir un $X_t$ un processus BG dans \eqref{St_dynamics} puis ce que ça veut dire en termes de coeffs} 
% In this section, we generalize the pricing formula given in \cite{kuchlerBG} for $k\neq 0$ and considerably simplify the formula obtained in \cite{aguilarkirkby2022robust}.
For a $\mathrm{BG}(\alpha_+,\lambda_+,\alpha_-,\lambda_-)$ process, let us define the following five quantities:
\begin{equation}
        \begin{cases}
        \displaystyle
        m_T := \frac{\lambda_+^{\alpha_+ T}\lambda_-^{\alpha_- T}}{\underline{\lambda}^{\underline{\alpha}T}\Gamma(\alpha_+ T)\Gamma(\alpha_- T)\Gamma(1-\alpha_+ T)},\\
        \displaystyle
        \underline{\lambda} := \lambda_++\lambda_-,\\
        \displaystyle
        \overline{\lambda} := \frac{1}{2}(\lambda_+-\lambda_-),\\
        \displaystyle
        \underline{\alpha} := \frac{1}{2}(\alpha_++\alpha_-),\\
        \displaystyle
        \overline{\alpha} := \frac{1}{2}(\alpha_+-\alpha_-).
        \end{cases}
\end{equation}
Moreover, the moneyness is now defined by  $k_{\BG}:=\log (S_0/K)+(r-q+\zeta_\mathrm{BG})T$ where
\begin{equation}
    \zeta_\BG := -\ln \varphi_\BG (-\mathrm{i})
    =\left(\frac{\lambda_+}{\lambda_+-1}\right)^{\alpha_+}
    \left(\frac{\lambda_-}{\lambda_-+1}\right)^{\alpha_-}.
\end{equation}
In the rest of this section, we will still denote $k_\BG$ by $k$, on order to preserve the simplicity of the notations. 

In \cite[eq. 4.8]{kuchlerBG}, the authors show that the density function $f_\BG$ of the BG distribution can be written as:
\begin{equation}
    \forall x\in(0,+\infty),\quad
    f_\BG(x)
    =m_1\Gamma(1-\alpha_+)\Gamma(\alpha_-)
    x^{\underline{\alpha}-1}
    e^{-(\lambda_+-\lambda_-)x/2}
    W_{\overline{\alpha},\underline{\alpha}-1/2}(\underline{\lambda}x).
\end{equation}
Introducing the MB representation \cite[eq. 13.16.11]{DLMF} for the Whittaker function yields
% to the density expression given in \cite{BG} to get its Mellin-Barnes representation. If $f_\mathrm{BG}$ denotes the density function of a $\mathrm{BG}(\alpha_+, \lambda_+,\alpha_-,\lambda_-)$ distribution, 
    % we have for all $x>0$:
    \begin{equation}\label{eq:MB-density-BG}
        f_\mathrm{BG}(x)=
        m_1
        e^{-\lambda_+x}
        \int_{c+\mi \R}
        \Gamma\left(\underline{\alpha}+s\right)\Gamma(1-\underline{\alpha}+s)\Gamma(-\overline{\alpha}-s)\underline{\lambda}^{-s}x^{-s-1+\underline{\alpha}}
        \frac{\D s}{\mi 2\pi}
    \end{equation}
    where $c\in(-\underline{\alpha}\vee(\underline{\alpha}-1),-\overline{\alpha})$.
    % \begin{equation}
    %     \begin{cases}
    %     \displaystyle
    %     M_\mathrm{BG} := \frac{\lambda_+^{\alpha_+}\lambda_-^{\alpha_-}}{\underline{\lambda}^{\underline{\alpha}}\Gamma(\alpha_+)\Gamma(\alpha_-)\Gamma(1-\alpha_+)},\\
    %     \displaystyle
    %     \underline{\lambda} := \lambda_++\lambda_-,\\
    %     \displaystyle
    %     \underline{\alpha} := \frac{\alpha_++\alpha_-}{2},\\
    %     \displaystyle
    %     \overline{\alpha} := \frac{\alpha_+-\alpha_-}{2},\\
    %     \end{cases}
    % \end{equation}
    % and $\Real(c)\in\cdots$ \red{TODO}. The residue theorem leads to the series expansion for $x>0$:
    % \begin{equation}
    % \begin{aligned}
    %     f_\mathrm{BG}(x)=
    %     M_\mathrm{BG}&e^{-\lambda_+ x}
    %     \left(
    %     \sum_{n=0}^\infty 
    %     \frac{(-1)^n}{n!}\Gamma(\alpha_-+n)\Gamma(1-2\underline{\alpha}-n)\underline{\lambda}^{\underline{\alpha}+n}x^{-1+2\underline{\alpha}+n}
    %     \right.\\
    %     & \left. +\sum_{n=0}^\infty \frac{(-1)^n}{n!} \Gamma(1-\alpha_++n)\Gamma(-1+2\underline{\alpha}-n)\underline{\lambda}^{1-\underline{\alpha}+n}x^n\right). 
    % \end{aligned}
    % \end{equation}

    \begin{proposition}
        The values of a CN and of an AN call with moneyness $k<0$ admit the MB representations:
        \begin{equation}\label{eq:MB-digital-BG}
            \begin{cases}
                \displaystyle
                C_{\mathrm{CN}}^\BG
                =
                m_Te^{-rT} \int\limits_{\bm{c}+\mi\R}
                \frac{\mathbf{\Gamma}^\pi (U^{\mathrm{BG}}\bs+{\bu}^{\mathrm{BG}})}{s_2}
                \underline{\lambda}^{-s_1}
                \lambda_+^{s_1-s_2-\underline{\alpha}}
                (-k)^{-s_2}
                \frac{\D \bm{s}}{(\mi 2\pi)^2}\\
                \displaystyle C_{\mathrm{AN}}^\BG
                =
                Km_Te^{k-rT} \int\limits_{\bm{c}+\mi\R}
                \frac{\mathbf{\Gamma}^\pi (U^{\mathrm{BG}}\bs+\bu^{\mathrm{BG}})}{s_2}
                \underline{\lambda}^{-s_1}
                (\lambda_+-1)^{s_1-s_2-\underline{\alpha}}
                (-k)^{-s_2}
                \frac{\D \bs}{(\mi 2\pi)^2}
            \end{cases}
            .
        \end{equation}
        where:
    \begin{equation}
        U^{\mathrm{BG}} := 
        \begin{pmatrix}
            1 & 0\\
            1 &0 \\
            -1 & 0 \\
            -1 & 1 
        \end{pmatrix}
        ,\quad
        \bm{u}^{\mathrm{BG}} := 
        \begin{pmatrix}
            \underline{\alpha}T\\
            1-\underline{\alpha}T\\
            -\overline{\alpha}T\\
            \underline{\alpha}T
        \end{pmatrix}
    \end{equation}
    and $\bm{c}\in\{\bm{x}\in\R^2|~U^\BG \bm{x} + \bu^{\BG} \succ 0 ~\langle\bm{x},(0,1)\rangle >0  \}$.
    \end{proposition}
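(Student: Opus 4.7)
The plan is to mimic the arguments of propositions \ref{prop:CN-Mellin-GTS} and \ref{prop:AN-Mellin-GTS}, with the crucial simplification that the BG density admits the one-dimensional MB representation \eqref{eq:MB-density-BG} (applied to the process at time $T$ by replacing $\alpha_\pm$ by $\alpha_\pm T$, hence $\underline{\alpha}$ by $\underline{\alpha}T$, $\overline{\alpha}$ by $\overline{\alpha}T$ and $m_1$ by $m_T$), instead of the triple MB integral of proposition \ref{prop:density-Mellin-GTS}. This reduction is precisely what makes the resulting option-price representation two-dimensional rather than four-dimensional.

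For the CN case, I would substitute this MB density into the first line of \eqref{eq:digital-option-def}, apply Fubini to exchange the $x$ and $s_1$ integrations, and handle the remaining $e^{-\lambda_+ x}$ via the Cahen-Mellin identity \eqref{Cahen}, which introduces a second variable $s'$ together with a $\Gamma(s')$ factor and a $(\lambda_+ x)^{-s'}$ term. The $x$-integral then reduces as in \eqref{eq:integral-x} to
\[
\int_{-k}^{\infty} x^{-1 - s_1 + \underline{\alpha}T - s'}\,\D x
\;=\;
\frac{(-k)^{-s_1+\underline{\alpha}T-s'}}{s_1+s'-\underline{\alpha}T},
\]
which converges because $-k>0$ (the assumed moneyness condition) and because $\mathrm{Re}(s_1+s'-\underline{\alpha}T)>0$ on the contour. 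A linear change of variables $s_2 := s_1+s'-\underline{\alpha}T$, of unit Jacobian (i.e.\ a mere translation of the contour), turns the $\Gamma(s')$ factor into $\Gamma(\underline{\alpha}T + s_2 - s_1)$ and regroups the four gamma functions and the powers of $\underline{\lambda}$, $\lambda_+$ and $(-k)$ exactly in the form prescribed by the matrix $U^{\BG}$ and the vector $\bu^{\BG}$.

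For the AN call, the only modification is that the density's $e^{-\lambda_+ x}$ combines with the $e^x$ arising from the asset-or-nothing payoff to produce $e^{-(\lambda_+-1)x}$; this is well-defined thanks to the martingality condition $\lambda_+>1$ and admits the Cahen-Mellin representation \eqref{Cahen-lam-1}. The same computation then goes through verbatim, with $\lambda_+$ replaced by $\lambda_+-1$ and an extra $Ke^k$ prefactor coming from the payoff and from the change $C_{\CN}\mapsto C_{\AN}$.

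The only real bookkeeping step is the description of the admissible contour $\bm{c}$: positivity of the arguments of the four gamma functions corresponds exactly to $U^{\BG}\bm{x}+\bu^{\BG} \succ 0$, while convergence of the $x$-integral (equivalently, the constraint on $s_2$ after the change of variables) imposes $\langle \bm{x},(0,1)\rangle = c_2 > 0$. I do not expect any substantive obstacle beyond these convergence checks and the routine algebra that transforms the four-factor integrand into the compact $\mathbf{\Gamma}^\pi$ notation used in the statement.
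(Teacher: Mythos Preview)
Your proposal is correct and follows essentially the same route as the paper's own proof: insert the one-dimensional MB representation \eqref{eq:MB-density-BG} of the BG density (with $\alpha_\pm\to\alpha_\pm T$) into \eqref{eq:digital-option-def}, Mellin-expand the remaining exponential via \eqref{Cahen} (resp.\ \eqref{Cahen-lam-1} for the AN case), and integrate over $x$ as in \eqref{eq:integral-x}. The only detail you add beyond the paper's sketch is the explicit affine change of variable $s_2:=s_1+s'-\underline{\alpha}T$ needed to recast the integrand into the $U^{\BG},\bu^{\BG}$ form; this is exactly the right bookkeeping step.
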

    \begin{proof}
    The proof is similar to the proof of proposition \ref{prop:-cn-an-bg-mellin}: we insert the MB representation for the BG density \eqref{eq:MB-density-BG} in the CN (resp. AN) payoff of \eqref{eq:digital-option-def}, then we introduce a second MB representation for the $e^{-\lambda_+ x}$ (resp. $e^{-(\lambda_+-1)x}$) term, and last we integrate over the $x$ variable to obtain \eqref{eq:MB-digital-BG}.
        % The integrals are obtained by the same computations already done in the proof of proposition \ref{prop:-cn-an-bg-mellin}. \red{détailler un peu mieux: MB pour Whittaker et exponentielle + intégration sur x}
    \end{proof}
    
    Before providing the series expansion for the option prices, let us first define for $x_1<0$, $x_2>0$, for an integer $n\in\N$, and for a maturity $T>0$, the coefficients:
    \begin{equation}
        \label{eq:bg-series-coefficients}
        \begin{cases}
            \displaystyle
             c_n^{\BG, (1)}(x_1;x_2) :=  \Gamma(1 - 2 \underline{\alpha}T - n) \Gamma(\alpha_-T   + n) \underline{\lambda}^{n + \underline{\alpha}T} x_2^{-n - 2 \underline{\alpha}} \gamma_\mathrm{inc} (n + 2 \underline{\alpha}T, -x_1 x_2), \\
            \displaystyle
            c_n^{\BG, (2)}(x_1;x_2) := 
             \Gamma(- \alpha_+T  + 1 + n) \Gamma(2 \underline{\alpha}T - 1 - n) \underline{\lambda}^{- \underline{\alpha} T+ 1 + n} x_2^{-1 - n} \gamma_\mathrm{inc}  (n+ 1, -x_1 x_2),\\
             c_n^{\BG}(x_1;x_2) :=c_n^{\BG, (1)}(x_1;x_2)+c_n^{\BG, (2)}(x_1;x_2),
        \end{cases}
    \end{equation}
    and the function $h$ defined by:
    \begin{equation}
    \label{eq:h_function}
        h:x\in (-\underline{\lambda},\underline{\lambda})\mapsto \frac{\lambda_+^{\alpha_+T}\lambda_-^{\alpha_-T}    \Gamma(2 \underline{\alpha}T)}{\underline{\lambda}^{2\underline{\alpha}T}\Gamma(1+\alpha_+T)\Gamma(\alpha_-T)} 
    ~{_2}F_1 \left( 2\underline{\alpha}T,1 ; 1+\alpha_+T ; \frac{x}{\underline{\lambda}}\right),
    \end{equation}
    where $_2F_1$ is the Gauss hypergeometric function (see \cite{Abramowitz72} or any other classical monograph on special functions).
    \begin{proposition}
        \label{prop:series-digital-BG}
         The values of a CN and of an AN call $C_\CN^{\BG}$ and $C_\TS^{\BG}$ with moneyness $k<0$ admits the series representation:
         % The value of a CN call with moneyness k < 0 admits the series representation:
         \begin{equation}
            \label{eq:bg-series-digital}
             \begin{cases}
             \displaystyle
                 C_\CN^{\BG}  = e^{-rT}\left(1-h(\lambda_+)-m_T\sum_{n\in\N}
                 \frac{(-1)^n}{n!}c_n^{\BG}(k;\lambda_+)
                 \right),\\
             \displaystyle
                 C_\AN^{\BG}  = Ke^{k-rT}\left(e^{-\zeta_\mathrm{BG} T}-h(\lambda_+-1)-m_T\sum_{n\in\N}
                 \frac{(-1)^n}{n!}c_n^{\BG}(k;\lambda_+-1)
                 \right).
             \end{cases}
         \end{equation}
    \end{proposition}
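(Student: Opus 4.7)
The plan is to adapt the strategy used in the proof of proposition \ref{prop:cn-series-gts} to the two-dimensional MB integrals of \eqref{eq:MB-digital-BG}. Specifically, I would apply the multidimensional Jordan lemma of \cite{passare1994multidimensional, zhdanov1998} and express each integral as a sum of residues at the poles of the integrand located in the admissible half-space determined by the characteristic vector $\bm{\Delta}\in\R^2$ associated with the monomials $\underline{\lambda}^{-s_1}\lambda_+^{s_1-s_2-\underline{\alpha}T}(-k)^{-s_2}$ (and similarly with $\lambda_+-1$ in place of $\lambda_+$ for the AN case).

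Within this half-space, I expect three families of contributing poles to emerge. Family (A): the simple pole of $1/s_2$ at $s_2=0$ crossed with the left poles $s_1=-\underline{\alpha}T-n$ of $\Gamma(\underline{\alpha}T+s_1)$. Family (B): the left poles $s_1=-\underline{\alpha}T-n_1$ of $\Gamma(\underline{\alpha}T+s_1)$ crossed with the diagonal poles $s_2=s_1-\underline{\alpha}T-n_2$ of $\Gamma(\underline{\alpha}T-s_1+s_2)$, giving the location $s_2=-2\underline{\alpha}T-n_1-n_2$. Family (C): the left poles $s_1=\underline{\alpha}T-1-n_1$ of $\Gamma(1-\underline{\alpha}T+s_1)$ crossed with the same diagonal poles of $\Gamma(\underline{\alpha}T-s_1+s_2)$, giving the location $s_2=-1-n_1-n_2$. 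At each family I would compute the residues using the standard rule $\mathrm{Res}_{z=-\ell}\Gamma(z)=(-1)^\ell/\ell!$.

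For family (A), taking the residue at $s_2=0$ trivializes $(-k)^{-s_2}$ and summing the resulting $s_1$-residues over $n\in\N$ produces a power series in $\lambda_+/\underline{\lambda}$; after simplification via the reflection identity $\Gamma(a)\Gamma(1-a)=\pi/\sin(\pi a)$ and recognition of Pochhammer symbols, this series should reduce to the ${}_2F_1$ of the function $h(\lambda_+)$ defined in \eqref{eq:h_function}. For families (B) and (C), I would fix $n_1$ and sum the residues over $n_2$, yielding series of the form $\sum_{n_2\geqslant 0}\frac{(-k\lambda_+)^{n_2+a}}{n_2!\,(n_2+a)}$, which by \cite[eq. 8.7.1]{DLMF} is precisely the series expansion of the lower incomplete gamma function $\gamma_\mathrm{inc}(a,-k\lambda_+)$; this produces the coefficients $c_n^{\BG,(1)}$ (with $a=n_1+2\underline{\alpha}T$) and $c_n^{\BG,(2)}$ (with $a=n_1+1$) respectively. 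The constant ``$1$'' inside the bracket of the CN formula (and ``$e^{-\zeta_\BG T}$'' for the AN case) can be obtained either as an additional isolated contribution from closing the contour, or equivalently by invoking the normalization $\int_\R f_{X_T}(x)\D x=1$ (respectively $\int_\R e^x f_{X_T}(x)\D x=\varphi_\BG(-\mi)=e^{-\zeta_\BG T}$, which follows from the martingality condition \eqref{eq:def-zeta}) and matching with the residue sums. The AN case then differs only by the substitution $\lambda_+\to\lambda_+-1$ induced by the modified exponential term \eqref{Cahen-lam-1}, which propagates through all coefficient formulas.

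The main obstacle I anticipate is the recognition of the ${}_2F_1$ function arising from family (A): the raw residue computation yields a series in $\lambda_+/\underline{\lambda}$ whose coefficients are nontrivial ratios of $\Gamma$ functions depending on $\underline{\alpha}T$, $\overline{\alpha}T$ and $n$, and recasting them into the canonical Pochhammer form of the hypergeometric series requires careful sequential use of the reflection, duplication and Pochhammer identities. A secondary difficulty is the determination of the characteristic vector $\bm{\Delta}$ and the verification that the three families above exhaust the contributing poles in the admissible half-space, with no overlapping or missed double-pole contributions; the irrationality-type assumption on $\underline{\alpha}T$ and $\overline{\alpha}T$ (implicit in the BG modeling regime) ensures simple poles.
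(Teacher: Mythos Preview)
Your families (B) and (C) match the paper's (P3)$_{\BG}$ and (P4)$_{\BG}$ exactly, and your treatment of the inner $n_2$-sum via the lower incomplete gamma series is precisely what the paper does. The gap is in your family (A) and the accounting of the constant term.

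At $s_2=0$ you propose to pair the pole of $1/s_2$ with the \emph{left} poles $s_1=-\underline{\alpha}T-n$ of $\Gamma(\underline{\alpha}T+s_1)$. This is the wrong side. The characteristic vector here is $\bm{\Delta}=(0,1)$ (it comes from the coefficient rows of $U^{\BG}$, not from the monomials as you write), so the $s_1$-component is zero and the admissible direction in $s_1$ is decided by the power term $(\lambda_+/\underline{\lambda})^{s_1}$. Since $\lambda_+/\underline{\lambda}<1$, one must close to the \emph{right} in $s_1$, picking up the poles of $\Gamma(-\overline{\alpha}T-s_1)$ at $s_1=n-\overline{\alpha}T$ and of $\Gamma(\underline{\alpha}T-s_1+s_2)|_{s_2=0}$ at $s_1=n+\underline{\alpha}T$. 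These are \emph{two} distinct families at $s_2=0$, not one: the first sums to a ${}_1F_0(\alpha_-T;\lambda_+/\underline{\lambda})=(1-\lambda_+/\underline{\lambda})^{-\alpha_-T}$ and yields exactly $1/m_T$ (hence the constant ``$1$'' in the bracket, and $e^{-\zeta_{\BG}T}$ in the AN case); the second sums to the ${}_2F_1(2\underline{\alpha}T,1;1+\alpha_+T;\lambda_+/\underline{\lambda})$ that defines $-h(\lambda_+)/m_T$. Your proposed family (A) would instead produce a series in $(\underline{\lambda}/\lambda_+)^n$, which diverges, so neither the $h$-term nor the constant can be recovered from it; and the normalisation argument you offer as a fallback does not locate where that ``$1$'' sits inside the residue decomposition. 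Once you replace (A) by these two right-pole families, the rest of your outline goes through as written.
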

    \begin{proof}
        %     Using the formalism of [40, 48] in the C4 case, we denote the characteristic vector associ-
        % ated to the MB integral (48) by ∆ = (1/2 − β−1
        % + , 1/2 − β−1
        % − , 1, 1) The multidimensional Jordan
        % residue lemma [40] allows to express (48) as a sum of residues associated to the poles of the
        % integrand located in the subspace {s ∈ C4| ℜ[⟨∆, s⟩] < ⟨∆, c⟩}.
        Using the formalism of \cite{passare1994multidimensional, zhdanov1998} in the $\mathbb{C}^2$ case, we denote the characteristic vector associated to the MB integrals \eqref{eq:MB-digital-BG} by $\bm{\Delta}=(0,1)$.
        The multidimensional Jordan residue lemma \cite{passare1994multidimensional} allows to express \eqref{eq:CN-Mellin-GTS} as a sum of residues associated to the poles of the integrand located in the subspace
        $\{ s\in\mathbb{C}^2|~ \Re[\langle \bm{\Delta} , s \rangle] <  \langle \Delta, \bm{c} \rangle \}$. In this subspace, there are four sets of poles that are solutions of the four linear systems:
        % Using the formalism of
        % As done several times, we will show the result for the cash-or-nothing option only. In the formalism of \cite{zhdanov1998} and \cite[appendix C]{aguilarkirkby2022robust}, we have $\bm{\Delta} = (0,1)$.
        % The admissible sets of poles can be shown to be, for $\bm{n}=(n_1,n_2)\in\N^2$, solutions of the systems:
        %     \begin{enumerate}
        %         \item $U_{\{3\}}^{\BG}\bm{s}+ \bu_{\{3\}}^{\BG} = -n_1$ and $s_2=0$,
        %         \item $U_{\{4\}}^{\BG}\bm{s}+ \bu_{\{4\}}^{\BG} = -n_1$ and $s_2=0$,
        %         \item $U_{\{1,4\}}^{\BG}\bm{s}+ \bu_{\{1,4\}}^{\BG} = -\bm{n}$,
        % \end{enumerate}
        \begin{equation}            
            \begin{cases}
                 U_{\{3\}}^{\BG}\bm{s}+ \bu_{\{3\}}^{\BG} = -n_1~\text{and}~\langle (0,1),\bm{s}\rangle=0\\
                 U_{\{4\}}^{\BG}\bm{s}+ \bu_{\{4\}}^{\BG} = -n_1~\text{and}~\langle (0,1),\bm{s}\rangle=0\\
                 U_{\{1,4\}}^{\BG}\bm{s}+ \bu_{\{1,4\}}^{\BG} = -\bm{n}\\
                 U_{\{2,4\}}^{\BG}\bm{s}+ \bu_{\{2,4\}}^{\BG} = -\bm{n}.
            \end{cases}
        \end{equation}
        where $\bm{n}=(n_1,n_2)\in\mathbb N^2$. These four systems have unique solutions, respectively given by:
    \begin{equation}
        \begin{cases}
            \text{(P1)}_\mathrm{BG} \quad \bm{s}=(n_1-\overline{\alpha}T, 0),\\
            \text{(P2)}_\mathrm{BG} \quad \bm{s}=(n_1+\underline{\alpha}T, 0),\\
            \text{(P3)}_\mathrm{BG} \quad \bm{s} = (-n_1-\underline{\alpha}T,-2\underline{\alpha}T-n_2-n_1),\\
            \text{(P4)}_\mathrm{BG} \quad \bm{s} = (-n_1-\underline{\alpha}T-1,-1-n_2-n_1).
        \end{cases}
    \end{equation}
        % The poles are then at the parametrized locations $\bm{s} = (n_1-\overline{\alpha}, 0)$, $\bm{s} = (n_1+\underline{\alpha}, 0)$, $\bm{s} = (-n_1-\underline{\alpha},-2\underline{\alpha}-n_2-n_1)$ and $\bm{s} = (-n_1-\underline{\alpha}-1,-1-n_2-n_1)$. 
        Computing the residues at these poles yields, in the CN case:
        {\small
        \begin{equation}   
        \label{eq:bg-series-not-simplified}
        \begin{aligned}
            C_\mathrm{CN}^{\BG}
            &=m_Te^{-rT}\left(
            \sum_{n_1\in\N}
            \frac{(-1)^{n_1}}{n_1!}
            \Gamma(\alpha_-T+n_1)
            \Gamma(1-\alpha_+T+n_1)
            \Gamma(\alpha_+T-n_1)\underline{\lambda}^{\overline{\alpha}T-n_1}\lambda_+^{n_1-\alpha_+T}\right.\\
            &\quad + \sum_{n_1 \in\N}
            \frac{(-1)^{n_1}}{n_1!}
            \Gamma(n_1 + 2\underline{\alpha}T)\Gamma(1+n_1) \Gamma(-\alpha_+T-n_1)
            \underline{\lambda}^{-n_1-\underline{\alpha}T}\lambda_+^{n_1}\\
            &\quad -\sum_{\bm{n}\in\N^2}
           \frac{(-1)^{n_{1}+n_2}}{n_{1}!n_2!} \, \frac{\Gamma(1 - 2 \underline{\alpha}T - n_1) \, \Gamma(\alpha_{-}T + n_1)}{n_2 + n_1 + 2 \underline{\alpha}T} \times \underline{\lambda}^{n_1 + \underline{\alpha}T} \, \lambda_{+}^{n_2} \, (-k)^{n_2 + n_1 + 2 \underline{\alpha}T}\\
            &\left.\quad -\sum_{\bm{n}\in\N^2}\frac{(-1)^{n_{1}+n_2}}{n_{1}!n_2!} \, \frac{\Gamma(-\alpha_{+} + 1 + n_1) \, \Gamma(2 \underline{\alpha}T - 1 - n_1)}{n_2 + 1 + n_1} \times \underline{\lambda}^{-\underline{\alpha}T + 1 + n_1} \, \lambda_{+}^{n_2} \, (-k)^{n_2 + 1 + n_1}\right).
        \end{aligned}
    \end{equation}
    % The result follows from the computations done in appendix \ref{appendix:bg-first-pole} and \ref{appendix:bg-second-pole} for the first two set of poles and the use of the series expansion of the lower gamma incomplete function given by \cite[eq. 8.7.3]{DLMF} for the last two sets of poles.
    It is easy to see that the first series in \eqref{eq:bg-series-not-simplified} can be written as:
    % \begin{equation}
    %     \label{eq:appendix-interm-1}
    %     \Gamma(\alpha_-T)\Gamma(\alpha_+T)\Gamma(1-\alpha_+T)\frac{\underline{\lambda}^{\overline{\alpha }T}}{\lambda_+^{\alpha_+T}}\sum_{n\in\N}\frac{(\alpha_-T)_n}{n!}\left(\frac{\lambda_+}{\underline{\lambda}}\right)^n.
    % \end{equation}
    % which can itself be rewritten as 
    \begin{equation}
        \label{eq:appendix-interm-2}
        \Gamma(\alpha_-T)\Gamma(\alpha_+T)\Gamma(1-\alpha_+T)\frac{\underline{\lambda}^{\overline{\alpha }T}}{\lambda_+^{\alpha_+T}}~_1F_0 \left( \alpha_-T ; \frac{\lambda_+}{\underline{\lambda}}\right).
      \end{equation}
        Recalling that $_1 F_0 (a;z)= (1-z)^{-a}$ as soon as $|z|<1$, we can therefore simplify \eqref{eq:appendix-interm-2} 
        to $1/m_T$.}
    The second series in the brackets of \eqref{eq:bg-series-not-simplified} can be expressed as:
    \begin{equation}
    %     -\frac{\Gamma(\alpha_+T)\Gamma(1-\alpha_+T)\Gamma(2\underline{\alpha}T)}{\Gamma(1+\alpha_+T)}
        \Gamma(2\underline{\alpha})\Gamma(-\alpha_+)
        \underline{\lambda}^{-\underline{\alpha} T}
        \sum_{n\in\N} \frac{1}{n!}\frac{(2\underline{\alpha}T)_n}{(1+\alpha_+T)_n}\left(\frac{\lambda_+}{\underline{\lambda}}\right)^n
        % =
        % -\frac{1}{m_T} h(\lambda_+)
    \end{equation}
    which is the same than $\Gamma(2\underline{\alpha})\Gamma(-\alpha_+)
        \underline{\lambda}^{-\underline{\alpha} T} \!_2F_1(2\underline{\alpha},1;1+\alpha_+,\lambda_+/\underline{\lambda})$ (see \cite[eq. 8.7.1]{DLMF}) which, by definition (see equation \eqref{eq:h_function}), is equal to $-h(\lambda_+)/m_T$.
    % where one recognize the definition of the function $_2F_1(2\underline{\alpha},1;1+\alpha_+,\lambda_+/\underline{\lambda})$. Multiplying par $m_T$ gives the desired result.
        % \cite[eq. 8.7.1]{DLMF}
    % A direct adaptation in the asset-or-nothing case leads to $e^{-\zeta}$.
    Third and fourth series in \eqref{eq:bg-series-not-simplified} can be expressed with lower incomplete gamma functions by recognizing its series expansions over $n_2$ (see \cite[eq. 8.7.1]{DLMF}).
    For the AN case, the same computations leads to $e^{-\zeta_\mathrm{BG}T}$ instead of 1 for the first series, and $\lambda_+$ is replaced by $\lambda_+-1$ for the other series. 
    \end{proof}

As before in the general TS case (see \eqref{eq:payoff_Eureopean}), the price of the European claim $C_\EUR^{\BG}$ can be written as $C_\text{Eur}^{\BG} = C_\text{AN}^{\BG} - KC_\text{CN}^{\BG}$.

\begin{remark}
    The extension to $k>0$ is straightforward by the adaptation of remarks \ref{rem:cn-positive-moneyness} and \ref{rem:an-positive-moneyness}.
    Moreove, since $n + 2 \underline{\alpha}T$ and $n+ 1$ are positive, the two gamma incomplete functions $\gamma_\mathrm{inc} (n + 2 \underline{\alpha}T, -x_1 x_2)$ and $   \gamma_\mathrm{inc}  (n+ 1, -x_1 x_2)$ are identically null if $x_1=k=0$. Consequently, the at-the-money (ATM) price of a European option in the BG model reduces to:   
    % \eqref{eq:bg-series-coefficients} are positive, the series in \eqref{eq:bg-series-digital} vanish for $k=0$.
    % At the money, i.e. $k=0$, the value of an European option is then:
    \begin{equation}
        \label{eq:atm-bg}
        C_\mathrm{Eur}^\BG
        =
        Ke^{-rT}\left(e^{-\zeta_\mathrm{BG}}+h(\lambda_+)-h(\lambda_+-1)-1\right).
    \end{equation}
    In appendix \ref{appendix:atm-bg-kt}, we show that this result is the same as the one obtained in \cite[eq. 8.10]{kuchlerBG}.
\end{remark}

%%%%%%%%%%%
\subsection{Variance Gamma}
%%%%%%%%%%%%

The VG process of \cite{Madan98} is a BG process with the restriction $\alpha_+=\alpha_-=\alpha$. The pricing formulas from proposition \ref{prop:series-digital-BG} remain valid, with $\underline{\alpha}=\alpha$ and $\overline{\alpha}=0$, and with the traditional VG parametrization:
\begin{equation}\label{VG_notations}
    \left\{
    \begin{aligned}
        & \sigma^2 = \frac{2\alpha}{\lambda_+\lambda_-}, \\
        & \theta = \frac{\alpha}{\lambda_+} - \frac{\alpha}{\lambda_-}, \\
        & \nu = \frac{1}{\alpha}.
    \end{aligned}
    \right.
\end{equation}
When furthermore $\lambda_+=\lambda_-=\lambda$, then $\theta=0$ and the VG model is symmetric (see \cite{Madan90}); replacing in \eqref{eq:h_function} and using Legendre's duplication formula yields:
\begin{equation}
    h(x)
    =
    % \frac{    2^{2\alpha T -1}\Gamma(1/2 + {\alpha}T)}{\sqrt{\pi}\Gamma(1+\alpha T)} 
    \frac{ \Gamma(\frac{1}{2} + \frac{T}{\nu})}{2\sqrt{\pi}\Gamma(1+\frac{T}{\nu})}
    ~{_2}F_1 \left( \frac{2T}{\nu},1 ; 1+\frac{T}{\nu} ; \frac{x}{\lambda}\right)
\end{equation}
and the ATM price \eqref{eq:atm-bg} is:
% \begin{multline}
%         \label{eq:atm-symvg}
%         C_\mathrm{Eur}^\mathrm{sVG}
%         =
%         Ke^{-rT}
%         \times \left(e^{-\zeta_\mathrm{sVG}T}-1\right.\\
%         \left.
%         +
%         \frac{    2^{2\alpha T -1}\Gamma(1/2 + {\alpha}T)}{\sqrt{\pi}\Gamma(1+\alpha T)}
%         \left(
%         {_2}F_1 \left( 2\alpha T,1 ; 1+\alpha T ; \frac{1}{2}\right)
%         -{_2}F_1 \left( 2\alpha T,1 ; 1+\alpha T ; 1-\frac{1}{\lambda}\right)\right)
%         \right).
% \end{multline}
% where:
% \begin{equation}
%     \zeta_\mathrm{sVG} := - \alpha\ln \left[\frac{\lambda^2}{\lambda^2-1}\right].
% \end{equation}
\begin{multline}
        \label{eq:atm-symvg}
        C_\mathrm{Eur}^\mathrm{sVG}
        =
        Ke^{-rT}
        \times \left(
        \left(1-\frac{\sigma^2\nu}{2}\right)^{-\frac{T}{\nu}}-1\right.\\
        \left.
        +
        \frac{ \Gamma(1/2 + \frac{T}{\nu})}{2\sqrt{\pi}\Gamma(1+\frac{T}{\nu})}
        \left(
        {_2}F_1 \left( \frac{2T}{\nu},1 ; 1+\frac{T}{\nu} ; \frac{1}{2}\right)
        -{_2}F_1 \left( \frac{2T}{\nu},1 ; 1+\frac{T}{\nu} ; 1-\frac{\sigma\sqrt{\nu}}{2^{3/2}}\right)\right)
        \right).
\end{multline}

% %%%%%%%%%%%
% \subsection{FMLS}
% %%%%%%%%%%%%

% %%%%%%%%%%%
% \subsection{BSM?}
%%%%%%%%%%%%

%%%%%%%%%%%%%%%%%%%%%%%%%%%%%%%%
% \section{Case of one sided tempered stable processes}
%%%%%%%%%%%%%%%%%%%%%%%%%%%%%%%%%

% \subsection{Option pricing under GTS distribution}

%%%%%%%%%%%%%%%%%%%%%%%%%%%%%%%
%%%%%%%%%%%%%%%%%%%%%%%%%%%%%%%
%%%%%%%%%%%%%%%%%%%%%%%%%%%%%%%

\section{Option pricing under one-sided tempered stable processes}
\label{sec:opt-pricing-one-sided}

%%%%%%%%%%%%%%%%%%%%%%%%%%%%%%%%%%
Let us now focus on another important class of TS processes, namely one-sided TS processes, whose L\'evy measure is either supported by the positive or negative real axis only. This class is actually important, and contains several models used for option pricing or credit risk purpose for instance in \cite{carr2003finite,MadanSchoutens08}.
We will see that for one-sided TS distributions, option prices can be expressed as series over a single index, which constitutes an improvement when compared to the series obtained in \cite{AguilarKirkbyResidue} where 3 summation indices are needed, and will allow for an accelerated pricing convergence. 

Throughout this section, the underlying log-price process $(X_t)_{t\geqslant 0}$ is therefore assumed to be a $\TS(\alpha,\beta,\lambda)$ process. We still keep the definition $a:=-\alpha\Gamma(-\beta)$ and we adapt the definition of $\gamma$ by setting now $\gamma:=a\lambda^\beta$. 
The moneyness is now defined by $k_{\TS_+}:=\log (S_0/K)+(r-q+\zeta_{\mathrm{TS}_+})T$ where
\begin{equation}
    \zeta_{\TS_+} := -\ln \varphi_{\TS_+} (-\mathrm{i})
    =-a((\lambda-1)^\beta - \lambda^\beta)
\end{equation}
In the rest of this section, we will still denote $k_{\TS_+}$ by $k$, on order to preserve the simplicity of the notations.

%%%%%%%%%%%
\subsection{Mellin-Barnes representation}
%%%%%%%%%%%%
As done for double-sided TS distributions, we first introduce Mellin-Barnes representations for digital option prices.

\begin{proposition}
    \label{prop:-cn-an-bg-mellin}
    The values of a CN and of an AN call with moneyness $k<0$ admit the MB representations:
    \begin{equation}
        \label{eq:digital-ts-one-sided}
        \begin{cases}
        \displaystyle
            C_\CN^+
            =
            \frac{e^{(\gamma-r)T}}{\beta}
            \int_{\bm{c}+\mi\R^2} 
            \frac{\GP{U^{+}\bs}{D^{+}\bs}}{s_2}
            (-k)^{-s_2}
            (aT)^{s_1/\beta}\lambda^{s_1-s_2}
            \frac{\D\bs}{(\mi2\pi)^2},\\
            \displaystyle
            C_\AN^+
            =
            \frac{Ke^{k+(\gamma-r)T }}{\beta}
            \int_{\bm{c}+\mi\R^2} 
            \frac{\GP{U^{+}\bs}{D^{+}\bs}}{s_2}
            (-k)^{-s_2}
            (aT)^{s_1/\beta}(\lambda-1)^{s_1-s_2}
            \frac{\D\bs}{(\mi2\pi)^2},\\
        \end{cases}
    \end{equation}
    where:
    \begin{equation}
        U^{+}
        :=
        \begin{pmatrix}
            -\beta^{-1} & 0\\
            -1 & 1
        \end{pmatrix},\quad
        D^{+}
        :=
        \begin{pmatrix}
            -1 & 0
        \end{pmatrix}
        ,
    \end{equation}
    and $\bm{c}\in \{\bm{x}\in\R^2|~x_1<0,~x_2>0\}$.
\end{proposition}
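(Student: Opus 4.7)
The plan is to closely mirror the strategy used in the proof of proposition \ref{prop:CN-Mellin-GTS} for the double-sided case, with the substantial simplification that only one MB representation is required for the density itself (instead of the three-dimensional MB integral obtained via the Whittaker function). Starting from the integral definition \eqref{eq:digital-option-def} of $C_\CN^+$ (resp. $C_\AN^+$), the first step is to insert the MB representation from proposition \ref{prop:density-Mellin-TS} for the density of $X_T \sim \TS_+(\alpha T,\beta,\lambda)$, which introduces the prefactor $e^{(\gamma-r)T}/\beta$ (resp. $Ke^{k+(\gamma-r)T}/\beta$ after the $e^x$ from the AN payoff is absorbed into the exponential) and leaves an integrand involving $\Gamma(-s_1/\beta)/\Gamma(-s_1)\cdot(aT)^{s_1/\beta}\cdot x^{-s_1-1}$ together with a residual $e^{-\lambda x}$ (resp. $e^{-(\lambda-1)x}$) factor under the $x$-integral on $[-k,+\infty)$.

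Next I would introduce a second, one-dimensional MB representation for the surviving exponential term, analogous to \eqref{Cahen} (resp. \eqref{Cahen-lam-1}):
\begin{equation*}
    e^{-\lambda x} \;=\; \int_{c_2+\mi\R}\Gamma(s_2)(\lambda x)^{-s_2}\frac{\D s_2}{\mi 2\pi},\qquad c_2>0,
\end{equation*}
the analogue with $\lambda-1$ being well-defined thanks to the martingality condition $\lambda>1$. After swapping the order of integration (Fubini's theorem applied on the chosen vertical lines, using the standard exponential decay of $|\Gamma|$ along imaginary directions), the innermost integral reduces to
\begin{equation*}
    \int_{-k}^{+\infty} x^{-s_1-s_2-1}\,\D x \;=\; \frac{(-k)^{-s_1-s_2}}{s_1+s_2},
\end{equation*}
which holds since $-k>0$ and the contours are chosen so that $\Re(s_1+s_2)>0$.

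The last step is a linear change of integration variable $(s_1,s_2)\mapsto(s_1,\,s_1+s_2)$, which maps $\Gamma(s_2)$ to $\Gamma(-s_1+s_2)$, $\lambda^{-s_2}$ to $\lambda^{s_1-s_2}$ (resp. with $\lambda-1$ in place of $\lambda$ in the AN case), and $(-k)^{-s_1-s_2}/(s_1+s_2)$ to $(-k)^{-s_2}/s_2$. Reading off the matrices $U^{+}$ and $D^{+}$ from the resulting product of gamma functions then yields exactly the announced representations \eqref{eq:digital-ts-one-sided}. The admissibility condition $\bm{c}\in\{\bm{x}\in\R^2\mid x_1<0,\,x_2>0\}$ encodes simultaneously the location of the original contour for the density MB representation ($c_1<0$ so that $\Gamma(-s_1/\beta)$ is in its fundamental strip) and the convergence condition ($c_2>0$) for both the exponential MB representation and the $x$-integral after the change of variable.

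The only mild technical obstacle is the justification of the two successive Fubini exchanges, but this is entirely standard for MB integrals of this type once one checks that the product of the arising gamma factors has exponential decay of sufficient order along the vertical lines $\bm{c}+\mi\R^2$; this is guaranteed by the factor $\Gamma(-s_1/\beta)/\Gamma(-s_1)$ (which decays like $|s_1|^{-(1-1/\beta)\Re(s_1)}$ along vertical lines for $\beta\in(0,1)$) together with the Stirling asymptotics of $\Gamma(s_2)$. Beyond that verification, the proof is essentially a direct computation.
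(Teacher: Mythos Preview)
Your argument is correct. The paper takes a slightly different route: rather than introducing a second MB variable for $e^{-\lambda x}$ via the Cahen--Mellin integral, it first performs the $x$-integral $\int_{-k}^\infty e^{-\lambda x}\,x^{-s-1}\,\D x = \lambda^{s}\,\Gamma_{\mathrm{inc}}(-s,-k\lambda)$ directly, obtaining a one-dimensional MB integral involving the upper incomplete gamma function, and then invokes the known MB representation of $\Gamma_{\mathrm{inc}}$ (DLMF~8.6.11) to arrive at the two-dimensional integral~\eqref{eq:digital-ts-one-sided}. Your approach is essentially the one the paper itself uses in the double-sided case (proposition~\ref{prop:CN-Mellin-GTS}), and the two are equivalent since the MB representation of the incomplete gamma function is precisely what one obtains by combining the Cahen--Mellin integral with the elementary $x$-integration. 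The paper's shortcut avoids the final change of variable $(s_1,s_2)\mapsto(s_1,s_1+s_2)$, while yours has the merit of being a direct specialization of the double-sided argument. One minor remark: your Stirling estimate for $\Gamma(-s_1/\beta)/\Gamma(-s_1)$ along vertical lines understates the decay---it is in fact exponential, of order $e^{-\pi|\Im s_1|(\beta^{-1}-1)/2}$, rather than merely polynomial---but this only makes the Fubini justification easier, not harder.
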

\begin{proof}
    We only prove the result for the CN call option since the proof is identical for the AN call option (up to a modification of $\lambda$ and multiplicative constants). Combining \eqref{eq:digital-option-def} and \eqref{eq:density-Mellin-TS} yields:
    \begin{equation}
        C_\CN^{\TS_+} = \frac{e^{(\gamma-r)T}}{\beta}
        \int_{c_1+\mi \R} 
         \frac{\Gamma\left(-\frac{s}{\beta}\right)}{\Gamma(-s)}(aT)^{s/\beta}\lambda^{s}
         \Gamma_\mathrm{inc}(-s,-k\lambda)
        \frac{\D s}{\mi 2 \pi}
    \end{equation}
    where $\Real(c_1)<0$ and $\Gamma_\mathrm{inc}$ is the upper incomplete Gamma function (see \cite[eq.8.2.2]{DLMF}). Introducing a Mellin-Barnes representation for $\Gamma_\mathrm{inc}$ (see \cite[eq. 8.6.11]{DLMF}) yields the desired result. 
\end{proof}

\subsection{Series expansion}
%For a negative moneyness $k<0$, a positive number $x>0$ and $n\in\N$, 
Let us start by defining the coefficient $c_n^{+}(x_1;x_2)$ as follows:
for a tuple $(x_1,x_2)\in (-\infty,0)\times (0,\infty)$ and $n\in\N$, we
let:
\begin{equation}
    c_n^+(x_1;x_2) := x_2^{\beta n}\Gamma(-n\beta,-x_1x_2).
\end{equation}

\begin{proposition}
    \label{prop:digital-series-one-sided-ts}
    The values of a CN and of an AN call $C_\CN^+$ and $C_\AN^+$ with moneyness $k<0$ admits the series representation:
        \begin{equation}
        \begin{cases}
            \displaystyle
            C_\CN^+ = e^{(\gamma-r)T}\sum_{n=0}^\infty\frac{(-1)^{n}(aT)^{n}}{n!\Gamma(-n\beta)}c_n^+(k;\lambda),\\
            \displaystyle
            C_\AN^+ := Ke^{k+(\gamma-r)T}
            \sum_{n=0}^\infty\frac{(-1)^{n}(aT)^{n}}{n!\Gamma(-n\beta)}c_n^+(k;\lambda-1).
        \end{cases}
        \end{equation}
\end{proposition}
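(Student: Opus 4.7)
The plan is to apply iterated residue calculus to the two-dimensional MB representations obtained in Proposition~\ref{prop:-cn-an-bg-mellin}, exploiting the fact that the inner $s_2$-integral collapses to an upper incomplete gamma function, which then reduces the problem to the one-dimensional machinery already used in Proposition~\ref{prop:series_density_TS}.

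First I would treat the CN case. For fixed $s_1$ with $\Re(s_1)=c_1<0$, the inner integral
\begin{equation*}
I(s_1) := \int_{c_2+\mi\R} \frac{\Gamma(-s_1+s_2)}{s_2}(-k\lambda)^{-s_2}\frac{\D s_2}{\mi 2\pi}
\end{equation*}
is, with $-k\lambda>0$ and $c_2>\max(0,c_1)=0$, exactly the Mellin--Barnes representation of $\Gamma(-s_1,-k\lambda)$ (inverse Mellin transform of the identity $\int_0^\infty x^{s-1}\Gamma(a,x)\,\D x = \Gamma(a+s)/s$; equivalently \cite[Eq. 8.6.11]{DLMF}). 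Substituting back yields the single contour integral
\begin{equation*}
C_\CN^+ = \frac{e^{(\gamma-r)T}}{\beta}\int_{c_1+\mi\R}\frac{\Gamma(-s_1/\beta)}{\Gamma(-s_1)}(aT)^{s_1/\beta}\lambda^{s_1}\Gamma(-s_1,-k\lambda)\,\frac{\D s_1}{\mi 2\pi}.
\end{equation*}

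Next I would apply the one-dimensional Jordan lemma exactly as in the proof of Proposition~\ref{prop:series_density_TS}: the characteristic quantity is $\Delta = -1/\beta < 0$, so the contour is closed to the right of $c_1$ and only the poles of $\Gamma(-s_1/\beta)$ at $s_1=\beta n$, $n\in\N$ (with residue $\beta(-1)^n/n!$) contribute. Summing residues gives
\begin{equation*}
C_\CN^+ = e^{(\gamma-r)T}\sum_{n=0}^\infty \frac{(-1)^n(aT)^n}{n!\,\Gamma(-\beta n)}\,\lambda^{\beta n}\,\Gamma(-\beta n,-k\lambda),
\end{equation*}
and recognising $\lambda^{\beta n}\Gamma(-\beta n,-k\lambda) = c_n^+(k;\lambda)$ delivers the stated formula. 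The AN case is identical up to the substitution $\lambda\mapsto\lambda-1$ (stemming from the exponential $e^{-(\lambda-1)x}$ appearing in the derivation of \eqref{eq:digital-ts-one-sided}), which is licit under the martingality condition $\lambda>1$.

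The main obstacle is the careful identification of the inner $s_2$-integral with the MB representation of the upper incomplete gamma function and the verification that the initial strip $(c_1,c_2)\in(-\infty,0)\times(0,\infty)$ chosen in Proposition~\ref{prop:-cn-an-bg-mellin} lies precisely in its region of validity; this is what allows the two-dimensional residue problem to collapse to a single one-dimensional series, avoiding the compatible-pole-set bookkeeping needed in the double-sided case of Propositions~\ref{prop:cn-series-gts}--\ref{prop:an-series-gts}.
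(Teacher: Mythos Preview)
Your approach is correct and takes a genuinely different, more streamlined route than the paper. The paper keeps the two-dimensional representation \eqref{eq:digital-ts-one-sided} and applies the full $\C^2$ residue machinery: it identifies two compatible families of poles $(\mathrm{P}1)_{\TS_+}$ and $(\mathrm{P}2)_{\TS_+}$, sums to obtain a double series in $(n_1,n_2)$, and only \emph{afterwards} recognises the inner $n_2$-sum as the Taylor expansion of the lower incomplete gamma function, yielding $\Gamma(-n_1\beta)-\gamma_{\mathrm{inc}}(-n_1\beta,-k\lambda)=\Gamma(-n_1\beta,-k\lambda)$. You instead reverse the very last step of the proof of Proposition~\ref{prop:-cn-an-bg-mellin}: since that MB representation was obtained by \emph{expanding} $\Gamma_{\mathrm{inc}}(-s,-k\lambda)$, collapsing the $s_2$-integral back to the upper incomplete gamma is free, and the problem becomes a clean one-dimensional residue computation in which only the simple poles of $\Gamma(-s_1/\beta)$ survive. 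What your route buys is the complete avoidance of the two-family bookkeeping and the subsequent resummation; what the paper's route buys is that the integrand stays within the pure gamma-ratio class to which the Zhdanov--Tsikh formalism directly applies.

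Two small technical points. First, the characteristic quantity you quote should be $\Delta=1-\beta^{-1}$ (as in Proposition~\ref{prop:series_density_TS}), not $-\beta^{-1}$: the factor $1/\Gamma(-s_1)$ contributes $+1$. This is harmless since $\Delta<0$ either way. Second, and more substantively, the Zhdanov--Tsikh characteristic-quantity criterion is formulated for integrands that are ratios of gamma functions times powers; your integrand carries the extra factor $\Gamma(-s_1,-k\lambda)$, which lies outside that class. You should therefore justify separately that this factor does not obstruct closing the contour to the right---it is entire in $s_1$ for fixed $-k\lambda>0$ and has sub-gamma growth there, so the Jordan-type estimate still goes through, but this deserves an explicit sentence rather than an appeal to Proposition~\ref{prop:series_density_TS}.
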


\begin{proof}
    % Again, we only do the proof for the cash-or-nothing option. In the formalism of \cite{zhdanov1998}and \cite[appendix C]{aguilarkirkby2022robust}, we have $\bm{\Delta} = (-\beta^{-1},1)$. It can be shown that the two set of poles in the admissible spaces are obtained by solving the systems:
    Using the formalism of \cite{passare1994multidimensional, zhdanov1998} in the $\mathbb{C}^2$ case, we denote the characteristic vector associated to the MB integrals \eqref{eq:digital-ts-one-sided} by $\bm{\Delta}=(-\beta^{-1},1)$.
        The multidimensional Jordan residue lemma \cite{passare1994multidimensional} allows to express \eqref{eq:digital-ts-one-sided} as a sum of residues associated to the poles of the integrand located in the subspace
        $\{ s\in\mathbb{C}^2|~ \Re[\langle \bm{\Delta} , s \rangle] <  \langle \Delta, \bm{c} \rangle \}$. In this subspace, there are two sets of poles that are solutions of the two linear systems:
        \begin{equation}
            \begin{cases}
                U^+\bm{s} = -\bm{n},\\
                U_{\{1\}}^+\bm{s} = -n_1~\text{and}~\langle (0,1),\bm{s}\rangle = 0,
            \end{cases}
        \end{equation}
    % \begin{enumerate}
    %     \item $U^+\bm{s} = -\bm{n}$,
    %     \item $U_{\{1\}}^+\bm{s} = -n_1$ and $s_2 = 0$,
    % \end{enumerate}
    where $\bm{n}:=(n_1,n_2)\in\N^2$. 
    % The poles are then explicitly localized in $\bm{s} = (n_1\beta,-n_2+n_1\beta)$ and $\bm{s} = (n_1\beta,0)$. 
    %Computing the residue at these locations yields:
    These two systems have unique solutions, respectively given by:
    \begin{equation}
        \begin{cases}
            (\mathrm{P}1)_{\TS_+}
             \quad \bm{s} = (n_1\beta,-n_2+n_1\beta),\\
            (\mathrm{P}2)_{\TS_+}
            \quad
            \bm{s} = (n_1\beta,0).\\
        \end{cases}
    \end{equation}
    Computing and summing the residues at these poles yields, in the CN case:
    \begin{equation}
        C_\CN^+ = 
        e^{(\gamma-r)T}\left[\sum_{\bm{n}\in\N^2}\frac{(-1)^{n_{1+2}}(aT)^{n_1}\lambda^{n_2}(-k)^{-n_1\beta+n_2}}{n_{1,2}!\Gamma(-n_1\beta)(n_1\beta-n_2)}
        +
        \sum_{n_1\in\N}
        \frac{(-1)^{n_{1}}}{n_{1}!}(aT)^{n_1}\lambda^{\beta n_1}
        \right]
    \end{equation}
    that can be further simplified to get:
    \begin{equation}
        C_\CN^+ = 
        e^{(\gamma-r)T}
        \sum_{n_1\in\N}
        \frac{(-1)^{n_{1}}(aT)^{n_1}\lambda^{\beta n_1}}{n_{1}!\Gamma(-n_1\beta)}
        \left[
        \Gamma(-n_1\beta)
        -
        \sum_{n_2\in\N}\frac{(-1)^{n_{2}}(-k\lambda)^{-n_1\beta+n_2}}{n_{2}!\Gamma(-n_1\beta)(n_2-n_1\beta)}
        \right].
    \end{equation}
    Recognizing the series expansion of the lower incomplete gamma function (see \cite[eq. 8.7.3]{DLMF}) yields the desired result. Similar computation can be performed for the AN option.
\end{proof}

% Using the payoff relation \eqref{eq:payoff_Eureopean}, the price of a European option with negative moneyness $k$ can be written as:
As usual, it follows from \eqref{eq:payoff_Eureopean}) that the price of the European claim $C_\EUR^{+}$ can be written as $C_\text{Eur}{^+} = C_\text{AN}^{+} - KC_\text{CN}^{+}$ and we have:
% The price of a European option with negative moneyness can be then recovered with the payoff relation \eqref{eq:payoff_Eureopean} and we have:
\begin{equation}
    C_{\EUR}^+ = 
    Ke^{(\gamma-r)T}
    \sum_{n\in\N}
    \frac{(-1)^{n}(aT)^{n}}{n!\Gamma(-n\beta)}c_{\EUR,n}^+(k;\lambda)
\end{equation}
where $c_{\EUR,n}^+(k;\lambda) := e^kc_n^+(k;\lambda-1)-c_n^+(k;\lambda)$.

\begin{remark}
    The ATM case ($k=0$) is interesting for its simplicity. In such a case, the CN integral in \eqref{eq:digital-option-def} sums to 1 as the density of $\TS_+$ distribution is supported on $\R_+$ and the AN integral sums to $\varphi_{\TS_+}(-\mi) = \exp(-\zeta_{\TS_+})$. It comes:
    \begin{equation}
    \label{eq:atm-ts-+}
        C_{\EUR}^+
        =
        Ke^{-rT}\left(e^{k+aT((\lambda-1)^{\beta}-\lambda^\beta)}-1\right) 
        % =Ke^{-rT}(S_0/K e ^{(r-q+\zeta)T - \zeta T} - 1)
        =
        S_0
        e^{-qT}
        -Ke^{-rT}.
    \end{equation}
    The r.h.s. of \eqref{eq:atm-ts-+} is nothing else than the call-put parity $C_{\EUR}^+-P_{\EUR}^+$ where  $P_{\EUR}^+$ is the value of a European put, which is worthless as soon as $k\geq 0$ because the $\TS_+$ distribution has only positive jumps. 
    The value of European option with $k>0$ remains the same as the ATM one, since 
    $\mathrm{supp}~f_{\TS_+}  = \R_+$.
    % the density of the $\TS_+$ distribution is supported by the positive semi-axis.
\end{remark}

\section{Numerical results}
\label{sec:num-results}
In this section, we provide numerical checks as well as a detailed performance study of the Mellin pricing series we obtained for the general double-side TS model, and also for specific cases that are of practical importance as they recover popular models (bilateral gamma, Variance Gamma, spectrally one-sided processes...); implementation is publicly available in the repository {\sffamily TS-Pricing} (written in {\sffamily Python}).
As previously mentioned, one of the major advantages of TS distributions is the availability of an explicit characteristic function, allowing for the use of Fourier based methods for option pricing, \textit{e.g.} Lewis-Lipton \cite{Lewis01}, Carr-Madan \cite{Carr99}, Hilbert \cite{phelan2019hilbert}, COS \cite{fang2009novel}, PROJ \cite{kirkby2015efficient} and making these methods ideal challengers to assess the accuracy and efficiency of our Mellin pricing series. 

Throughout this section, we will use the following set of parameters:
\begin{equation}
    \label{eq:numerical-params}
    \begin{cases}
        \lambda_+ =0.44,\\
        \beta_+ = 0.1 + e/10,\\
        \alpha_+ = 1.4,\\
        \lambda_- = 0.35 ,\\
        \beta_- = 0.5-\pi/100,\\
        \alpha_- = 0.4.\\
    \end{cases}
\end{equation}
% where the values of $\beta_\pm$ are set to be irrational. 
The risk-free rate $r$ and dividend yield $q$ will be set to the values $r=0.02$ and $q=0.05$. 
All the experiences have been conducted on a personal laptop with CPU Intel Core i5-1021u-1.60GHz and operating system Ubuntu 24.04.

%%%%
\subsection{Numerical checks}
%%%%
The first set of tests that we conduct is aimed at checking the validity of the pricing formulas we have obtained. We consider the most general case (i.e., the double-sided TS model), and we compare the European call prices obtained via the Mellin series in proposition \ref{prop:eur-series-gts} to the prices obtained via the PROJ method; of course the choice of the PROJ method is arbitrary and any other Fourier related pricing method could be chosen as a reference, however we deliberately choose to compare to PROJ prices because the accuracy and efficiency of this method over other Fourier methods has been well evidenced in the literature across all vanilla and exotic options (see e.g. \cite{kirkby2015efficient,kirkby2020asian,kirkby2017bermudan}). We perform these comparisons for various spot price $S_0$ and maturity $T$, and display the results in figure \ref{fig:different-S0}: it is clear that the series prices perfectly match the PROJ prices across all moneyness. Let us note that we choose to perform the summations up to order $n_1=n_2=n_3=80$, in the series prices but, actually and as will be demonstrated in the next subsections, a truncation to a far lower order would ensure excellent numerical accuracy in most situations.

% for different values for different moneyness (varying the initial process value $S_0$ and the option maturity $T$). In figure \ref{fig:different-S0}, we compare our expansions to PROJ method for the double-sided model for European options. The graph show a  perfect matching between the two procedure. 

\begin{figure}
    \centering
    % Première ligne
    \begin{subfigure}[b]{0.48\textwidth}
        \centering
        \includegraphics[width=\textwidth]{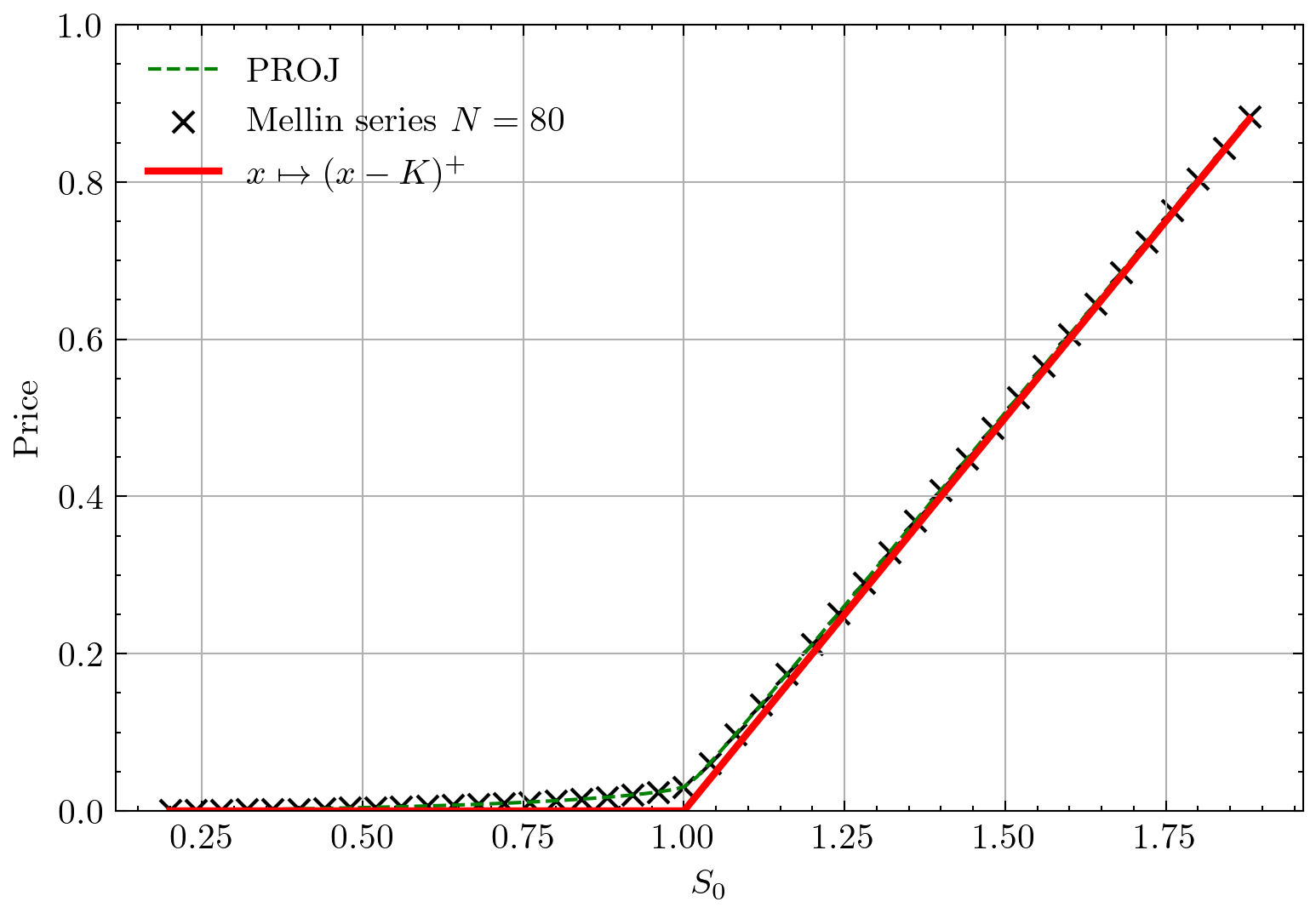}
        \caption{$T=0.05$}
    \end{subfigure}
    \hfill
    \begin{subfigure}[b]{0.48\textwidth}
        \centering
        \includegraphics[width=\textwidth]{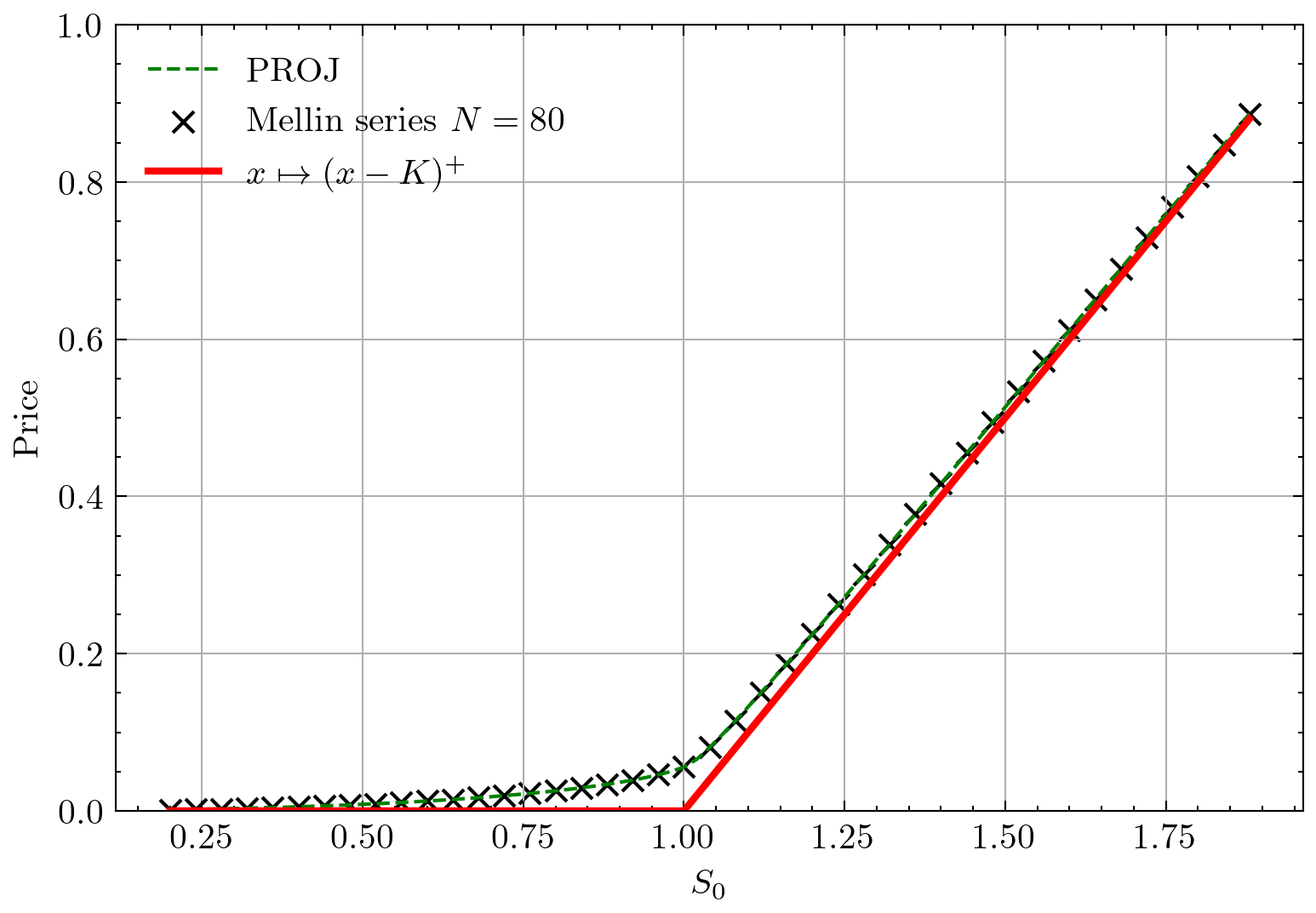}
        \caption{$T=0.1$}
    \end{subfigure}
    
    % Deuxième ligne
    \vspace{0.5cm} % Espace entre les lignes
    \begin{subfigure}[b]{0.48\textwidth}
        \centering
        \includegraphics[width=\textwidth]{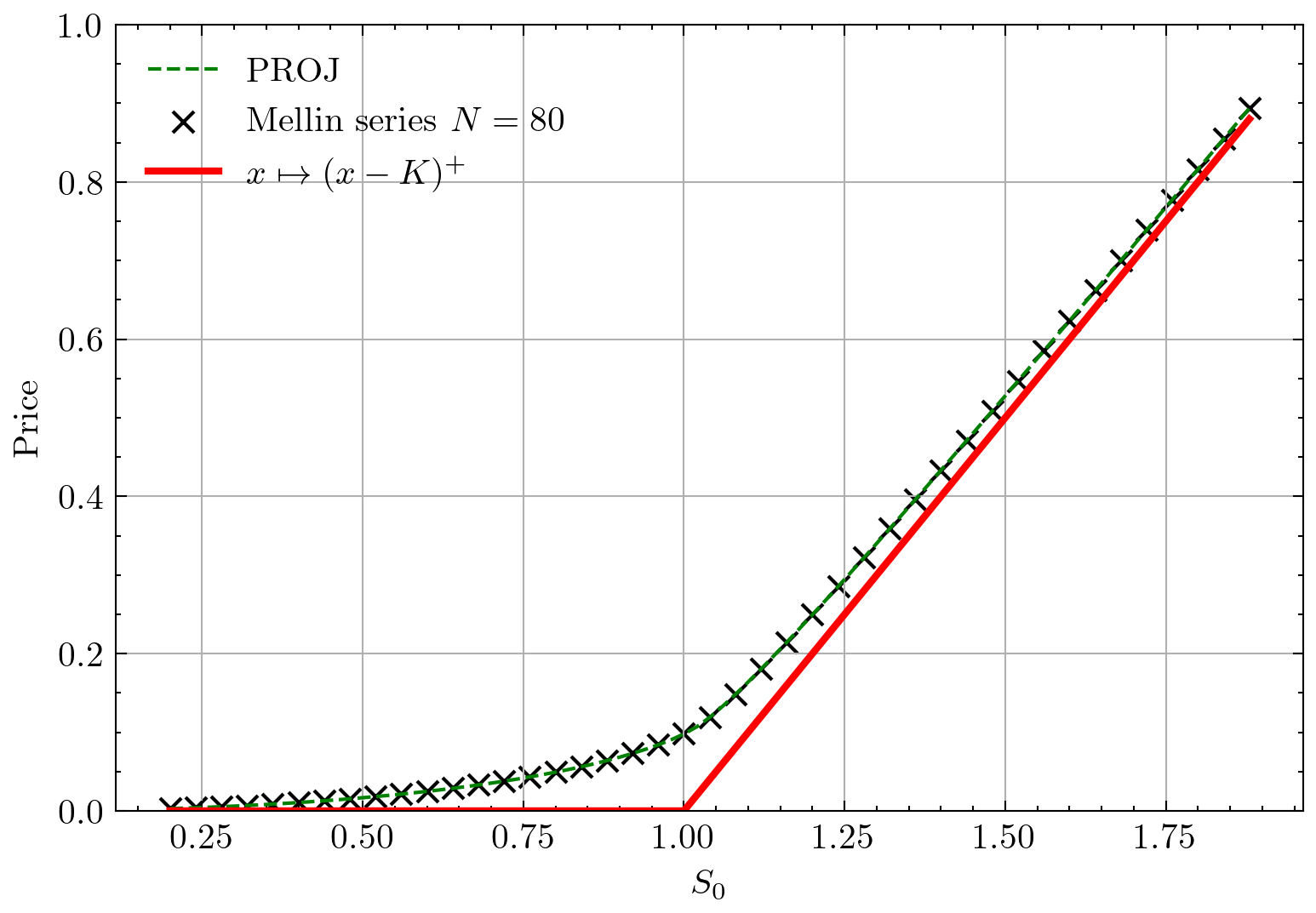}
        \caption{$T=0.2$}
    \end{subfigure}
    \hfill
    \begin{subfigure}[b]{0.48\textwidth}
        \centering
        \includegraphics[width=\textwidth]{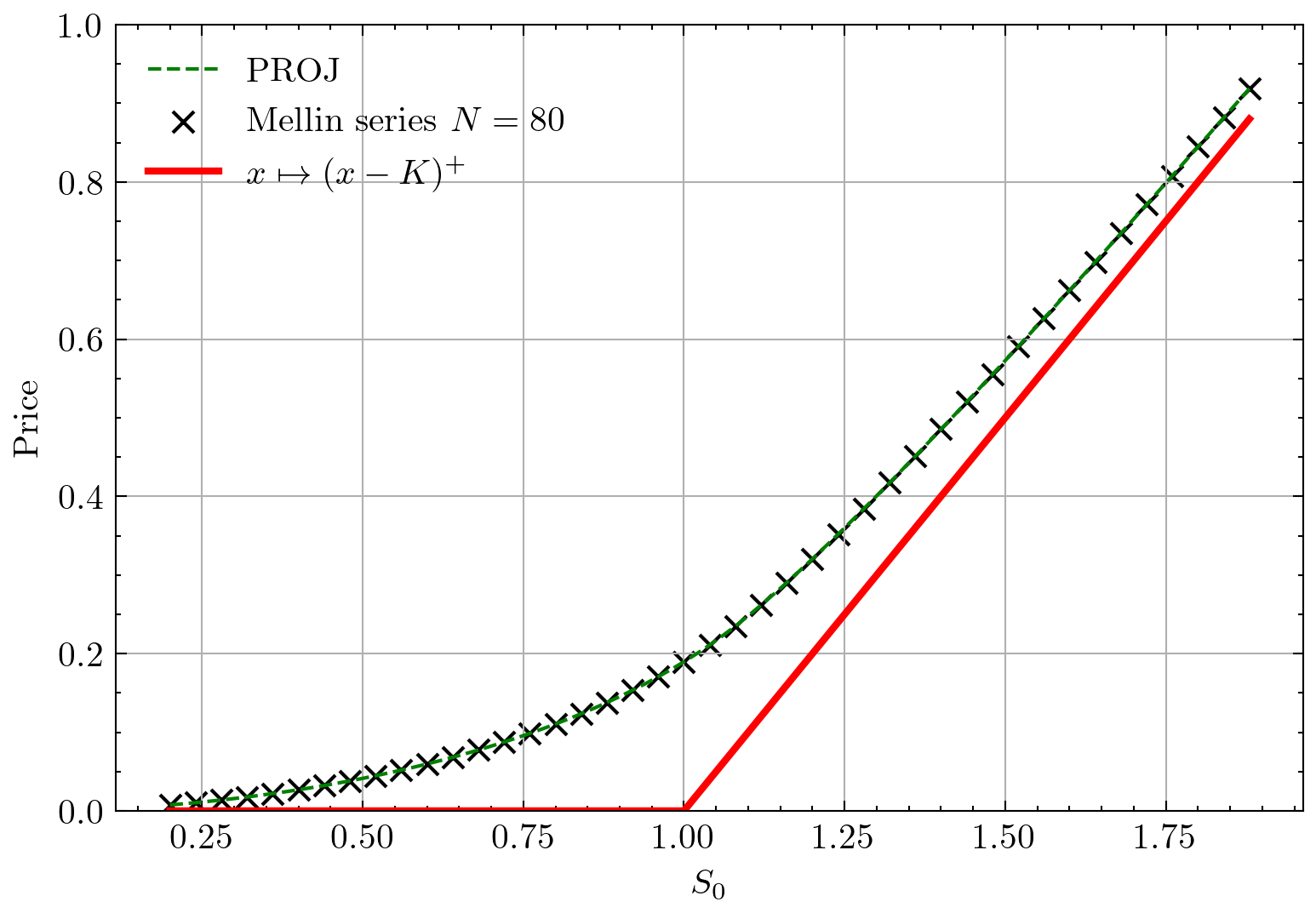}
        \caption{$T=0.5$}
    \end{subfigure}
    
    \caption{Comparison between the call pricing formula in the TS model (proposition \ref{prop:eur-series-gts}) and the  PROJ method. We let $S_0$ vary in $[0.2,1.9]$, and we choose various maturities ($T\in\{0.05,0.1,0.2,0.5\}$). }
    \label{fig:different-S0}
\end{figure}

%%%%
\subsection{Performance study}
%%%%

% {
% \color{black}
In this section, we want to demonstrate the practical tractability and efficiency of the series formula we obtained, by displaying the computation time needed to attain a predetermined precision, and by comparing it with the PROJ computation time. We perform this study for the most general double-sided TS model and some of its particular cases (BG and one-sided TS), demonstrating a very competitive convergence speed when compared to state-of-the-art Fourier pricing; as we will see, the convergence is particularly accelerated in the BG and one-sided TS cases.

First, we note that a major advantage of our pricing method is the total absence of a parameter to set, while such a parameter (or even several) is in general mandatory to implement Fourier pricing methods and can be delicate to select. This parameter can be a control parameter, conditioned by square integrability in the case of Carr-Madan pricing (it is known as a damping parameter in this case and is often denoted by $\alpha$), or be related to truncation intervals (for instance, in the case of the PROJ pricing it is related to the integration support size of the projection component). Even if some heuristic rules can be used to choose this parameter, notably some moment-based rules, these heuristics fail when it comes to high precision pricing, and manual increase of this parameter must therefore be performed, as illustrated in tables \ref{tab:time-double-sided-ts}, \ref{tab:time-bg} and \ref{tab:time-one-sided-ts}. In particular, in the double TS case (table \ref{tab:time-double-sided-ts}), it is clear that neither the ``default" value of $\alpha$ computed with the heuristic, nor several other attempts for this paremeter ($\alpha\in\{25,30,40\}$), allow to achieve a $10^{-8}$ precision, even by increasing the discretization parameter $N_{\text{PROJ}}$ (up to $N_{\text{PROJ}}=2^{20}$) and the number of functions in the Riesz basis - actually a manual increase to $\alpha=50$ is necessary to attain this precision; in contrast, the series formula \eqref{eur-series-gts} allows for a direct computation of prices without having to worry about such parametrization, which of course is a very desirable feature in practice. We note, however, that, once the appropriate truncation parameter is chosen, the PROJ method remains approximately 100 times faster than the Mellin pricing formula \eqref{eur-series-gts} (due to the presence of 3 summation indices); the series display nonetheless a very satisfactory performance, with only 0.09 second needed to reach a $10^{-9}$ precision.

% One of the main advantage of our pricing method is the absence of any parameter to set. For all the previous pricing techniques methods mentioned before, at least one hyperparameter is necessary. For PROJ method, the hyperparameter is the integration support size $\alpha$ of the projection component. 

% Even if some heuristic rules to chose this parameter according to the distribution moments, this heuristic fails to reach high precision and manual increase of this parameter is then needed.  This fact is illustrated in table \ref{tab:time-double-sided-ts}. It shows that for the default value of $\alpha$ (computed with the heuristic) and $\alpha\in\{25,30,40\}$, the method cannot reach precision highest than $10^{-8}$, whatever the size of the integration step (denoted by the $\times$ symbol). A manual increase to $\alpha=50$ is needed to reach $10^{-10}$ precision.

% A part from this, one note that PROJ is about 100 times faster that our series expansion. This is largely due to the fact that formula of proposition \ref{prop:eur-series-gts} is a sum over 3 indexes. 

\begin{table}
    \centering
    \scriptsize
    \begin{tabular}{p{1.2cm} p{1.2cm} p{1.4cm} p{1.2cm} p{1.2cm} p{1.2cm} p{1.2cm} p{1.2cm}}
    \toprule[2pt]
    Precision & Mellin series  & PROJ (default $\alpha$) & PROJ ($\alpha=25$) & PROJ ($\alpha=30$) & PROJ ($\alpha=40$) & PROJ ($\alpha=50$) \\
    \midrule
    \midrule
    1e-5 & 0.012719 & 0.000385 & 0.000327 & 0.000452 & 0.000559 & 0.000418 \\
    1e-6 & 0.022713 & $\times$ & 0.000629 & 0.000628 & 0.000882 & 0.000871 \\
    1e-7 & 0.043266 & $\times$ & $\times$ & 0.000939 & 0.000930 & 0.000872 \\
    1e-8 & 0.100635 & $\times$ & $\times$ & $\times$ & 0.000911 & 0.001459 \\
    1e-9 & 0.094498 & $\times$ & $\times$ & $\times$ & $\times$ & 0.001461 \\
    \bottomrule
    \end{tabular}
    \caption{Minimal computation time (in seconds) for PROJ and Mellin pricing \eqref{eur-series-gts} in the double-sided TS model, for several predetermined levels of accuracy. We use $S_0=1$, $K=1.5$ and $T=1.2$. ``$\times$" means that the precision could not be reached, even for very large $N_{\text{PROJ}}=2^{20}$.}
    \label{tab:time-double-sided-ts}
\end{table}

Under bilateral Gamma and one-sided tempered stable processes, the pricing series for the European call are obtained from the digital option prices provided in propositions \ref{prop:series-digital-BG} and \ref{prop:digital-series-one-sided-ts} respectively, and are performed over a single index only (instead of three summation indices in the general TS case). This allows for considerably faster computation, as shown in tables \ref{tab:time-bg} and \ref{tab:time-one-sided-ts}. For example, in the bilateral Gamma case, if high precision ($10^{-8}\sim10^{-9}$) is desired, the Mellin pricing series is about 10 times faster than PROJ (even after choosing a hyperparameter $\alpha=50$); a similar overperformance is observed for the one-sided TS model.

% \begin{table}[h]
%     \centering
%     \scriptsize
%     \begin{tabular}{p{.1cm} p{1cm} p{1.4cm} p{1.2cm} p{1.2cm} p{1.2cm} p{1.2cm}}
%     \toprule[2pt]
%     & Precision & Mellin series & PROJ (default $\alpha$) & PROJ ($\alpha=25$) & PROJ ($\alpha=30$) & PROJ ($\alpha=40$) \\
%     \midrule
%     \midrule
%     \multirow{5}{*}{\rotatebox{90}{TS}} 
%     & 1e-5 & 0.000226 & 0.000958 & 0.000620 & 0.000626 & 0.001008 \\
%     & 1e-6 & 0.000221 & 0.000943 & $\times$ & 0.000653 & 0.000968 \\
%     & 1e-7 & 0.000218 & 0.001647 & $\times$ & $\times$ & 0.001592 \\
%     & 1e-8 & 0.000244 & $\times$ & $\times$ & $\times$ & $\times$ \\
%     & 1e-9 & 0.000236 & $\times$ & $\times$ & $\times$ & $\times$ \\
%     \midrule
%     \midrule
%     \multirow{5}{*}{\rotatebox{90}{TS one-sided}} 
%     & 1e-5 & 0.000086 & 0.000422 & 0.000780 & 0.000404 & 0.000395 \\
%     & 1e-6 & 0.000086 & 0.000396 & $\times$ & 0.000401 & 0.000477 \\
%     & 1e-7 & 0.000084 & 0.000521 & $\times$ & $\times$ & 0.000476 \\
%     & 1e-8 & 0.000085 & $\times$ & $\times$ & $\times$ & $\times$ \\
%     & 1e-9 & 0.000086 & $\times$ & $\times$ & $\times$ & $\times$ \\
%     \bottomrule
%     \end{tabular}
%     \caption{Temps de calcul minimal (en secondes) pour PROJ et Mellin dans les modèles TS et TS one-sided.}
%     \label{tab:time-rotated} 
% \end{table}

\begin{table}
    \centering
    \scriptsize
    \begin{tabular}{p{1.2cm} p{1.2cm} p{1.4cm} p{1.4cm} p{1.4cm} p{1.2cm} p{1.4cm} }
    \toprule[2pt]
    Precision & Mellin series  & PROJ (default $\alpha$) & PROJ ($\alpha=25$) & PROJ ($\alpha=30$) & PROJ ($\alpha=40$) & PROJ ($\alpha=50$) \\
    \midrule
    \midrule
    1e-5 & 0.000205 & 0.000964 & 0.000632 & 0.000622 & 0.001008 & 0.001161 \\
1e-6 & 0.000215 & 0.000939 & $\times$ & 0.000709 & 0.001129 & 0.001123 \\
1e-7 & 0.000215 & 0.001842 & $\times$  & $\times$  & 0.001593 & 0.001535 \\
1e-8 & 0.000232 & $\times$  & $\times$  & $\times$  & $\times$  & 0.001591 \\
1e-9 & 0.000240 & $\times$  & $\times$  & $\times$  & $\times$  & 0.002852 \\
    \bottomrule
    \end{tabular}
    \caption{Minimal computation time (in seconds) for PROJ and Mellin European option pricing in the bilateral Gamma model, using the digital Mellin series formula in proposition \ref{prop:digital-series-one-sided-ts}, for several predetermined levels of accuracy. We use $S_0=1$, $K=1.5$ and $T=1.2$. ``$\times$" means that the precision could not be reached, even for very large $N_{\text{PROJ}}=2^{20}$.}
    \label{tab:time-bg}
\end{table}

\begin{table}
    \centering
    \scriptsize
    \begin{tabular}{p{1.2cm} p{1.2cm} p{1.4cm} p{1.4cm} p{1.4cm} p{1.2cm} p{1.4cm} }
    \toprule[2pt]
    Precision & Mellin series  & PROJ (default $\alpha$) & PROJ ($\alpha=7.5$) & PROJ ($\alpha=8.5$) & PROJ ($\alpha=10$) & PROJ ($\alpha=12.5$) \\
    \midrule
    \midrule
    1e-5 & 0.000086 & 0.000422 & 0.000780 & 0.000404 & 0.000395 & 0.000469 \\
    1e-6 & 0.000086 & 0.000396 & $\times$ & 0.000401 & 0.000477 & 0.000475 \\
    1e-7 & 0.000084 & 0.000521 & $\times$ & $\times$ & 0.000476 & 0.000471 \\
    1e-8 & 0.000085 & $\times$ & $\times$ & $\times$ & $\times$ & 0.000483 \\
    1e-9 & 0.000086 & $\times$ & $\times$ & $\times$ & $\times$ & 0.000621 \\
    \bottomrule
    \end{tabular}
    \caption{Minimal computation time (in seconds) for PROJ and Mellin European option pricing in the one-sided tempered stable model, using the digital Mellin series formula in proposition \ref{prop:digital-series-one-sided-ts}, for several predetermined levels of accuracy. We use $S_0=1$, $K=1.5$ and $T=1.2$. ``$\times$" means that the precision could not be reached, even for very large $N_{\text{PROJ}}=2^{20}$.}
    \label{tab:time-one-sided-ts}
\end{table}
This accelerated convergence behavior in the bilateral Gamma and one-sided TS cases is also clearly visible in figure \ref{fig:comparison-3-series}, where relative error (in dotted lines) and computation time (in plain lines) are displayed as functions of the maximal summation index $N$, for both these models as well as for the most general double-sided TS model. In terms of error, we can see that the one-sided pricing formula reaches a $10^{-10}$ precision after 20 summation terms only, and roughly reaches machine precision after 25 terms, with a total computation time of $10^{-5}\sim 10^{-4}$ seconds only (in line with the results of table \ref{tab:time-one-sided-ts}). For the bilateral Gamma and double-sided TS cases, the precision increases progressively and, actually, almost linearly as a function of $N$, which constitutes a surprising and very interesting behavior, as it allows to reliably control the accuracy (at least from an empirical point of view) of the pricing results; we also note that, for these two models, we can reach a $10^{-10}$ precision for $N=80$ approximately, but the overall computation time need to reach this precision remains around $10^{-4}$ seconds for the bilateral Gamma model (which, again, is extremely fast), while it is of order $10^{-1}\sim1$ seconds for the general doublie-sided case. Of course and as already mentioned, this is because only $N$ terms are needed to perform the bilateral Gamma summations in proposition propositions \ref{prop:series-digital-BG}, while $N^3$ terms are needed in the double-sided TS series \ref{eur-series-gts}.

% A comparison between our three pricing formulas is shown in figure \ref{fig:comparison-3-series}.
% Regarding precision, we see that one-sided pricing formula allows to reach a $10^{-10}$ precision with only 20 summation terms. Under bilateral Gamma and double-sided process, the precision increases progressively and reach a $10^{-10}$ relative error in 80 terms. The computation time is almost constant and equal to approximately $10^{-5}\sim 10^{-4}$ seconds. For double-sided tempered stable process, the computation increases with the iteration number and reaches $10^{-1}\sim1$ seconds for 90 terms.  

\begin{figure}
    \centering
    \includegraphics[width=0.9\linewidth]{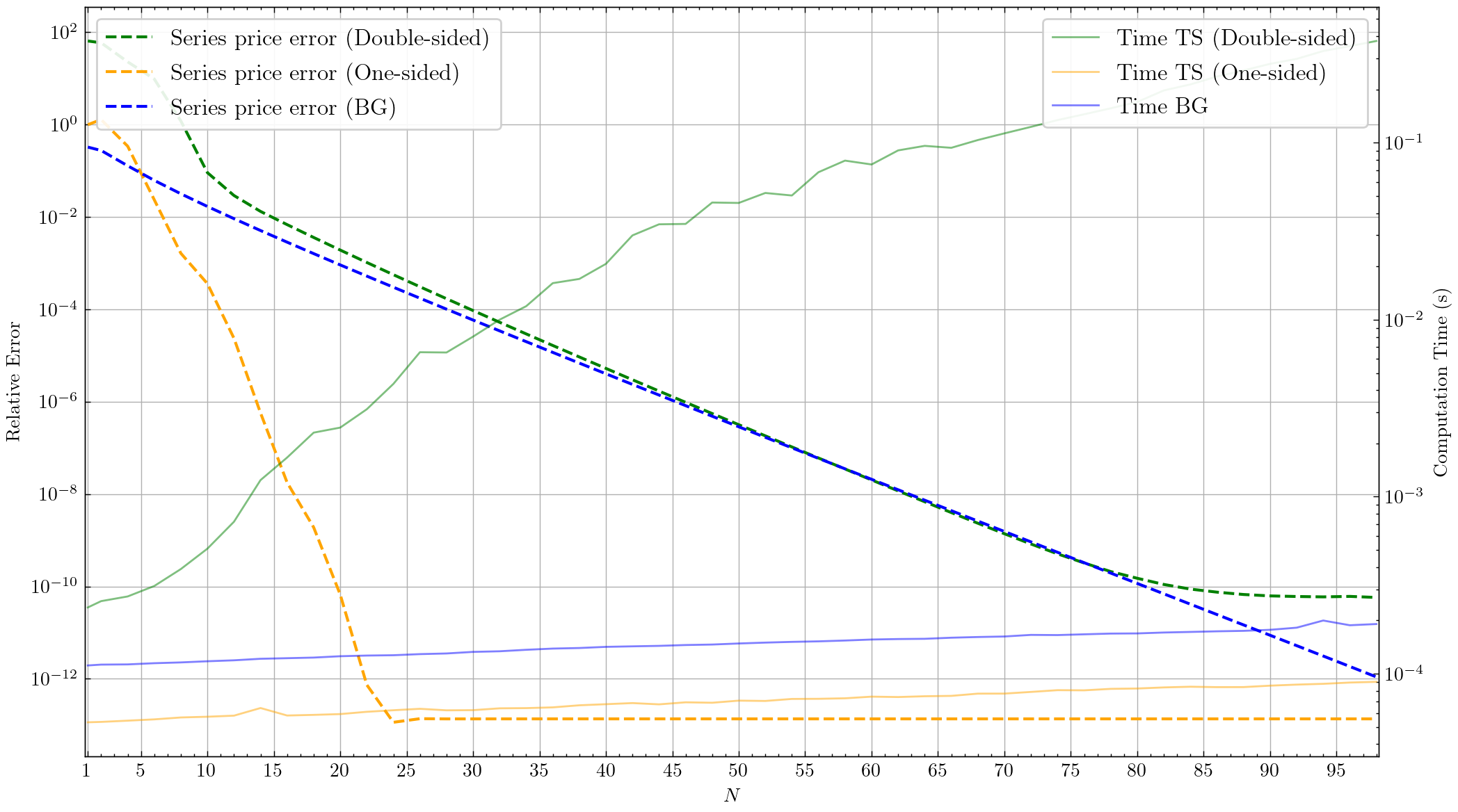}
    \caption{Relative error (dotted lines) and computation time (continuous lines) under one and double-sided tempered stable as well as bilateral Gamma process. We use $S_0=1$, $K=1.5$ and $T=1.2$.
    %\red{Machine precision = 16 digits. (donc $10^{-15}$ vu qu'il y a le premier chiffre),}
    }
    \label{fig:comparison-3-series}
\end{figure}

%%%%
\subsection{Comparison to other Fourier methods}
%%%%

Finally, we perform a final check by providing a short numerical comparison to other standard Fourier based pricing techniques, namely the PROJ \cite{kirkby2015efficient}, Lewis \cite{Lewis01}, Carr-Madan \cite{Carr99}, Gil-Pelaez \cite{GilPelaez} and Hilbert \cite{phelan2019hilbert} methods. We perform this check for several classical jump models covered by the TS family:

\begin{enumerate}
    \item TS, KoBoL and CGMY, that pertain to the double-sided TS family, and whose Mellin pricing series is given by \eqref{eur-series-gts};
    \item BG and VG, that are within the bilateral Gamma family (itself a subfamily of the TS class with $\beta_\pm=0$) and whose Mellin pricing series is given by proposition \ref{prop:series-digital-BG}; 
    \item TS$^+$, that is, the spectrally positive TS distributions, that are also a special case of TS distributions (with a L\'evy measure supported by the positive real axis, i.e. $\alpha_-=\lambda_-=\beta_-=0$) and whose Mellin pricing series is given by proposition \ref{prop:digital-series-one-sided-ts}.
\end{enumerate}
The results are summarized in table \ref{tab:comp-other-fourier} and feature very good agreement; without suprise, Lewis, PROJ and Mellin display best precision.

%  . All the prices are equals up to 2 decimal digits and the one that performs better are without surprise Lewis, PROJ and Mellin pricing series.   \footnote{\url{https://github.com/jkirkby3/fypy}}
% \red{\bfseries mentionner qu'on teste les modèles CGMY, KoboL et VG}

\begin{table}
    \centering
    \scriptsize
    \begin{tabular}{l l l l l l l}
    \toprule
    & \multicolumn{3}{c}{Double-sided TS} & One-sided TS & \multicolumn{2}{c}{Bilateral Gamma} \\
    \cmidrule(lr){2-4} \cmidrule(lr){5-5} \cmidrule(lr){6-7}
    Method & TS & KoBoL & CGMY & $\TS^+$ & BG & VG \\
    \midrule
    \midrule
    PROJ & 2.29678e-01 & 2.29685e-01 & 2.42660e-01 & 1.87698e-01 & 2.59919e-01 & 2.71774e-01 \\
    Lewis & 2.29685e-01 & 2.29685e-01 & 2.42660e-01 & 1.87698e-01 & 2.59919e-01 & 2.71774e-01 \\
    Carr-Madan & 2.29468e-01 & 2.29094e-01 & 2.42068e-01 & 1.87500e-01 & 2.59285e-01 & 2.71139e-01 \\
    Gil-Pelaez formula & 2.29685e-01 & 2.29685e-01 & 2.42660e-01 & 1.87698e-01 & 2.59929e-01 & 2.71774e-01 \\
    Hilbert & 2.29528e-01 & 2.29528e-01 & 2.42419e-01 & 1.87661e-01 & 2.59381e-01 & 2.71006e-01 \\
    Mellin series & 2.29685e-01 & 2.29775e-01 & 2.42660e-01 & 1.87698e-01 & 2.59919e-01 & 2.71774e-01 \\
    \bottomrule
    \end{tabular}
    \caption{Comparison of the Mellin series technique to other Fourier-related pricing techniques, for different standard models included in the TS class. We use $S_0=1$, $K=1.5$ and $T=1.2$. For Mellin series, we use $N=50$. Fourier parameters from the public PROJ repository\protect\footnote{\url{https://github.com/jkirkby3/fypy}}.}
    \label{tab:comp-other-fourier}
\end{table}

\section{Conclusion}
\label{sec:concl}
% \red{TO DO}

In this paper, we have proved Mellin-Barnes representations for TS density functions and derived their associated series expansions. We have also provided Mellin-Barnes representations for the prices of digital and European options when the log-price process is a TS process, and we were able to express these integrals as series expansions whose terms are straightforward to compute. These formulas become remarkably simple when considering particular cases such as bilateral gamma or one-sided TS processes. Moreover, our technique has the major advantage of being hyperparameter free.

We have also presented a detailed numerical analysis of the pricing series expansions, including performance studies and comparison with other approaches. This analysis demonstrates that our method is competitive with state-of-the-art pricing techniques. All numerical implementations can be found in our public repository {\sffamily TS-Pricing} (written in {\sffamily Python}).

Future work should consider adapting our methodology to TS processes with $\beta_\pm \in(1,2)$, these processes having a different characteristic function. Other extensions should include derive similar series expansions for additive processes and their particular cases (\textit{e.g.} Sato processes).

\bibliographystyle{plain}
\bibliography{biblio}

\appendix

% Last, we introduce a supplementary notation that will be useful for multiple discrete sums: given a multi-index $\bm{n}:=(n_1,...,n_p)$, we will denote for $1\leq p_1 < p_2\leq p$,  $(-1)^{n_{p_1+p_2}} := (-1)^{n_{p_1}+n_{p_1+1}+...+n_{p_2-1}+n_{p_2}}$ and $n_{p_1,p_2}! = \prod_{k=p_1}^{p_2}n_{k}!$.

\section{Computational details for the proofs of propositions \ref{prop:cn-series-gts} and \ref{prop:an-series-gts}}
\label{sec:pole}

In this appendix, we provide detailed computations of the residues associated to the poles (P1), (P2) and (P4) arising in the proofs of proposition \ref{prop:cn-series-gts} and \ref{prop:an-series-gts}.

% In appendices \ref{subsec:app-first-pole}, \ref{subsec:app-second-pole} and \ref{subsec:app-third-pole}, we provide detailed computations of the residues associated to the poles (P1), (P2) and (P4) in the proof of proposition \ref{prop:cn-series-gts}. In appendix \red{A.4 mais changer le label et la notation}

\subsection{Residues associated to (P1)}
\label{subsec:app-first-pole}

Let $S_1$ define the series of residues induced by the singularities in the subset (P1). We have:
\begin{multline}
S_1 :=
        e^{(\gamma-r)T}
        \sum_{\bm{n}\in\N^4}
        \frac{(-1)^{n_{1+4}}}{n_{1,4}!}
        \frac{
            \Gamma(1-n_1+\beta_+n_2+\beta_-n_3)
            \Gamma(n_1-\beta_-n_3)
        }{
        \Gamma(1+\beta_+n_2)
        \Gamma(-\beta_+n_2)
        \Gamma(-\beta_-n_3)
        (-n_1+\beta_+n_2+\beta_-n_3-n_4)
        }
        \\
        \times 
        (a_+T)^{n_2}(a_-T)^{n_3}\lambda_+^{n_4}
        \ubar{\lambda}^{n_1}
        (-k)^{n_1-\beta_+n_2-\beta_-n_3+n_4}.
    \end{multline}

Some algebra yields:
\begin{equation}
    \begin{aligned}
        S_1 
        &= \sum_{\bm{n}\in\N^4}
        \frac{(-1)^{n_{1+4}}}{n_{1,4}!}
        \frac{(-\beta_-n_3)_{n_1}}{(-n_1+\beta_+n_2 + \beta_-n_3 - n_4)}\\
        &\quad\quad\times\frac{\Gamma(1-n_1+\beta_+n_2+\beta_-n_3)}{\Gamma(1+\beta_+n_2)\Gamma(-\beta_+n_2)}
        (a_+T)^{n_2}(a_-T)^{n_3}  
        \ubar{\lambda}^{n_1} (-k)^{n_1-\beta_+n_2-\beta_-n_3+n_4}\lambda_+^{n_4}\\
        &= -\sum\limits_{\bm{n}\in\N^3}
        \frac{(-1)^{n_{1+3}}}{n_{1,3}!}
        (-\beta_-n_3)_{n_1}\frac{\Gamma(1-n_1+\beta_+n_2+\beta_-n_3)}{\Gamma(1+\beta_+n_2)\Gamma(-\beta_+n_2)}
        (a_+T)^{n_2}(a_-T)^{n_3}  
        \ubar{\lambda}^{n_1} \\
        &\quad\quad\times\lambda_+^{-n_1+\beta_+n_2+\beta_+n_3}\sum_{n_4\in\N}\frac{(-1)^{n_4}}{n_4!(n_1-\beta_+n_2 - \beta_-n_3 + n_4)}(-\lambda_+k)^{n_1-\beta_+n_2-\beta_-n_3+n_4}\\
        &= -\sum\limits_{\bm{n}\in\N^3}
        \frac{(-1)^{n_{1+3}}}{n_{1,3}!}
        (-\beta_-n_3)_{n_1}\frac{\Gamma(1-n_1+\beta_+n_2+\beta_-n_3)}{\Gamma(1+\beta_+n_2)\Gamma(-\beta_+n_2)}
        (a_+T)^{n_2}(a_-T)^{n_3}  
        \ubar{\lambda}^{n_1} \\
        &\quad\quad\times\lambda_+^{-n_1+\beta_+n_2+\beta_-n_3}\gamma_{\text{inc}}(n_1-\beta_+n_2-\beta_-n_3, -\lambda_+k)
    \end{aligned}
\end{equation}
where the last line is obtained by recognizing the series expansion for the incomplete gamma function (see \cite[eq. 8.7.1]{DLMF}). It comes that:
    \begin{equation}
        S_1 = -e^{(\gamma-r)T}\sum_{\bm{n}\in\N^3}
        \frac{(-1)^{n_{1+3}}}{n_{1,3}!}
        (a_+T)^{n_2}(a_-T)^{n_3}
        c_{\bm{n}}^{(1)}(k;\lambda_+).
    \end{equation}

\subsection{Residues associated to (P2)}
\label{subsec:app-second-pole}
Let $S_2$ define the series of residues induced by the singularities in the subset (P2). We have:
\begin{multline}
        S_2 :=
        \sum_{\bm{n}\in\N^4}
        e^{(\gamma-r)T}\frac{(-1)^{n_{1+4}}}{n_{1,4}!}
        \frac{
            \Gamma(-1-n_1-\beta_+n_2-\beta_-n_3)
            \Gamma(1+n_1+\beta_+n_2)
        }{
        \Gamma(1+\beta_+n_2)
        \Gamma(-\beta_+n_2)
        \Gamma(-\beta_-n_3)
        (-1-n_1-n_4)
        }
        \\
        \times 
        (a_+T)^{n_2}(a_-T)^{n_3}\lambda_+^{n_4}
        \ubar{\lambda}^{1+n_1+\beta_+n_2+\beta_-n_3}
        (-k)^{1+n_1+n_4}.
    \end{multline}

Some algebra yields:
\begin{equation}
    \begin{aligned}
        S_2 &=  \sum_{\bm{n}\in\N^4}\frac{(-1)^{n_{1+4}}}{n_{1,4}!}
        \frac{(1+\beta_+n_2)_{n_1}}{-1-n_1-n_4}
        \frac{\Gamma(-1-n_1-\beta_+n_2-\beta_-n_3)}{\Gamma(-\beta_+ n_2)\Gamma(-\beta_-n_3)}
        (a_+T)^{n_2}(a_-T)^{n_3}\\
        &\quad\quad\times\ubar{\lambda}^{1+n_1+\beta_+n_2+\beta_-n_3} (-k)^{1+n_1+n_4}
        \lambda_+^{n_4}\\
        & = -\sum_{\bm{n}\in\N^3}\frac{(-1)^{n_{1+3}}}{n_{1,3}!}
        (1+\beta_+n_2)_{n_1}
        \frac{\Gamma(-1-n_1-\beta_+n_2-\beta_-n_3)}{\Gamma(-\beta_+ n_2)\Gamma(-\beta_-n_3)}
        (a_+T)^{n_2}(a_-T)^{n_3}\\
        &\quad\quad\times\ubar{\lambda}^{1+n_1+\beta_+n_2+\beta_-n_3} 
        \sum_{n_4\in\N}\frac{(-1)^{n_{4}}}{n_{4}!}(-k)^{1+n_1+n_4}\frac{1}{1+n_1+n_4}
        \lambda_+^{n_4}\\
        &=  -\sum_{\bm{n}\in\N^3}\frac{(-1)^{n_{1+3}}}{n_{1,3}!}
        (1+\beta_+n_2)_{n_1}
        \frac{\Gamma(-1-n_1-\beta_+n_2-\beta_-n_3)}{\Gamma(-\beta_+ n_2)\Gamma(-\beta_-n_3)}
        (a_+T)^{n_2}(a_-T)^{n_3}\\
        &\quad\quad\times\ubar{\lambda}^{1+n_1+\beta_+n_2+\beta_-n_3} 
        \lambda_+^{-1-n_1}
        \sum_{n_4\in \N}\frac{(-1)^{n_{4}}}{n_{4}!}(-\lambda k)^{1+n_1+n_4}\frac{1}{1+n_1+n_4}\\
        &=  -\sum_{\bm{n}\in\N^3}\frac{(-1)^{n_{1+3}}}{n_{1,3}!}
        (1+\beta_+n_2)_{n_1}
        \frac{\Gamma(-1-n_1-\beta_+n_2-\beta_-n_3)}{\Gamma(-\beta_+ n_2)\Gamma(-\beta_-n_3)}
        (a_+T)^{n_2}(a_-T)^{n_3}\\
        &\quad\quad\times\ubar{\lambda}^{1+n_1+\beta_+n_2+\beta_-n_3} 
        \lambda_+^{-1-n_1}\gamma_{\text{inc}}(1+n_1,-\lambda_+ k)
    \end{aligned}
\end{equation}
    where the last line is obtained by recognizing the series expansion for the lower incomplete gamma function (see \cite[eq. 8.7.1]{DLMF}). 
    It comes that:
    \begin{equation}    
        S_2 = -e^{(\gamma-r)T}\sum_{\bm{n}\in\N^3}
        \frac{(-1)^{n_{1+3}}}{n_{1,3}!}
        (a_+T)^{n_2}(a_-T)^{n_3}
        c_{\bm{n}}^{(2)}(k;\lambda_+).
    \end{equation}

\subsection{Residues associated to (P4)}
\label{subsec:app-third-pole}
Let $S_4$ define the series of residues induced by the singularities in the subset (P4). We have:
\begin{multline}\label{S4_residue}
S_4 := \sum_{\bm{n}\in\N^3}
            e^{(\gamma-r)T}\frac{(-1)^{n_{1+3}}}{n_{1,3}!}
            \frac{
                \Gamma(n_1-\beta_-n_3)
                \Gamma(1+n_1+\beta_+n_2)
                \Gamma(-n_1-\beta_+n_2)
            }{
            \Gamma(1+\beta_+n_2)
            \Gamma(-\beta_+n_2)
            \Gamma(-\beta_-n_3)
            }\\
            \times 
            (a_+T)^{n_2}(a_-T)^{n_3}\lambda_+^{n_1+\beta_+n_2}
            \ubar{\lambda}^{-n_1+\beta_-n_3}.
        \end{multline}
        We first notice that this can be rewritten with Pochhamer's symbols:
        \begin{equation}
            S_4
            := 
            e^{(\gamma-r)T}\sum_{\bm{n}\in\N^3}
                        \frac{(-1)^{n_{2+3}}}{n_{1,3}!}
                        (-\beta_-n_3)_{n_1}
                        (a_+T)^{n_2}(a_-T)^{n_3}\lambda_+^{n_1+\beta_+n_2}
                        \ubar{\lambda}^{-n_1+\beta_-n_3}.
        \end{equation}
        since $\Gamma(-n_1-\beta_+n_2)/\Gamma(-\beta_+n_2) = (-1)^{n_1}/(1+\beta_+n_2)_{n_1}$. By first summing over $n_1$, one recognize the hypergeometric series $_1F_0(-\beta_+n_3;;\lambda_+/\underline{\lambda})$, whose sum is known in closed form ($_1F_0a;;z)=(1-z)^{-a}$) and we therefore have:
        \begin{equation}
            S_4
            := 
            e^{(\gamma-r)T}\sum_{(n_2,n_3)\in\N^2}
                        \frac{(-1)^{n_{2+3}}}{n_{2,3}!}
                        (a_+T)^{n_2}(a_-T)^{n_3}\lambda_+^{\beta_+n_2}
                        \ubar{\lambda}^{\beta_-n_3}
                        (1-\lambda_+/\ubar{\lambda})^{\beta_-n_3}
        \end{equation}
        Noticing that $(1-\lambda_+/\ubar{\lambda})^{\beta_+n_3} = \lambda_-^{\beta_-n_3}$ (definition \eqref{eq:constants-ts}) and that the two remaining series are simply exponential expansions, we have:
        \begin{equation}
            S_4 := e^{(\gamma-r)T}e^{-a_-T\lambda_-^{\beta_-}-a_+T\lambda_+^{\beta_+}}
            =e^{-rT}
            ,
        \end{equation}
        where the last equality stems from the definition of $\gamma$ in \eqref{eq:constants-ts}.
        % Recalling the definition of $\gamma$ in \eqref{eq:constants-ts}, the desired result follows and we have $S_4 = e^{-rT}$.

\subsection{Residues associated to (P4) in the AN case}
\label{subsec:app-fourth-pole}
In the AN case, the residues \eqref{S4_residue} induced by the (P4) subset become:
\begin{multline}
S_4^{\text{AN}} := \sum_{\bm{n}\in\N^3}
            Ke^{k+(\gamma-r)T}\frac{(-1)^{n_{1+3}}}{n_{1,3}!}
            \frac{
                \Gamma(n_1-\beta_-n_3)
                \Gamma(1+n_1+\beta_+n_2)
                \Gamma(-n_1-\beta_+n_2)
            }{
            \Gamma(1+\beta_+n_2)
            \Gamma(-\beta_+n_2)
            \Gamma(-\beta_-n_3)
            }\\
            \times 
            (a_+T)^{n_2}(a_-T)^{n_3}(\lambda_+-1)^{n_1+\beta_+n_2}
            \ubar{\lambda}^{-n_1+\beta_-n_3}.
        \end{multline}
    Performing the same computations as in section \ref{subsec:app-third-pole} yields:
    \begin{equation}
        S_4^{\text{AN}}
        := 
        Ke^{k+(\gamma-r)T}\sum_{(n_2,n_3)\in\N^2}
                    \frac{(-1)^{n_{2+3}}}{n_{2,3}!}
                    (a_+T)^{n_2}(a_-T)^{n_3}(\lambda_+-1)^{\beta_+n_2}
                    (\lambda_-+1)^{\beta_-n_3}.
    \end{equation}
    Recognizing the exponential series yields:
    \begin{equation}
        S_4^{\text{AN}}
        = 
        Ke^{k+(\gamma-r)T}e^{-a_+T(\lambda_+-1)^{\beta_+}-a_-T(\lambda_-+1)^{\beta_-}}
        = Ke^{k+(\zeta-r)T}
    \end{equation}
    where the last equatlity follows from the definition of $\gamma$ in \eqref{eq:constants-ts} and of the martingale adjustment $\zeta$ in \eqref{eq:def-zeta}.

\section{ATM option price under bilateral gamma process}
    \label{appendix:atm-bg-kt}
    In \cite{kuchlerBG}, the authors give a closed formula for the ATM digital options in the BG model. Let us prove that the CN component in our ATM pricing formula \eqref{eq:atm-bg} is the same the one obtained by \cite{kuchlerBG} (same proof can be made for the AN option, and therefore for the European option). To this aim, we consider, as
    in \cite{kuchlerBG}, the risk-free rate and dividend yield to be $r=q=0$. With these settings, the formula obtained by \cite{kuchlerBG} for the CN call is:
    \begin{equation}
        \frac{\lambda_+^{\alpha_+}\lambda_-^{\alpha_-}\Gamma(\alpha_++\alpha_-)
        }{\Gamma(\alpha_+)\Gamma(1+\alpha_-)\lambda_+^{\alpha_++\alpha_-}}
        ~_2F_1(\alpha_++\alpha_-,\alpha_-;1+\alpha_-,-\lambda_-/\lambda_+).
    \end{equation}
    Introducing $\Lambda:=\lambda_-/\lambda_+$, this equation can be simplified in:
    \begin{equation}
        \frac{\Lambda^{\alpha_-}\Gamma(\alpha_++\alpha_-)
        ~_2F_1(\alpha_++\alpha_-,\alpha_-;1+\alpha_-,-\Lambda)}{\Gamma(\alpha_+)\Gamma(1+\alpha_-)}
    \end{equation}
    % and applying \cite[eq. 9.132.1]{Bateman} (with $\bm{\alpha} \leftrightarrow\alpha_-$, $\bm{\beta}\leftrightarrow\alpha_++\alpha_-$, $\bm{\gamma}\leftrightarrow1+\alpha_-$ and $\bm{z}\leftrightarrow-\Lambda$), 
    and applying \cite[eq. 15.8.3]{DLMF}
    we have:
   {\small
    \begin{multline}
        \label{eq:annex-interm-2f1}
       _2F_1(\alpha_++\alpha_-,\alpha_-;1+\alpha_-,-\Lambda)
       =
       \frac{\Gamma(\alpha_+)\Gamma(1+\alpha_-)}{\Gamma(\alpha_++\alpha_-)(1+\Lambda)^{\alpha_-}}
       ~_2F_1(\alpha_-,1-\alpha_+;1-\alpha_+;(1+\Lambda)^{-1})\\
       +\frac{\Gamma(-\alpha_+)\Gamma(1+\alpha_-)}{\Gamma(\alpha_-)\Gamma(1-\alpha_+)(1+\Lambda)^{\alpha_-+\alpha_+}}
       ~_2F_1(\alpha_++\alpha_-,1;1+\alpha_+;(1+\Lambda)^{-1}).
    \end{multline}}
    Substituting the identity:
    \begin{equation}
        _2F_1(\alpha_-,1-\alpha_+;1-\alpha_+;(1+\Lambda)^{-1})
        =
        ~_1F_0(\alpha_-,(1+\Lambda)^{-1})
        = \left(\frac{\Lambda}{1+\Lambda}\right)^{-\alpha_-}
    \end{equation}
    in \ref{eq:annex-interm-2f1} yields:
    \begin{multline}
       \frac{\Lambda^{\alpha_-}\Gamma(\alpha_+ + \alpha_-)}{\Gamma(1+\alpha_-)\Gamma(\alpha_+)}~_2F_1(\alpha_++\alpha_-,\alpha_-;1+\alpha_-,-\Lambda)
       =
       1\\
       +\frac{\Lambda^{\alpha_-}\Gamma(-\alpha_+)\Gamma(\alpha_++\alpha_-)}{\Gamma(\alpha_-)\Gamma(1-\alpha_+)\Gamma(\alpha_+)(1+\Lambda)^{\alpha_-+\alpha_+}}
       ~_2F_1(\alpha_++\alpha_-,1;1+\alpha_+;(1+\Lambda)^{-1}).
    \end{multline}
    Noticing that $\Gamma(-\alpha_+)/\Gamma(1-\alpha_+) = -\alpha_+$ and $\alpha_+\Gamma(\alpha_+) = \Gamma(1+\alpha_+)$, we have:
    \begin{multline}
       \frac{\Lambda^{\alpha_-}\Gamma(\alpha_+ + \alpha_-)}{\Gamma(1+\alpha_-)\Gamma(\alpha_+)}~_2F_1(\alpha_++\alpha_-,\alpha_-;1+\alpha_-,-\Lambda)
       =
       1\\
       -\frac{\Lambda^{\alpha_-}\Gamma(\alpha_++\alpha_-)}{\Gamma(\alpha_-)\Gamma(1+\alpha_+)(1+\Lambda)^{\alpha_-+\alpha_+}}
       ~_2F_1(\alpha_++\alpha_-,1;1+\alpha_+;(1+\Lambda)^{-1}).
    \end{multline}
    With the following relation:
    \begin{equation}
        \frac{\Lambda^{\alpha_-}}{(1+\Lambda)^{\alpha_++\alpha_-}}
        = \frac{\lambda_-^{\alpha_-}\lambda_+^{\alpha_+}}{(\lambda_++\lambda_-)^{\alpha_++\alpha_-}},
    \end{equation}
    the desired result follows.

\end{document}